\tikzset{>=stealth}
\definecolor{darkred}{RGB}{200, 0, 0}
\definecolor{darkblue}{RGB}{0, 0, 180}
\definecolor{lightgray}{RGB}{230, 230, 230}
\definecolor{darkgray}{RGB}{210, 210, 210}
\definecolor{verydarkgray}{RGB}{150, 150, 150}
\definecolor{mycolour1}{HTML}{1717FF}
\definecolor{mycolour2}{HTML}{009400}
\definecolor{mycolour3}{HTML}{C52000}
\def\amsbb{\use@mathgroup \M@U \symAMSb}
\renewcommand\paragraph{
   \@startsection{paragraph}{4}{0mm}
      {-\baselineskip}
      {.1\baselineskip}
      {\normalfont\normalsize\bfseries}}
\newcommand{\paperA}
{
\begin{itemize}
\item \textbf{Secure bit commitment from relativistic constraints} \arxiv{1206.1740}\\
J.~Kaniewski, M.~Tomamichel, E.~H\"{a}nggi and S.~Wehner\\
\textsl{IEEE Transactions on Information Theory} \textbf{59}, 7 (2013).\\
(presented at \textsl{QCrypt '12})
\end{itemize}
}
\newcommand{\paperB}
{
\begin{itemize}
\item \textbf{Experimental bit commitment based on quantum communication and special relativity} \arxiv{1306.4801}\\
T. Lunghi, J. Kaniewski, F. Bussi\`{e}res, R. Houlmann, M. Tomamichel, A. Kent, N. Gisin, S. Wehner and H. Zbinden\\
\textsl{Physical Review Letters} \textbf{111}, 180504 (2013).\\
(presented at \textsl{QCrypt '13})
\end{itemize}
}
\newcommand{\paperC}
{
\begin{itemize}
\item \textbf{Practical relativistic bit commitment} \arxiv{1411.4917}\\
T. Lunghi, J. Kaniewski, F. Bussi\`{e}res, R. Houlmann, M. Tomamichel, S. Wehner and H. Zbinden\\
\textsl{Physical Review Letters} \textbf{115}, 030502 (2015).\\
(presented at \textsl{QCrypt '14})
\end{itemize}
}
\newcommand{\timeaxis}[1]
{
	\draw [->] (0, -0.5) to (0, #1);
	\draw [-, thick] (-0.1, 0) to (0.1, 0);
	\node[left, color=mycolour3] at (-0.1, 0) {$t = 0$};
	\node[left] at (-0.1, #1) {$t$};
}
\newcommand{\arxiv}[1]{[\texttt{arXiv:\href{http://arxiv.org/abs/#1}{#1}}]}
\newcommand{\horrule}{\rule{\linewidth}{0.5mm}}
\newcommand{\alice}[1]{Alice$_{#1}$}
\newcommand{\bob}[1]{Bob$_{#1}$}
\newcommand{\prover}[1]{Prover$_{#1}$}
\newcommand{\cval}{d}
\newcommand{\nbox}[2][8]{\hspace{#1pt} \mbox{#2} \hspace{#1pt}}
\newcommand{\chshn}{\textnormal{CHSH}_{n}}
\newcommand{\bs}[1]{\{0, 1\}^{#1}}
\newcommand{\bsr}[1]{\in \{0, 1\}^{#1}}
\renewcommand{\bsr}[1]{}
\newtheoremstyle{mytheoremstyle}	
		{\topsep}											
		{\topsep}											
		{\sl}													
		{}														
		{\scshape}										
		{.}														
		{.5em}												
		{}														
\theoremstyle{mytheoremstyle}
\newtheorem{thm}{Theorem}[chapter]
\newtheorem{df}{Definition}[chapter]
\newtheorem{lem}{Lemma}[chapter]
\newtheorem{prop}{Proposition}[chapter]
\newtheorem{obs}{Observation}[chapter]
\def \diracspacing {0.7pt}
\newcommand{\bra}[1]{\langle #1 \hspace{\diracspacing} |} 
\newcommand{\ket}[1]{| \hspace{\diracspacing} #1 \rangle} 
\newcommand{\braket}[2]{\langle #1 \hspace{\diracspacing} | \hspace{\diracspacing} #2 \rangle} 
\newcommand{\braketq}[1]{\braket{#1}{#1}} 
\newcommand{\ketbra}[2]{| \hspace{\diracspacing} #1 \rangle \langle #2 \hspace{\diracspacing} |} 
\newcommand{\ketbraq}[1]{\ketbra{#1}{#1}} 
\newcommand{\bramatket}[3]{\langle #1 \hspace{\diracspacing} | #2 | \hspace{\diracspacing} #3 \rangle} 
\newcommand{\bramatketq}[2]{\bramatket{#1}{#2}{#1}} 
\DeclareMathOperator{\tr}{tr}
\DeclareMathOperator{\id}{id}
\DeclareMathOperator{\wham}{w_{H}}
\DeclareMathOperator{\dham}{d_{H}}
\DeclareMathOperator{\err}{err}
\newcommand{\norm}[2][]{#1| \! #1| #2 #1| \! #1|}
\newcommand{\abs}[2][]{#1| #2 #1|}
\newcommand{\ave}[1]{\langle #1 \rangle}
\newcommand{\tran}[0]{^\textnormal{\tiny{T}}}
\newcommand{\pwin}{p_{\textnormal{win}}}
\newcommand{\cG}{\mathcal{G}}
\newcommand{\cH}{\mathcal{H}}
\newcommand{\cL}{\mathcal{L}}
\newcommand{\cM}{\mathcal{M}}
\newcommand{\cP}{\mathcal{P}}
\newcommand{\cS}{\mathcal{S}}
\newcommand{\cT}{\mathcal{T}}
\newcommand{\cX}{\mathcal{X}}
\newcommand{\cY}{\mathcal{Y}}
\newcommand{\cZ}{\mathcal{Z}}
\newcommand{\sH}{\mathscr{H}}
\newcommand{\hordots}[4]
{
	\draw [fill, color=#4] (#1 + 0.5, #2) circle [radius=#3];
	\draw [fill, color=#4] (#1, #2) circle [radius=#3];
	\draw [fill, color=#4] (#1 - 0.5, #2) circle [radius=#3];
}
\renewcommand{\th}
{
^{\textnormal{th}}
}
\newcommand{\phases}[1]
{
\draw[dashed, very thick, red] (3, 1.2) -- (3, #1);
\node[red] at (3, 1.4) {commitment point};
\draw[dashed, very thick, red] (6.2, 1.2) -- (6.2, #1);
\node[red] at (6.2, #1 - 0.2) {opening point};
\draw[fill=white] (0, 0) rectangle (2.8, 0.8);
\node (commit) at (1.4, 0.4) {1. commit};
\draw[fill=white] (3.2, 0) rectangle (6, 0.8);
\node (wait) at (4.6, 0.4) {2. sustain};
\draw[fill=white] (6.4, 0) rectangle (9.2, 0.8);
\node (open) at (7.8, 0.4) {3. open};
\draw[fill=white] (9.6, 0) rectangle (12.4, 0.8);
\node (verify) at (11, 0.4) {4. verify};
}
\newenvironment{prot}[2]
{
\begin{framed}
\noindent \textbf{Protocol~#1:}\ {\texttt{#2}}\\
}
{
\end{framed}
}
\newcommand{\intersectM}
{
\ifbackreporting
\cap \cM
\fi
}
\newcommand{\tmoprot}
{
\ifbackreporting
	\begin{prot}{8}{Bit commitment by transmitting measurement outcomes with backreporting}
	\label{prot:tmo-backreporting}
\else
	\begin{prot}{7}{Bit commitment by transmitting measurement outcomes}
	\label{prot:tmo}
\fi
\begin{enumerate}
\item (commit) At $t = 0$, \bob{0} chooses $x, \theta \in \{0, 1\}^{n}$ uniformly at random, creates $\ket{x^{\theta}}$ and sends it to \alice{0}. \alice{0} measures all the incoming qubits in the same basis (computational if $\cval = 0$ and Hadamard if $\cval = 1$)
\ifbackreporting
. The rounds in which a click was observed form $\cM$ and $y \in \{0, 1\}^{m}$ is the string of outcomes. \alice{0} announces $\cM$ to \bob{0}. \bob{0} continues with the protocol only if $m \geq \gamma n$.
\else
to produce $y \in \{0, 1\}^{n}$ (the string of measurement outcomes), which she then sends to \alice{1} and \alice{2}.
\fi
\item (open) At $t = 1$, \alice{1} and \alice{2} simultaneously send $d$ and $y$ to \bob{1} and \bob{2}, respectively.
\item (verify) \bob{1} and \bob{2} pass all the information to \bob{0}, who verifies that:
\begin{itemize}
\item \alice{1} and \alice{2} have attempted to unveil the same value
\item \alice{1} and \alice{2} have provided exactly the same string $y$
\item the string $y$ is consistent with the BB84 states initially prepared by \bob{0} up to the error threshold $\delta$
\begin{align*}
\dham( x_{S\intersectM}, y_{S\intersectM} ) \leq \delta &\nbox{for} d = 0, \hspace{1.5cm}\\
\dham( x_{T\intersectM}, y_{T\intersectM} ) \leq \delta &\nbox{for} d = 1.
\end{align*}
\end{itemize}
If all three conditions are satisfied, \bob{0} accepts the commitment.
\end{enumerate}
\end{prot}
}
\newcommand{\secretsharingnc}
{
\begin{prot}{2}{Bit commitment from secret sharing}
\label{prot:secret-sharing-nc}
\begin{enumerate}
\item (commit) Alice generates a random bit $a \in \{0, 1\}$, sends $x_{1} = d \oplus a$ to \bob{1} and $x_{2} = a$ to \bob{2}.
\item (open and verify) \bob{1} and \bob{2} get together and compute the commitment as $d = x_{1} \oplus x_{2}$.
\end{enumerate}
\end{prot}
}
\newcommand{\secretsharingrel}
{
\begin{prot}{5}{Bit commitment from secret sharing (relativistic)}
\label{prot:secret-sharing-rel}
\begin{enumerate}
\item (commit) At $t = 0$, \alice{1} sends $x_{1} = d \oplus a$ to \bob{1} and \alice{2} sends $x_{2} = a$ to \bob{2}. \bob{1} and \bob{2} immediately send $x_{1}$ and $x_{2}$ to \bob{0}.
\item (open and verify) At $t = 1$, \bob{0} receives $x_{1}$ and $x_{2}$ and computes the commitment as $d = x_{1} \oplus x_{2}$.
\end{enumerate}
\end{prot}
}
\newcommand{\sBGKWnc}
{
\begin{prot}{4}{Simplified-BGKW}
\begin{enumerate}
\item (commit) Bob sends $y_{1} = b$ to \alice{1} and she replies with $x_{1} = d \cdot y_{1} \oplus a$.
\item (open) \alice{2} reveals $x_{2} = a$ to Bob.
\item (verify) Bob verifies that $x_{1} \oplus x_{2} = d \cdot b$.
\end{enumerate}
\end{prot}
}
\newcommand{\sBGKWrel}[2]
{
\begin{prot}{6}{Simplified-BGKW (relativistic)}
\begin{enumerate}
\item (commit) At $t = 0$, \bob{1} sends $y_{1} = b$ to \alice{1} and she replies with $x_{1} = d \cdot y_{1} \oplus a$. \bob{1} immediately sends $x_{1}$ to \bob{2}.
\item (open) At $#1$, \alice{2} reveals $x_{2} = a$ to \bob{2}.
\item (verify) At $t #2$, \bob{2} receives $x_{1}$ and verifies that $x_{1} \oplus x_{2} = d \cdot b$.
\end{enumerate}
\end{prot}
}
\newcommand{\distributedOT}
{
\begin{prot}{1}{Distributed oblivious transfer}
\label{prot:distributed-OT}
\begin{enumerate}
\item (prepare) Bob generates an $n$-bit string $r \in \bs{n}$ uniformly at random and sends $(u_{0}, u_{1}) = (m_{0} \oplus r, m_{1} \oplus r )$ to \bob{1} and $(v_{0}, v_{1}) = (r, m_{0} \oplus m_{1} \oplus r )$ to \bob{2}.
\item (execute) Alice chooses a random bit $c \in \{0, 1\}$ and requests $u_{c}$ from \bob{1}. To retrieve $m_{d}$ she requests $v_{d \oplus c}$ from \bob{2} and computes the message as $m_{d} = u_{c} \oplus v_{d \oplus c}$.
\end{enumerate}
\end{prot}
}
\renewcommand{\arraystretch}{1.3}
\newif\ifbackreporting
\begin{document}
\begin{titlepage} 
\begin{center}
\horrule \\[0.8cm] 
{\huge \bfseries Relativistic quantum cryptography}\\[0.25cm] 
\horrule \\[1.5cm] 

\vskip 3cm
\textbf{J\k{E}DRZEJ KANIEWSKI}\\
\textit{(MMath, University of Cambridge)}
\vskip 3cm

\large \textit{A thesis submitted in fulfilment of the requirements\\ for the degree of Doctor of Philosophy}\\[0.3cm] 
\textit{in the}\\[0.4cm]
Centre for Quantum Technologies\\ National University of Singapore\\[1cm] 
\vskip 2cm
\begin{figure}[h!]
\centering
\includegraphics[width = 0.6 \columnwidth]{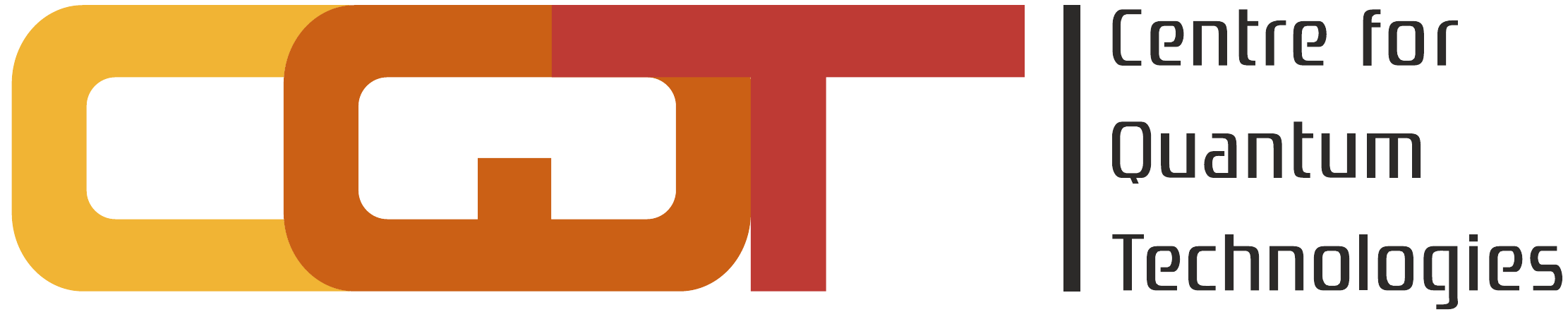}
\end{figure}
{\large 2015} 
\vfill
\end{center}
\end{titlepage}
\frontmatter
\chapter*{Declaration}

I hereby declare that this thesis is my original work and has been written by me in its entirety. I have duly acknowledged all the sources of information which have been used in the thesis.\\

\noindent This thesis has also not been submitted for any degree in any university previously.

\vspace{0.4cm}
\begin{figure}[!h]
\centering
\includegraphics{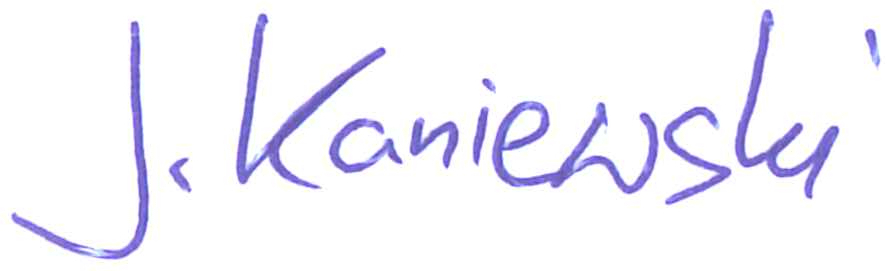}
\vspace{-1.75cm}
\end{figure}

\begin{center}
\rule{6cm}{0.5pt} \\
{ \bf J\k{e}drzej Kaniewski}
\vskip 2cm
9 September 2015
\end{center}
\chapter*{Acknowledgements}
I would like to thank my supervisor, Stephanie Wehner, for the opportunity to conduct a PhD in quantum information. I am grateful for her time, effort and resources invested in my education. Working with her and being part of her active and diverse research group made the last four years a great learning experience.

The fact that I was even able to apply for PhD positions is largely thanks to my brilliant and inspiring undergraduate supervisor, mentor and friend, Dr Peter Wothers MBE. I am particularly grateful for his supportive attitude when I decided to dedicate myself to quantum information. I am grateful to St.~Catharine's College for a wonderful university experience and several long-lasting friendships.

I would like to thank my collaborators, F{\'e}lix Bussi{\`e}res, Patrick J.~Coles, Serge Fehr, Nicolas Gisin, Esther H\"{a}nggi, Raphael Houlmann, Adrian Kent, Troy Lee, Tommaso Lunghi, Atul Mantri, Nathan McMahon, Gerard Milburn, Corsin Pfister, Robin Schmucker, Marco Tomamichel, Ronald de Wolf and Hugo Zbinden, who made research enjoyable and from whom I have learnt a lot.

I am also indebted to Corsin Pfister and Le Phuc Thinh, who have read a preliminary version of this thesis, and Tommaso Lunghi and Laura Man\v{c}inska, who have given comments on parts of it.

Special thanks go to Evon Tan for being the omnipresent good spirit of CQT. Her incredible problem-solving skills allowed me to focus on research and contributed greatly to the scientific output of this thesis.

I would like to thank Valerio Scarani for being approachable and always happy to talk about various aspects of quantum information and the scientific world in general.

I am grateful to my examiners: Anne Broadbent, Marcin Paw{\l}owski and Miklos Santha for the careful reading of this thesis and providing stimulating feedback. I would like to thank Alexandre Roulet, Jamie Sikora, Marco Tomamichel and Marek Wajs for useful comments on the defence presentation.

Dzi\k{e}kuj\k{e} Markowi Wajsowi za nieocenion\k{a} pomoc przy drukowaniu i sk{\l}adaniu doktoratu.

Arturowi Ekertowi chcia{\l}bym podzi\k{e}kowa\'{c} za czas, wsparcie i konkretne wskaz{\'o}wki w chwilach zw\k{a}tpienia.

Cho{\'c} to ju\.z par\k{e} lat chcia{\l}bym r{\'o}wnie{\.z} gor\k{a}co podzi\k{e}kowa{\'c} Krzysztofowi Ku{\'s}mierczykowi, Annie Mazurkiewicz i Bognie Luba{\'n}skiej za czas i wysi{\l}ek w{\l}o{\.z}ony w moj\k{a} edukacj\k{e} oraz za bycie {\'z}r{\'o}d{\l}em motywacji i inspiracji. Wszystko to, co uda{\l}o mi si\k{e} \mbox{osi\k{a}gn\k{a}{\'c}}, jest oparte na solidnych licealnych fundementach i bez ich wk{\l}adu nie by{\l}oby mo{\.z}liwe. Chc\k{e} tak{\.z}e podzi\k{e}kowa{\'c} Poniat{\'o}wce za niezapomniane trzy lata i wiele przyja{\'z}ni, kt{\'o}re trwaj\k{a} do dzisiaj.

Jackowi Jemielitemu chcia{\l}bym podzi\k{e}kowa{\'c} za pierwsze spotkanie z nauk\k{a} z prawdziwego zdarzenia, niespotykan\k{a} wytrwa{\l}o{\'s}{\'c} i cierpliwo{\'s}{\'c} a przede wszystkim za unikalne na skal\k{e} {\'s}wiatow\k{a} poczucie humoru, kt{\'o}rego cz\k{e}sto mi brakuje.

Doktorat dedykuj\k{e} w ca{\l}o{\'s}ci Mamie, Tacie, Siostrze i Bratu, bez wsparcia kt{\'o}rych to prze{\l}omowe dzie{\l}o nigdy by nie powsta{\l}o.
\chapter*{Abstract}
Special relativity states that information cannot travel faster than the speed of light, which means that communication between agents occupying distinct locations incurs some minimal delay. Alternatively, we can see it as temporary communication constraints between distinct agents and such constraints turn out to be useful for cryptographic purposes. In relativistic cryptography we consider protocols in which interactions occur at distinct locations at well-defined times and we investigate why such a setting allows to implement primitives which would not be possible otherwise.

Relativistic cryptography is closely related to non-communicating models, which have been extensively studied in theoretical computer science. Therefore, we start by discussing non-communicating models and its applications in the context of interactive proofs and cryptography. We find which non-communicating models might be useful for the purpose of bit commitment, propose suitable bit commitment protocols and investigate their limitations. We explain how some non-communicating models can be justified by special relativity and study what consequences such a translation brings about. In particular, we present a framework for analysing security of multiround relativistic protocols. We show that while the analysis of classical protocols against classical adversaries is tractable, the case of quantum protocols or quantum adversaries in a classical protocol constitutes a significantly harder task.

The second part of the thesis is dedicated to analysing specific protocols. We start by considering a recently proposed two-round quantum bit commitment protocol. We start by proving security under the assumption that idealised devices (single-photon source, perfect detectors) are available. Then, we propose a fault-tolerant variant of the protocol which can be implemented using realistic devices (weak-coherent source, noisy and inefficient detectors) and present a security analysis which takes into account losses, errors, multiphoton pulses, etc. We also report on an experimental implementation performed in collaboration with an experimental group at the University of Geneva.

In the last part we focus on classical schemes. We start by analysing a known two-round classical protocol and we show that successful cheating is equivalent to winning a certain non-local game. This is interesting as it demonstrates that even if the protocol is entirely classical, it might be advantageous for the adversary to use quantum systems. We also propose a new, multiround classical bit commitment protocol and prove its security against classical adversaries. The advantage of the multiround protocol is that it allows us to increase the commitment time without changing the locations of the agents. This demonstrates that in the classical world an arbitrary long commitment can be achieved even if the agents are restricted to occupy a finite region of space. Moreover, the protocol is easy to implement and we discuss an experiment performed in collaboration with the Geneva group.

We conclude with a brief summary of the current state of knowledge on relativistic cryptography and some interesting open questions that might lead to a better understanding of the exact power of relativistic models.
\chapter*{List of publications}
This thesis is based on three publications.

\noindent Chapters 3 and 4 are based on
\paperA
\noindent Chapter 5 is based on
\paperB
\noindent Chapter 6 is based on
\paperC
During his graduate studies the author has also contributed to the following publications.
\begin{enumerate}
\item \textbf{Query complexity in expectation} \arxiv{1411.7280}\\
J.~Kaniewski, T.~Lee and R.~de Wolf\\
\textsl{Automata, Languages, and Programming: Proceedings of ICALP '15},\\
\textsl{Lecture Notes in Computer Science} \textbf{9134} (2015).
\item \textbf{Equivalence of wave-particle duality to entropic uncertainty} \arxiv{1403.4687}\\
P.~J.~Coles,	J.~Kaniewski and S.~Wehner\\
\textsl{Nature Communications} \textbf{5}, 5814 (2014).\\
(presented at \textsl{AQIS '14})
\item \textbf{Entropic uncertainty from effective anticommutators} \arxiv{1402.5722}\\
J.~Kaniewski, M.~Tomamichel and S.~Wehner\\
\textsl{Physical Review A} \textbf{90}, 012332 (2014).\\
(presented at \textsl{AQIS '14} and \textsl{QCrypt '14})
\item \textbf{A monogamy-of-entanglement game with applications to device-independent quantum cryptography} \arxiv{1210.4359}\\
M.~Tomamichel, S.~Fehr, J.~Kaniewski and S.~Wehner\\
\textsl{New Journal of Physics} \textbf{15}, 103002 (2013).\\
(presented at \textsl{Eurocrypt '13} and \textsl{QCrypt '13})
\end{enumerate}
\tableofcontents
\mainmatter
\chapter*{Notation and list of symbols}
\addcontentsline{toc}{chapter}{Notation and list of symbols}
\begin{center}
\def\arraystretch{0.95}
\setlength{\tabcolsep}{0.7cm}
\begin{tabular}{c | l}
\textbf{Symbol} & \textbf{Meaning}\\
\hline
$[n]$ & set of integers from $1$ to $n$\\
$\abs{\cdot}$ & cardinality of a set or modulus of a number\\
$\sH$ & a Hilbert space\\
$\dim \sH$ & dimension of $\sH$\\
$\sH^{*}$ & dual space of $\sH$\\
$\cL(\sH)$ & linear operators acting on $\sH$\\
$\cH(\sH)$ & Hermitian operators acting on $\sH$\\
$\mathbb{1}$ & identity matrix\\
$L^{*}$ & complex conjugate of $L$\\
$L\tran$ & transpose of $L$ (with respect to the standard basis)\\
$L^{\dagger}$ & Hermitian conjugate of $L$\\
$\ket{\phi}, \ket{\psi}$ & pure quantum states\\
$\rho, \sigma$ & mixed quantum states\\
$\ket{\Psi_{d}}$ & maximally entangled state of dimension $d$\\
$H$ & Hadamard matrix\\
$\norm{\cdot}_{p}$ & Schatten $p$-norm\\
$\norm{\cdot}$ & Schatten $\infty$-norm\\
$\tr$ & trace\\
$\tr_{A}$ & partial trace over $A$\\
$\Phi$ & quantum channel\\
$\id$ & identity channel\\
$\wham(\cdot)$ & Hamming weight\\
$\dham(\cdot)$ & Hamming distance\\
$\oplus$ & exclusive-OR (\texttt{XOR})\\
$*$ & finite-field multiplication\\
$\Pr[\cdot]$ & probability\\
$\ave{\cdot, \cdot}$ & inner product\\
$\cX, \cY$ & finite alphabets\\
$\amsbb{F}_{q}$ & finite field of order $q$\\
$\cP_{k}$ & $k\th$ player (in a multiplayer game)\\
\end{tabular}
\end{center}
\chapter{Introduction}
Quantum cryptography lies at the intersection of physics and computer science. It brings together different communities and makes for a lively and exciting environment. It demonstrates that the fundamental principles of quantum physics can be cast and studied using the operational approach of cryptography. Besides, thanks to recent technological advances, practical applications are just round the corner.

Due to the interdisciplinary nature of quantum cryptography the relevant background knowledge spans multiple fields, which makes it particularly difficult to provide an introduction which would be both complete and concise. We have, therefore, chosen to focus on the topics which are directly related to quantum cryptography and skip over the less relevant areas.

This chapter starts with a short introduction to \emph{cryptography}, which is the study of exchanging and processing information in a secure fashion. We focus on \emph{two-party} (or \emph{mistrustful}) cryptography, whose goal is to protect the \emph{privacy} of an honest party interacting with potentially dishonest partners. Then, we introduce \emph{quantum information theory}, which studies how quantum systems can be used to store and process information. We discuss the main features that distinguish it from the classical information theory and briefly describe the early history of the field. The next part of this chapter brings the two topics together under the name of \emph{quantum cryptography}. We give a brief account of its early days, again, with a particular focus on two-party cryptography. We finish by giving a brief outline of this thesis.
\section{Cryptography}
\label{sec:cryptography}
Cryptography has been around ever since rulers of ancient tribes realised the need to send secret (or private) messages. Ideally, such messages should reveal no information if intercepted by an unauthorised party. The solution to this problem is known as a \index{cipher} \emph{cipher}, which is simply a procedure for converting a secret message (called the \emph{plaintext}) into another message (called the \emph{ciphertext}), which should be intelligible to a friend (who knows the particular cipher we are using) but should give no information to an enemy. The first confirmed accounts of simple ciphers come from ancient Greece and Rome, for example Julius Caesar used a simple shift cipher (now also known as a Caesar cipher) to ensure privacy of his correspondence. Until modern times designing practical (i.e.~easy to implement and difficult to break) ciphers was essentially the only branch of cryptography. One such cipher known as the \emph{one-time pad} \index{one-time pad} was invented by Gilbert S.~Vernam and Joseph O.~Mauborgne in 1917.\footnote{Note, however, that ideas that the one-time pad hinges on appeared as early as 1882 in a book by Frank Miller. For details consult an interesting survey on the state of cryptography at the turn of the century by Bellovin \cite{bellovin11}.} While the one-time pad guarantees (provably) secure communication it requires the two parties to share a random string of bits, known as a \emph{key}, which is as long as the message they want to send. This quickly becomes impractical if the parties want to exchange large amounts of data.

A report presented by Claude Shannon in 1945 marks the birth of modern cryptography \cite{shannon49}.\footnote{This work, presented in 1945 as a classified report at Bell Telephone Labs, was declassified and published in 1949.} Shannon proposed a formal definition of a (perfectly) secure cipher and proved that one-time pad satisfies such a stringent requirement. Moreover, he proved that any cipher that guarantees perfect security requires the key to be as long as the message (which essentially means that the one-time pad cannot be improved). But the contributions of this work go well beyond encryption and the analysis of one-time pad, as it was the first time that cryptography was phrased in the rigorous language of mathematics. This put cryptography on equal footing with other established sciences and set the stage for information theory (discovered by Shannon a couple of years later).

Nowadays cryptography is a mature field within which hundreds of \emph{cryptographic tasks} (or \emph{primitives}\index{cryptographic primitive}) have been defined and studied (and encryption, while obviously very important, is just one of them). Except for purely practical reasons for studying these tasks there is also a deeper motivation. Certain questions in cryptography (e.g.~finding sufficient assumptions to perform a given task or proving impossibility results) give us valuable and operational insight into the underlying information theory. While classical information theory is relatively well understood, its quantum counterpart is not. That is why studying quantum cryptography is an important pursuit and contributes towards our understanding of the quantum world we (probably) live in.

In this thesis we only consider a branch of cryptography known as \emph{two-party} \index{two-party cryptography} or \emph{mistrustful} cryptography, in which two parties, usually referred to as Alice and Bob, want to perform a certain task together but since they do not fully trust each other they want to minimise the amount of information revealed during the protocol. A simple example of such a scenario is the \emph{millionaires' problem} \index{millionaires' problem} introduced by Andrew Yao \cite{yao82}, in which two millionaires want to find out who is richer without revealing their actual wealth. This is certainly an interesting problem and, in fact, one that we often face in our everyday lives. Below we present and motivate some other natural two-party tasks.

\begin{itemize}
\item \textbf{Example 1:} Alice uses an online movie service called Bob, which charges separately for every downloaded movie. Alice has paid for one movie and wants to download it but being paranoid about privacy she is reluctant to reveal her choice to Bob. On the other hand, Bob wants to make sure that Alice only downloads one movie (and not more) so he is not keen on giving her access to the entire database. This problem, called \emph{oblivious transfer}\index{oblivious transfer}\footnote{Oblivious transfer comes in multiple flavours and the one described above is called $1$-out-of-$N$ oblivious transfer, where $N \geq 2$ is the total number of movies offered by Bob. Since we are only interested in fundamental possibility or impossibility results, studying the case of $N = 2$ is sufficient (it is known how to interconvert these primitives including even more exotic variants like Rabin oblivious transfer \cite{crepeau88}).}, turns out to be a convenient building block for two-party cryptography. In fact, it can be used to construct any other two-party primitive \cite{kilian88}.
\item \textbf{Example 2:} Alice has supernatural powers that allow her to predict the future, for example the results of tomorrow's draw of the national lottery. She wants to impress Bob (she likes to be admired) but she does not want him to get rich (she knows that money does not bring happiness). Hence, the goal is to \emph{commit} to a message without actually \emph{revealing} it until some later time. Such primitives are known as \emph{commitment schemes} \index{commitment scheme} \cite{blum81, brassard88}\footnote{Blum \cite{blum81} only implicitly mentions commitment schemes while Brassard, Chaum and Cr\'{e}peau \cite{brassard88} define them explicitly. See an encyclopedia entry on commitment schemes for more details \cite{crepeau11a}.} and the simplest one, in which the committed message is just one bit, is called \emph{bit commitment} \index{bit commitment} and constitutes one of the main topics of this thesis.
\item \textbf{Example 3:} Alice is a quantum hacker and throughout the years she has exposed dozens of improperly formulated security proofs and misguided calculations. Having realised the damage done to the quantum community she has contacted a law enforcement agency represented by Bob to negotiate turning herself in. Alice and Bob want to schedule a secret meeting but for obvious security reasons they want to make sure that the location is chosen in a truly random fashion. In other words, Alice and Bob want to agree on a random choice, which neither of them can bias (or predict it in advance). This primitive known as \emph{coin tossing} (or \emph{coin flipping}) \index{coin tossing} was introduced by Blum \cite{blum81}.
\end{itemize}

All these tasks produce conflicting interests between Alice and Bob. It is clear that security for either party can be ensured at the cost of leaving the other party completely unprotected. In case of oblivious transfer, for example, Alice could give up her privacy and simply announce which movie she wants to watch. Alternatively, Bob could provide Alice with the entire database, hoping that she will not abuse his trust.

The goal of two-party cryptography is to first come up with the right mathematical definition of these primitives and then find in what circumstances and under what assumptions they can (or cannot) be implemented. It is also interesting to study \emph{reductions} between different primitives, i.e.~how to use one primitive to implement another one, which leads to a resource theory for cryptography. For example, oblivious transfer can be used to implement commitment schemes because choosing a particular message can be interpreted as committing to its label. Commitment schemes, on the other hand, can be used to generate trusted randomness. For example, in order to generate one trusted bit we use a commitment scheme with two possible values (such a primitive is known as \index{bit commitment} \emph{bit commitment}). Alice commits to a bit $a$, then Bob announces bit $b$ and finally Alice reveals $a$ and the outcome of the coin toss is declared to be $a \oplus b$. As long as at least one of the parties is honest the resulting bit is uniform. The use of a commitment scheme ensures that $b$ does not depend on $a$ (which would allow Bob to cheat).

The holy grail of the field is the so-called \emph{information-theoretic security}\footnote{Some authors prefer to use the term \emph{unconditional security} instead. The name is motivated by the fact that the security proof assumes \emph{nothing} about the adversary. However, this has been criticised as every security model contains assumptions and no security statement can be proven without referring to them.}. There, the basic assumption is that the dishonest party is restricted by the underlying information theory, which is arguably the weakest assumption that one needs to perform security analysis. The term information-theoretic security goes back to Shannon (e.g.~see his definition of secure encryption \cite{shannon49}).

Unfortunately, it turns out that two-party primitives cannot be implemented with information-theoretic security (for both parties) unless we make some further assumptions.\footnote{While the impossibility is usually intuitive showing it formally requires some effort. As an example we present an informal argument why oblivious transfer is not possible with information-theoretic security. Consider the situation at the end of the protocol. If Bob is not able to deduce which movie Alice chose to download, it must be the case that the knowledge contained in the interaction is sufficient to reconstruct at least two different movies and nothing can stop Alice from doing that.} Below we give a brief overview of various (reasonable) assumptions that make information-theoretically secure two-party cryptography possible.
\begin{itemize}
\item \textbf{Trusted third-party :} The trivial solution is to introduce a trusted third-party, which implements the primitive for Alice and Bob. In the paranoid world, in which Alice and Bob trust nobody but themselves, this is not a satisfactory solution. Moreover, it makes all tasks trivially possible.
\item \textbf{Pre-shared resources :} Another solution that allows for two-party cryptography is to equip Alice and Bob with some shared correlations. This could be either shared randomness \cite{rivest99} or access to a source of inherent and unpredictable noise that allows to generate such correlations during the protocol \cite{crepeau97, winter03}.\footnote{Even for tasks whose only purpose is to generate trusted randomness like coin tossing this is still a non-trivial scenario because the correlations initially shared between Alice and Bob might be different from the ones we want to generate.}
\item \textbf{Technological limitations :} The standard real-world solution to the commitment task is for Alice to lock her message in a safe box, which she then hands over to Bob while keeping the key. Whenever Alice wants to reveal the message, she gives the key to Bob, who opens the safe box and reads the message. This is secure as long as Alice has no way of remotely modifying the message and Bob has no tools to open the safe box, i.e.~we must assume that they are subject to certain technological limitations. One can also assume that their ``digital technology'' is limited, e.g.~by restricting their computational power or storage capabilities, which again makes secure two-party cryptography possible. The former leads to the rich and practically important field of computational security\footnote{Computational security relies on the assumption that the adversary cannot solve a certain mathematical problem and let us mention two problematic aspects of this assumption. First of all, our belief that some mathematical problems are difficult is based mainly on the fact that many bright people have tried to solve them and failed (or maybe the successful ones prefer to keep a low profile). An efficient algorithm for solving such problems might be announced tomorrow and render all the currently used cryptographic protocols insecure. Secondly, most such schemes are vulnerable to \emph{retroactive} attacks. If a message sent today is required to remain secret for the next twenty years, the mathematical problem must resist new algorithms and improved computing power that might be developed in these twenty years. This is why we would like to ultimately drop such assumptions and find more solid foundations for our cryptographic systems.},  while the latter leads to the bounded storage model \cite{maurer91}.
\item \textbf{Communication constraints :} It is well-known that interrogating suspects one by one leads to better results than dealing with the whole group at the same time. In the cryptographic language this corresponds to forcing one (or more) parties to delegate agents, who perform certain parts of the protocol without communicating. Such setting was originally introduced in complexity theory under the name of \emph{multiprover models}\footnote{To avoid confusion we talk about \emph{multiprover} models in the context of complexity theory but use the term \emph{multiagent} in case of cryptographic protocols.} to evade certain impossibility results \cite{benor88}. These models are interesting from the cryptographic point of view but we must be explicit how they are adjusted to fit the framework of standard two-party cryptography (in which there are only two parties interacting and not more). On the bright side some types of non-communicating models can (with subtle adjustments) be implemented by requiring multiple agents to interact simultaneously at multiple locations (under the assumption that the speed of light is finite). The first explicit examples of such relativistic protocols came from Adrian Kent \cite{kent99, kent05}. This field, now known as \emph{relativistic cryptography}, constitutes the main topic of this thesis.
\end{itemize}
\section{Quantum information theory}
As mentioned before the report written by Shannon in 1945 marks the beginning of modern cryptography \cite{shannon49}. Thinking about encryption and the one-time pad led him to questions about the nature of information. Shannon's next paper investigating fundamental limits of compression and transmission \cite{shannon48} is considered the beginning of \emph{(classical) information theory}, which became an active field of research with a wide range of practical implications. While the basic framework of quantum mechanics already existed at the time (introduced in the 1920s and 30s by Bohr, Born, de Broglie, Dirac, Einstein, Heisenberg, Planck, Schr\"{o}dinger and others), rigorous connections between the two were not established until much later.

In 1935 Einstein, Podolsky and Rosen wrote a paper in which they argue that quantum mechanics cannot be considered a complete theory \cite{einstein35}. They postulate that for every measurement whose outcome is certain there exists an ``element of reality'' and deduce that due to the uncertainty principle incompatible observables cannot have simultaneous elements of reality. On the other hand, they note that in case of \index{entanglement} \emph{entangled}\footnote{The term \emph{Verschr\"{a}nkung} used ``to describe the correlations between two particles that interact and then separate, as in the Einstein-Podolsky-Rosen experiment'' first appeared in a letter written by Schr\"{o}dinger who also proposed the English translation: \emph{entanglement}.} particles the elements of reality of one system depend on the measurements performed on the other. Since they perceive the elements of reality as something objective, independent of any measurement process, they conclude that the quantum-mechanical description must be incomplete. This idea was further developed by John Bell \cite{bell64} who realised that the assumptions of Einstein, Podolsky and Rosen boil down to the existence of \index{local hidden variables} \emph{local hidden variables}, which completely determine the outcome of all possible measurements. Bell showed that any theory satisfying these requirements (like the classical theory) is subject to certain restrictions (now known as \index{Bell inequality} \emph{Bell inequalities}) and demonstrated that quantum mechanics violates such restrictions. The first explicit Bell inequality proposed by Clauser, Horne, Shimony and Holt \cite{clauser69} is a clear-cut evidence that the set of quantum correlations is strictly bigger than its classical counterpart.
\index{commitment scheme}
Realising that quantum mechanics gives rise to an information theory which is qualitatively different that the classical version, opened a new, fruitful research direction. Questions concerning storing or transmitting information using quantum systems have the appealing feature of being operational and fundamental at the same time. In the 1970s Holevo proved how many classical bits can be reliably stored in a quantum system \cite{holevo73} and Helstrom showed how to optimally distinguish two quantum states \cite{helstrom76}.

In 1980 Boris Tsirelson published a breakthrough paper, which exactly characterises the set of correlations achievable using quantum systems (in a restricted class of scenarios) \cite{tsirelson80}. Another important result concerning quantum correlations comes from Reinhard Werner, who showed that entanglement, while necessary, is not a sufficient condition for observing stronger-than-classical correlations \cite{werner89}. In 1982 Wootters and \.{Z}urek proved the celebrated \index{the no-cloning theorem} \emph{no-cloning} theorem, which states that given a single copy of an unknown quantum state, there does not exist a physical procedure that produces two (perfect) copies \cite{wootters82}. While the result itself is rather simple (including the proof), it has far-reaching consequences and shows that one should be rather careful when applying the classical intuition to quantum systems. Around the same time the first ideas to use quantum systems to perform computation came about. Richard Feynman proposed the concept of \index{quantum simulation} \emph{quantum simulation}, i.e.~using one quantum system to simulate another \cite{feynman82} while David Deutsch initiated the study of \index{quantum computation} \emph{quantum computation} by introducing the concept of a quantum Turing machine and presenting a simple problem which can be solved more efficiently using quantum systems \cite{deutsch85}. While the problem introduced by Deutsch is of little practical use, it is important as the first demonstration that quantum computing is strictly more powerful than its classical counterpart.

In 1994 Peter Shor published a paper that changed the status of quantum computation from an exercise in linear algebra to a field of potentially enormous practical impact \cite{shor94}. Shor proposed an algorithm that can efficiently factor large numbers and solve the discrete logarithm problem, which, as a consequence, allows to break all commonly used public cryptography systems. In 1996 Lov Grover published an algorithm that gives a quadratic speed-up while searching an unstructured database \cite{grover96}.\footnote{Note that the speed-up of Grover's algorithm is provable, i.e.~it is quadratically faster than \emph{any} classical algorithm. Shor's algorithm, on the other hand, is exponentially faster than \emph{the best known} classical algorithm.} These two results sparked enormous interest as they showed that quantum computation might be important from the practical point of view. Since then the task of finding new quantum algorithms and building an actual quantum computer has been a full-time job of hundreds of computer scientists, physicists and engineers.

It seems fair to say that it is the breakthroughs in quantum computation that gave the whole field a significant push and encouraged many brilliant researchers to work on quantum information. Since then the field has developed rapidly and this includes aspects closely related to quantum computation like quantum error-correction or quantum computer architecture but also areas which are not directly relevant like quantum correlations, quantum foundations, quantum Shannon theory or quantum cryptography. For more information we refer to a brief survey on early quantum information written by Bennett and Shor in 1998 \cite{bennett98} or to a book by Nielsen and Chuang \cite{nielsen00}, which became the primary textbook in the field (in particular for quantum computation). For a detailed introduction to the information-theoretic aspects (the quantum Shannon theory) see Chapter 1 of Mark M. Wilde's book \cite{wilde13}.

\section{Quantum cryptography}
\label{sec:quantum-cryptography}
In the late 1960s Stephen Wiesner wrote a paper on how to use quantum particles of spin-$\frac{1}{2}$ to produce money that is ``physically impossible to counterfeit''. The paper got rejected from a journal and ended up in Wiesner's drawer (the paper was eventually published in ACM SIGACT News \cite{wiesner83} about fifteen years later). These ideas were further pursued by Bennett, Brassard, Breidbart and Wiesner \cite{bennett83} and led to a groundbreaking paper proposing the first \index{quantum key distribution} quantum key distribution protocol, which allows two distant parties to communicate securely through an insecure quantum channel \cite{bennett84}. In 1991 Artur Ekert proposed a quantum key distribution protocol that relied on entanglement and Bell's theorem \cite{ekert91}. Another protocol (which relies on entanglement but not Bell's theorem) was presented in Ref.~\cite{bennett92a} and soon the first experimental demonstration of quantum key distribution was reported together with concrete solutions for the classical post-processing phase and explicit security estimates \cite{bennett92b}. Since then an enormous amount of progress has been made in both theoretical and practical aspects of quantum key distribution and it is well beyond the scope of this introduction to discuss it. A recent article by Ekert and Renner is an excellent account of the current state of quantum key distribution \cite{ekert14}.

Before we go into the details let us state very clearly that throughout this thesis we work under the (implicit) assumption that Alice and Bob trust their own devices. In other words, if the protocol requires Alice to generate a certain quantum state, she is capable of constructing a device that does just that and she may rest assured that the source does not accumulate information about the previous uses or leak secret data through extra degrees of freedom. While this assumption seems natural and easy to ensure in the classical world, it becomes more of a challenge in the quantum world simply because our understanding and expertise in quantum technologies are limited. These considerations gave rise to the field of \emph{device-independent} cryptography which aims to design protocols which remain secure even if executed using faulty or malicious devices. The fact that such strong security guarantees are even possible is clearly remarkable and this topic has received a lot of interest in the last couple of years. Due to a large volume of works on this topic we do not attempt to list the relevant references and point the interested reader at comprehensible and accessible lectures notes by Valerio Scarani~\cite{scarani12} as well as Sections IV.C and IV.D of a recent review on Bell nonlocality~\cite{brunner14}.

While quantum key distribution was and still remains the predominant area of research in quantum cryptography, other applications have been present from the very beginning as exemplified by Wiesner's unforgeable quantum money. The original paper of Bennett and Brassard contains a bit-commitment-based coin tossing protocol \cite{bennett84}. As pointed out by the authors the protocol is insecure if one of the parties leaves the quantum states untouched (instead of performing the prescribed measurements) but they consider it a ``merely theoretical threat'' due to the technological difficulty of implementing such a strategy. In 1991 Brassard and Cr\'{e}peau proposed a different quantum bit commitment protocol \cite{brassard91}, which does not suffer from the previous problem but is vulnerable against an adversary who can perform \emph{coherent measurements}, i.e.~joint measurements on multiple quantum particles, which, again, is considered difficult. By combining the two quantum bit commitment protocols they obtain a coin tossing protocol which can only be broken by an adversary who can \emph{both} keep entanglement \emph{and} perform coherent measurements. In the meantime a quantum protocol for oblivious transfer was proposed whose security, again, relies on the adversary being technologically limited \cite{bennett92}. In 1993 Brassard, Cr\'{e}peau, Jozsa and Langlois \cite{brassard93} proposed a new bit commitment protocol which comes with a complete security proof that does not rely on any technological assumptions. In other words, the protocol is claimed to be secure against all attacks compatible with quantum physics. In 1992 Bennett et al.~suggested how bit commitment and quantum communication can be used to construct oblivious transfer \cite{bennett92}. This construction was formalised and proven secure by Yao \cite{yao95}, who refers to it as the ``canonical construction'', which gave the optimistic impression that quantum mechanics allows for secure two-party cryptography without any extra assumptions.\footnote{This construction shows that in the quantum world bit commitment and oblivious transfer are equivalent, which is believed not to be true classically.} Unfortunately, it was later discovered that the protocol proposed in Ref.~\cite{brassard93} is insecure, which soon led to a complete impossibility result \cite{mayers97, lo97}. For a detailed account of quantum cryptography until that point please consult Refs.~\cite{brassard96, crepeau96, brassard97}.

The initial results of Mayers, Lo and Chau began a sequence of negative results. Impossibility of bit commitment immediately rules out oblivious transfer and, in fact, the same techniques can be used to rule out any one-sided two-party computation (i.e.~a primitive in which inputs from two parties produce output which is only given to one of them) \cite{lo97a}. The more complicated case of two-sided computation was first considered by Colbeck (for a restricted class of functions) \cite{colbeck07a} while the general impossibility result was proven by Buhrman, Christandl and Schaffner \cite{buhrman12}. In case of string commitment (i.e.~when we simultaneously commit to multiple bits) it is clear that the perfect primitive cannot be implemented but the situation becomes slightly more involved when it comes to imperfect primitives as the results depend on the exact security criteria used \cite{buhrman06a, buhrman08}. For more recent impossibility proofs of bit commitment see Refs.~\cite{dariano07, winkler11, chiribella13}.

While perfect quantum bit commitment is not possible, it is interesting to know what security trade-offs are permitted by quantum mechanics. In the classical case the trade-offs are trivial: in any classical protocol at least one of the parties can cheat with certainty. Preliminary results on the quantum security trade-offs were proven by Spekkens and Rudolph \cite{spekkens01}, while the optimal trade-off curve was found by Chailloux and Kerenidis \cite{chailloux11}. Interestingly enough, the achievability is argued through a construction that uses a complicated and rather poorly understood weak coin flipping protocol by Mochon \cite{mochon07}.

Another direction (similar to what was done previously in the classical world) is to identify the minimal assumptions that would make two-party cryptography possible in the quantum world.

One solution available in the classical world is to give Alice and Bob access to some trusted randomness. The quantum generalisation of this assumption would be to give Alice and Bob access to quantum systems or some other source of stronger-than-classical correlations \cite{buhrman06, winkler11a}. Such correlations indeed allow us to implement secure bit commitment. The advantage of this assumption over the classical counterpart is that in the classical case we had to trust whoever distributed the randomness (in the original paper referred to as the \emph{trusted initialiser} \cite{rivest99}). On the other hand, stronger-than-classical correlations guarantee security regardless of where they came from.

A natural quantum extension of the bounded storage model proposed by Maurer \cite{maurer91} is the quantum bounded storage model \cite{damgard05, damgard07, schaffner10} and its generalisation to the case of noisy quantum storage \cite{wehner08a, konig12, berta14}. While storing classical information seems easy and cheap (which makes the assumption of the adversary's bounded storage not particularly convincing), reliable storage of quantum information continues to pose a significant challenge and, hence, makes for a reasonable assumption. Another technological limitation that leads to secure bit commitment is the restriction on the class of quantum measurements that the dishonest party can perform \cite{salvail98}.

The proposal to combine quantum mechanics with relativistic\footnote{Throughout this thesis the term relativity always refers to special relativity.} communication constraints (attributed to Louis Salvail) was already mentioned in 1996 \cite{brassard96, crepeau96}. The early papers of Kent \cite{kent99, kent05} consider security against quantum adversaries but the actual protocols are completely classical. To the best of our knowledge, the first quantum relativistic protocol was proposed by Colbeck and Kent for a certain variant of coin tossing \cite{colbeck06a}. This marks the beginning of \emph{quantum relativistic cryptography}.
\section{Outline}
The main theme of this thesis is relativistic quantum cryptography with a particular focus on commitment schemes. Chapter~\ref{chap:preliminaries} contains the necessary background in quantum information theory and cryptography.

In Chapter~\ref{chap:nc-models} we introduce non-communicating models as they originally appeared in the context of interactive proofs. We show why they are useful in cryptography and determine the exact communication constraints that might allow for secure commitment schemes. For each of these models we present a provably secure bit commitment protocol.

Chapter~\ref{chap:relativistic-protocols} introduces the framework for relativistic protocols. We start with a couple of simple examples and then present a procedure which maps a relativistic protocol onto a model with partial communication constraints. We show that at least in some scenarios the analysis of such models is tractable.

In Chapter~\ref{chap:transmitting} we focus on a particular quantum bit commitment protocol. We analyse its security by mapping it onto a simple quantum guessing game. Moreover, we adapt the original protocol to make it robust against experimental errors and we extend the security analysis appropriately. We briefly report on an implementation of the protocol done in collaboration with an experimental group at the University of Geneva.

In Chapter~\ref{chap:multiround} we propose a new, classical multiround bit commitment protocol and analyse its security against classical adversaries. The multiround protocol allows to achieve arbitrarily long commitments (at the cost of growing resources) with explicit and easily-computable security guarantees. Again, we briefly discuss an experiment performed in collaboration with the Geneva group.

Chapter~\ref{chap:conclusions} summarises the content of this thesis and outlines a couple of interesting direction for future research in quantum relativistic cryptography.
\chapter{Preliminaries}
\label{chap:preliminaries}
In this chapter we establish the notation, nomenclature and some basic concepts used throughout this thesis.
\section{Notation and miscellaneous lemmas}
\subsection{Strings of bits}
\label{sec:strings-of-bits}
Given two bits $a, b \in \{0, 1\}$ we use ``$\oplus$'' to denote their exclusive-OR (\texttt{XOR})
\begin{equation*}
a \oplus b := a + b \mod 2.	
\end{equation*}
For an $n$-bit string $x \in \bs{n}$, let $x_{k}$ be the $k\th$ bit of $x$ and the \texttt{XOR} of two strings (of equal length) is defined bitwise. The fractional Hamming weight \index{Hamming weight} of $x$ is the fraction of ones in the string
\begin{equation*}
\wham(x) = \frac{1}{n} \abs[\big]{ \{ k \in [n] : x_{k} = 1 \} },
\end{equation*}
where $\abs{\cdot}$ denotes the cardinality of the set. The fractional Hamming distance \index{Hamming distance} between $x$ and $y$ is the fraction of positions at which the two strings differ
\begin{equation*}
\dham(x, y) = \frac{1}{n} \abs[\big]{ \{ k \in [n] : x_{k} \neq y_{k} \} }.
\end{equation*}
Note that the Hamming weight can be interpreted as the distance from the string of all zeroes $0^{n}$: $\wham(x) = \dham(x, 0^{n})$. For $S \subseteq [n]$, we use $x_{S}$ to denote the substring of $x$ specified by the indices in $S$. If $d \in \{0, 1\}$ is a bit, we define
\begin{equation*}
d \cdot x =
\begin{cases}
0^{n} &\nbox{if} d = 0,\\
x &\nbox{if} d = 1.
\end{cases}
\end{equation*}
\subsection{Cauchy-Schwarz inequality for probabilities}
When dealing with probabilities we use uppercase letters to denote random variables and lowercase letters to denote values they might take, e.g.~$\Pr[X = x]$. For $j, k \in \cS$ we use $\sum_{j \neq k}$ as a shorthand notation for $\sum_{j \in \cS} \sum_{k \in \cS \setminus \{j\}}$.

\begin{lem}
\label{lem:cauchy-schwarz}
Let $X$ be a uniform random variable over $[n]$, i.e.~$\Pr[X = x] = \frac{1}{n}$ for all $x \in [n]$, and let $\{E_{j}\}_{j = 1}^{m}$ be a family of events defined on $[n]$. Let $p$ be the average probability (of these events)
\begin{equation*}
p := \frac{1}{m} \sum_{j = 1}^{m} \Pr[ E_{j} ]
\end{equation*}
and $c$ be the cumulative size of the pairwise intersections
\begin{equation*}
c := \sum_{j \neq k} \Pr[ E_{j} \wedge E_{k} ].
\end{equation*}
Then the following inequality holds
\begin{equation*}
p \leq \frac{ 1 + \sqrt{1 + 4c} }{2m}.
\end{equation*}
\end{lem}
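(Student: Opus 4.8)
The plan is to reduce everything to counting. For each point $x \in [n]$ introduce the multiplicity $N(x) := \sum_{j=1}^{m} \mathbb{1}_{E_j}(x)$, the number of events that contain $x$. Since $X$ is uniform, $\Pr[E_j] = \frac{1}{n}\abs{E_j}$ and $\Pr[E_j \wedge E_k] = \frac{1}{n}\abs{E_j \cap E_k}$, so summing over $j$ (and over ordered pairs $j \neq k$) lets me rewrite both $p$ and $c$ as moments of $N$. First I would compute
\begin{equation*}
mn\,p = \sum_{x \in [n]} N(x) =: S_1,
\end{equation*}
and then, using that the events are $0/1$-valued so $\mathbb{1}_{E_j}^2 = \mathbb{1}_{E_j}$, the identity $\sum_{j \neq k} \mathbb{1}_{E_j}(x)\mathbb{1}_{E_k}(x) = N(x)^2 - N(x)$ gives
\begin{equation*}
n\,c = \sum_{x \in [n]} \big( N(x)^2 - N(x) \big) = S_2 - S_1,
\end{equation*}
where $S_2 := \sum_{x} N(x)^2$.

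The key analytic step is the Cauchy--Schwarz inequality (which is what gives the lemma its name): pairing $N$ against the constant function $1$ on the $n$-point space $[n]$,
\begin{equation*}
S_1^2 = \Big( \sum_{x} N(x) \Big)^2 \leq n \sum_{x} N(x)^2 = n\,S_2.
\end{equation*}
Substituting $S_1 = mn\,p$ and $S_2 = S_1 + n\,c = mn\,p + n\,c$ and dividing through by $n^2$ collapses this to the single scalar inequality $m^2 p^2 \leq mp + c$, i.e.
\begin{equation*}
(mp)^2 - (mp) - c \leq 0.
\end{equation*}

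To finish I would read this as a quadratic inequality in the variable $u := mp$. Since $u^2 - u - c$ is an upward parabola with roots $\tfrac{1}{2}(1 \pm \sqrt{1+4c})$, the inequality forces $u$ to lie between the roots, and in particular $mp = u \leq \tfrac{1}{2}(1 + \sqrt{1+4c})$; dividing by $m$ yields the claimed bound.

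I do not expect a genuine obstacle here, since the argument is essentially a first- and second-moment computation capped by Cauchy--Schwarz. The only place demanding care is the bookkeeping in the second moment: correctly peeling off the diagonal $j = k$ terms (using idempotence of the indicators) to produce the $N(x)^2 - N(x)$ identity, and keeping the factors of $n$ and $m$ straight when converting between probabilities and set sizes.
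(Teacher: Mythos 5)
Your proof is correct and is essentially the paper's own argument in different clothing: the paper applies Cauchy--Schwarz to the sum of (normalised) indicator vectors paired against the uniform vector, which is exactly your pairing of the multiplicity function $N$ against the constant function $1$, and both routes split the second moment into diagonal ($mp$, via idempotence of indicators) and off-diagonal ($c$) parts to reach the same quadratic $m^2p^2 \leq mp + c$. No substantive difference.
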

\begin{proof}
Each event can be represented by a vector in $\amsbb{R}^{n}$ whose entries are labelled by integers from $[n]$. If a particular outcome belongs to the event, we set the corresponding component to $1/\sqrt{n}$ and if it does not we set it to $0$
\begin{equation*}
[ s_{j} ]_{x} =
\begin{cases}
\frac{1}{\sqrt{n}} &\nbox[6]{if} x \in E_{j},\\
0 &\nbox[6]{otherwise.}
\end{cases}
\end{equation*}
Moreover, let $n$ be the normalised, uniform vector: $[n]_{x} = 1/\sqrt{n}$ for all $x \in [n]$. It is straightforward to check that with these definitions we have
\begin{equation*}
\Pr[ E_{j} ] = \ave{s_{j}, n} = \ave{s_{j}, s_{j}} \nbox[8]{and} \Pr[ E_{j} \wedge E_{k} ] = \ave{s_{j}, s_{k}},
\end{equation*}
where $\ave{\cdot, \cdot}$ denotes the standard inner product on $\amsbb{R}^{n}$ and since the vectors are non-negative we have $\ave{s_{j}, s_{k}} \geq 0$. Since the inner product is linear we have
\begin{equation*}
p = \frac{1}{m} \sum_{j = 1}^{m} \Pr[ E_{j} ] = \frac{1}{m} \sum_{j} \ave{s_{j}, n} = \frac{1}{m} \ave{ \sum_{j} s_{j}, n},
\end{equation*}
which can be upper bounded using the Cauchy-Schwarz inequality. Since $\ave{n, n} = 1$ we have
\begin{equation*}
\ave{ \sum_{j} s_{j}, n}^{2} \leq \sum_{j k} \ave{s_{j}, s_{k}} = \sum_{j} \ave{s_{j}, s_{j}} + \sum_{j \neq k} \ave{s_{j}, s_{k}} = m p + c,
\end{equation*}
which gives the following quadratic constraint
\begin{equation*}
p^{2} \leq \frac{p}{m} + \frac{c}{m^{2}}.
\end{equation*}
Solving for $p$ gives the desired bound.
\subsection{Chernoff bound for the binomial distribution}
\begin{lem}[\cite{chernoff52}]
\label{lem:chernoff}
Let $X_{1}, X_{2}, \ldots, X_{n}$ be independent random variables taking on values 0 or 1. Let $X = \sum_{i = 1}^{n} X_{i}$ and $\mu$ be the expectation value of $X$. Then for any $\delta > 0$ the following inequality holds
\begin{equation*}
\Pr[X < (1 - \delta) \mu] < \bigg(\frac{e^{- \delta}}{(1 - \delta)^{1 - \delta}} \bigg)^{\mu} \leq \exp \bigg(- \frac{\mu \delta^{2}}{2} \bigg).
\end{equation*}
Alternatively, setting $s = (1 - \delta) \mu$ gives
\begin{equation*}
\Pr[X < s] < \exp \bigg(- \frac{1}{2} \Big( \sqrt{\mu} - \frac{s}{\sqrt{\mu}} \Big)^{2} \bigg).
\end{equation*}
\end{lem}
\section{Quantum mechanics}
Quantum mechanics despite its mysterious nature admits a relatively simple mathematical description. While it is an interesting question to ask \emph{why} quantum mechanics is as it is, instead of being more (or less) powerful (and indeed such questions constitute the main topic of quantum foundations), we take a more hands-on approach. Namely, we accept the standard textbook formulation of quantum mechanics as it is and investigate its consequences. Section \ref{sec:linear-algebra} defines the basic notions of linear algebra (and, hence, can be skipped by most readers), which will be necessary to describe the quantum formalism in Section \ref{sec:quantum-formalism}.
\subsection{Linear algebra}
\label{sec:linear-algebra}
The following section contains the bare minimum of linear algebra necessary to understand this thesis and serves primarily the purpose of establishing consistent notation and nomenclature. For a complete and detailed introduction to linear algebra we refer to the excellent textbooks by Rajendra Bhatia \cite{bhatia97, bhatia09}.

In this thesis we restrict our attention to finite-dimensional systems. Let $\sH$ be a Hilbert space of finite dimension $d = \dim \sH < \infty$ over complex numbers.
Let $\sH^{*}$ denote the dual space of $\sH$, i.e.~the space of linear functionals on $\sH$. We employ the \index{bra-ket notation} \emph{bra-ket} notation proposed by Paul Dirac \cite{dirac39}, in which elements of $\sH$ are written as \emph{kets} $\ket{\psi} \in \sH$ and each ket has an associated \emph{bra}, denoted by $\bra{\psi} \in \sH^{*}$, such that applying the linear functional $\bra{\psi}$ to an arbitrary vector $\ket{\phi}$, written as a \emph{bra-ket} $\braket{\psi}{\phi}$, corresponds exactly to evaluating the inner product between $\ket{\phi}$ and $\ket{\psi}$. A set of $d$ vectors $\{ \ket{e_{j}} \}_{j = 1}^{d}$ constitutes an orthonormal basis if the vectors are orthogonal and normalised, i.e.~$\braket{e_{j}}{e_{k}} = \delta_{jk}$, where $\delta_{jk}$ is the Kronecker delta.

Let $\cL(\sH)$ be the set of linear operators acting on $\sH$. The \emph{identity operator}, denoted by $\mathbb{1}$, is the unique operator that satisfies
\begin{equation*}
\mathbb{1} \ket{\psi} = \ket{\psi}
\end{equation*}
for all $\ket{\psi} \in \sH$. Writing a linear operator $L \in \cL(\sH)$ in a particular basis $\{ \ket{e_{j}} \}_{j = 1}^{d}$ leads to a $d \times d$ (complex) matrix whose entries equal
\begin{equation*}
L_{jk} = \bramatket{e_{j}}{L}{e_{k}},
\end{equation*}
where the expression $\bramatket{e_{j}}{L}{e_{k}}$ should be understood as $\braket{e_{j}}{ \big(L | e_{k}} \big)$. Note that while the operator and its matrix representation are not the same object (the former is basis-independent, while the latter corresponds to a particular basis) for the purpose of this thesis this distinction may be ignored and we will use the two terms interchangeably.
The \index{trace} \emph{trace} of a square matrix $L$ is the sum of its diagonal entries
\begin{equation*}
\tr L = \sum_{j} L_{jj} = \sum_{j} \bramatketq{e_{j}}{L}.
\end{equation*}
The \index{Hermitian conjugate} \emph{Hermitian conjugate} of an operator $L$, denoted by $L^{\dagger}$, is defined to satisfy
\begin{equation*}
[L^{\dagger}]_{jk} = [L_{kj}]^{*},
\end{equation*}
where $^{*}$ denotes the \index{complex conjugate} complex conjugate. An operator satisfying $L^{\dagger} = L$ is called \emph{Hermitian} (or \emph{self-adjoint}) and we denote the set of Hermitian operators acting on $\sH$ by $\cH(\sH)$. Operators satisfying $L L^{\dagger} = L^{\dagger} L = \mathbb{1}$ are called \index{unitary operator} unitary operators or unitaries.

It is easy to verify that for a Hermitian operator $H = H^{\dagger}$ we have $\bramatketq{\psi}{H} \in \amsbb{R}$ for all vectors $\ket{\psi} \in \sH$. A Hermitian operator is called \index{positive semidefinite operator} \emph{positive semidefinite} if
\begin{equation*}
\bramatketq{\psi}{H} \geq 0
\end{equation*}
for all vectors $\ket{\psi} \in \sH$, which is often written as $H \geq 0$.

Every linear operator $L \in \cL(\sH)$ admits a \index{singular value decomposition} \emph{singular value decomposition}, i.e.~it can be written in the form $L = USV$, where $U$ and $V$ are unitary operators and $S$ is a diagonal matrix of real, non-negative entries known as the \emph{singular values} of $L$. Let $s = ( s_{1}, s_{2}, \ldots, s_{d} )$ be the vector of singular values. For $p \in [1, \infty)$ the \index{Schatten norm} Schatten $p$-norm of $L$, denoted by $\norm{L}_{p}$, is defined as the vector $p$\hspace{0.75pt}-norm of $s$
\begin{equation*}
\norm{L}_{p} := \Big( \sum_{j = 1}^{d} s_{j}^{p} \Big)^{1/p}.
\end{equation*}
For the purpose of this thesis we will only need the limit $p \to \infty$ so let us define
\begin{equation*}
\norm{L} := \lim_{p \to \infty} \norm{L}_{p} = \max_{j} s_{j}.
\end{equation*}
\subsection{Quantum formalism}
\label{sec:quantum-formalism}
A pure state of a quantum system is described by a normalised vector, i.e.~$\ket{\psi} \in \sH$ such that $\braketq{\psi} = 1$. We adopt the convention that every $d$-dimensional Hilbert space $\sH$ is equipped with an orthonormal basis $\{ \ket{k} \}_{k = 0}^{d - 1}$, which we call the \index{computational basis} \emph{computational} (or \emph{standard}) basis. Writing $\ket{\psi}$ in this basis
\begin{equation*}
\ket{\psi} = \sum_{j = 0}^{d - 1} c_{j} \ket{j}
\end{equation*}
allows us to interpret it as a $d$-dimensional complex unit vector. The global phase of a state is inconsequential, i.e.~quantum mechanics tells us that vectors $\ket{\psi}$ and $e^{i \alpha} \ket{\psi}$ (for $\alpha \in \amsbb{R}$) correspond to the same physical state.
The smallest non-trivial quantum system corresponds to $d = 2$ and is called a \index{qubit} \emph{qubit} (a term coined by Schumacher and Wootters \cite{schumacher95}). The Hadamard operator is defined as
\begin{equation*}
H = \frac{1}{\sqrt{2}} \sum_{j, k \in \{0, 1\}} (-1)^{jk} \ketbra{j}{k}
\end{equation*}
or
\begin{equation*}
H = \frac{1}{\sqrt{2}} \left( \begin{array}{c c}
	1 & 1 \\
	1 & -1
\end{array} \right).
\end{equation*}
It is easy to verify that $H$ is simultaneously Hermitian ($H = H^{\dagger}$) and unitary ($H H^{\dagger} = H^{\dagger} H = H^{2} = \mathbb{1}$). Define $\ket{+} := H \ket{0}$, $\ket{-} := H \ket{1}$ and let us call $\{ \ket{+}, \ket{-} \}$ the \index{Hadamard basis} \emph{Hadamard} (or \emph{diagonal}) basis. The computational and Hadamard bases are widely used in cryptography because they are an example of \emph{mutually unbiased bases} (for $d = 2$), i.e.~they satisfy
\begin{equation*}
\abs{ \braket{0}{+} } = \abs{ \braket{0}{-} } = \abs{ \braket{1}{+} } = \abs{ \braket{1}{-} } = \frac{1}{\sqrt{2}} \bigg( = \frac{1}{\sqrt{d}} \bigg),
\end{equation*}
which captures the notion of being maximally incompatible.

A mixed quantum state on $\sH$ is a Hermitian operator, which is positive semidefinite and of unit trace. We define the set of (mixed) quantum states on $\sH$
\begin{equation*}
\cS(\sH) := \{ \rho \in \cH(\sH) : \rho \geq 0 \nbox{and} \tr \rho = 1 \}.
\end{equation*}
The operator $\rho$ describing a mixed state is called the \index{density matrix} \emph{density matrix}. Mixed states are a generalisation of pure states: an arbitrary pure state $\ket{\psi}$ can be represented as a density matrix $\rho = \ketbraq{\psi}$. Mixed states arise naturally when dealing with composite systems.

Suppose we have two systems (or registers) $A$ and $B$ described by Hilbert spaces $\sH_{A}$ and $\sH_{B}$, respectively, and we want to describe the global state of the system. What are the allowed states on $A$ and $B$ taken together? In case of pure states, quantum mechanics tells us to take the tensor product of the two Hilbert spaces, i.e.~$\ket{\psi}_{AB} \in \sH_{A} \otimes \sH_{B}$. Therefore, an arbitrary pure bipartite state can be written as
\begin{equation*}
\ket{\psi}_{AB} = \sum_{jk} c_{jk} \ket{j}_{A} \ket{k}_{B},
\end{equation*}
where $\ket{j}_{A} \ket{k}_{B}$ should be understood as $\ket{j}_{A} \otimes \ket{k}_{B}$ (the tensor product symbol is commonly omitted to avoid notational clutter). Given a bipartite system one might wonder what can be said about the \emph{marginal states} on $A$ and $B$ (similar to the concept of the marginals of a probability distribution). In particular, one would expect that if we restrict ourselves to measurements on $A$ alone then it should be possible to ``truncate'' $\ket{\psi}_{AB}$ to $A$ by disregarding any information about $B$. This intuition leads the concept of \index{reduced state} \emph{reduced states}. Let us first write the density matrix corresponding to $\ket{\psi}_{AB}$
\begin{equation*}
\rho_{AB} = \ketbraq{\psi}_{AB} = \sum_{jj' kk'} c_{jk}^{\phantom{*}} c_{j'k'}^{*} \ketbra{j}{j'}_{A} \otimes \ketbra{k}{k'}_{B}.
\end{equation*}
Given an operator acting on multiple registers we define the operation of \index{partial trace} \emph{partial trace} which ``traces out'' a particular register, e.g.~
\begin{equation*}
\tr_{B} \big( \ketbra{j}{j'}_{A} \otimes \ketbra{k}{k'}_{B} \big) = \ketbra{j}{j'}_{A} \cdot \tr_{B} \big( \ketbra{k}{k'}_{B} \big) = \ketbra{j}{j'}_{A} \cdot \delta_{kk'}.
\end{equation*}
Note that the standard trace operation corresponds to tracing all the registers. It is easy to verify that partial traces commute so we can without ambiguity write $\tr_{AB}(\cdot) = \tr_{A} \tr_{B} (\cdot) = \tr_{B} \tr_{A} (\cdot)$. Tracing out the $B$ register from the density matrix $\rho_{AB}$ gives
\begin{equation*}
\rho_{A} = \tr_{B} \rho_{AB} = \sum_{jj' k} c_{jk}^{\phantom{*}} c_{j'k}^{*} \ketbra{j}{j'}_{A},
\end{equation*}
which is easily verified to be a valid quantum state $\rho_{A} \in \cS(\sH_{A})$ and which we call \emph{the reduced state on $A$}. It is easy to verify that the knowledge of $\rho_{A}$ suffices to make all possible predictions about operations or measurements that act solely upon subsystem $A$. In cryptography reduced states are important because they allow us to quantify the amount of knowledge that a particular subsystem provides to its holder.

Once we know how to describe the state of a quantum system we would like to know how we can interact with it. To extract any information from a quantum state one needs to \index{measurement} \emph{measure} it. Note that this is one of the aspects in which quantum theory differs significantly from its classical counterpart. In the classical world the object and its (complete) description are \emph{operationally equivalent}: given the description one can construct the object and given the object one can determine (to an arbitrary precision) its description. In the quantum world a single copy of an object gives us significantly less information than its complete description as demonstrated by the no-cloning theorem \cite{wootters82}. In contrast to the classical world, every quantum system can be measured in multiple ways, which means that the measurement process must be described explicitly. A measurement\footnote{We implicitly assume that we are only interested in the classical outcome of the measurement and ignore the post-measurement state.} on a $d$-dimensional quantum state which yields outcomes from a finite alphabet $\cX$ is a collection of positive semidefinite operators $\{ F_{x} \}_{x \in \cX}$\footnote{To avoid confusion whenever describing the set of measurement operators we explicitly state the index that must be summed over to obtain identity.} that add up to ($d$-dimensional) identity
\begin{equation}
\label{eq:measurement-operators-conditions}
F_{x} \geq 0 \nbox{and} \sum_{x \in \cX} F_{x} = \mathbb{1}.
\end{equation}
Quantum mechanics is a probabilistic theory, i.e.~it only allows us to calculate \emph{probabilities} of observing different outcomes. According to \index{Born's rule} \emph{Born's rule} \cite{born26} measuring the state $\rho \in \cS( \sH )$ yields outcome $x$ with probability
\begin{equation*}
p(x) = \tr (F_{x} \rho).
\end{equation*}
It is easy to see that the condition \eqref{eq:measurement-operators-conditions} is imposed to ensure that the resulting probability distribution is non-negative and normalised \emph{for every state}. Note that such an information-theoretic formulation of the measurement process does not necessarily coincide with the notion of measuring a physical quantity, e.g.~the outcome might not be a number so one cannot talk about the expectation value or the standard deviation of the measurement.

The process of measuring a quantum state can be seen as a map that takes a quantum state and outputs a probability distribution. This naturally generalises to maps in which the output remains quantum and such maps are known as \index{quantum channel} \emph{quantum channels}. The identity channel (i.e.~the unique channel that leaves every state unaffected) is denoted by $\id$\footnote{Note that it is common in quantum information to use the same symbol for the identity channel and the identity operator since it is usually clear from the context which one is meant. To avoid confusion we prefer to use different symbols.}. Generally, a map $\Phi: \cL(\sH_{A}) \to \cL(\sH_{B})$ is a quantum channel iff:
\begin{enumerate}
\item $\Phi$ is linear, i.e.~for any $\alpha, \beta \in \amsbb{C}$ and $X, Y \in \cL(\sH_{A})$
\begin{equation*}
\Phi( \alpha X + \beta Y ) = \alpha \Phi( X ) + \beta \Phi( Y ).
\end{equation*}
\item $\Phi$ is completely positive, i.e.~for any $X_{AR} \in \cH( \sH_{A} \otimes \sH_{R} )$, where $\sH_{R}$ is an auxiliary Hilbert space of arbitrary dimension,
\begin{equation*}
X \geq 0 \implies ( \Phi_{A} \otimes \id_{R} ) (X_{AR}) \geq 0.
\end{equation*}
\item $\Phi$ is trace-preserving, i.e.~for any $X \in \cL( \sH_{A} )$
\begin{equation*}
\tr \Phi(X) = \tr X.
\end{equation*}
\end{enumerate}
These properties can be rigorously derived from the assumption that a channel is a result of a unitary evolution acting on a larger Hilbert space. On a more pragmatic level, these rules ensure that the channel is a linear map that takes quantum states on $A$ into valid quantum states on $B$. When dealing with multipartite states it might be useful to explicitly write out the input and output registers, e.g.~$\Phi_{A \to B}$.
\subsection{Remote state preparation}
\label{sec:equivalence}
A state of the form
\begin{equation}
\label{eq:cq-state}
\rho_{X B} = \sum_{x} p_{x} \ketbraq{x}_{X} \otimes \rho^{x}_{B}
\end{equation}
is called \index{classical-quantum state} \emph{classical-quantum} (cq) since the first register represents a classical random variable $X$ while the second is a general quantum system. Such states describe how a quantum system can be correlated with some classical data. One way of obtaining such a state is to sample the classical random variable $X$ and prepare subsystem $B$ in a particular state conditional on the outcome. Here, we show how to use entanglement to \emph{remotely} prepare a certain class of such states, a phenomenon also known as \emph{steering}.

Define the \index{maximally entangled state} maximally entangled state of dimension $d$ as
\begin{equation*}
\ket{\Psi_{d}}_{AB} := \frac{1}{\sqrt{d}} \sum_{j = 1}^{d} \ket{j}_{A} \ket{j}_{B}.
\end{equation*}
It is easy to verify that in this case both marginals are \emph{maximally mixed}, i.e.~proportional to the identity matrix
\begin{equation*}
\tr_{A} \ketbraq{\Psi_{d}}_{AB} = \tr_{B} \ketbraq{\Psi_{d}}_{AB} = \frac{ \mathbb{1} }{d}.
\end{equation*}
Moreover, for an arbitrary linear operator $L = \sum_{jk} L_{jk} \ketbra{j}{k}$, we have
\begin{equation*}
\tr_{A} \big[ (L \otimes \mathbb{1}) \ketbraq{\Psi_{d}} \big] = L\tran,
\end{equation*}
where $L\tran$ denotes the \index{matrix transpose} transpose with respect to the computational basis
\begin{equation*}
L\tran = \sum_{jk} L_{jk} \ketbra{k}{j}.\footnote{The transpose operation is basis-dependent just like the definition of the maximally entangled state.}
\end{equation*}
If we replace $L$ with a measurement operator this implies that observing a particular outcome on $A$ results in a particular subnormalised quantum state on $B$. Hence, we have \emph{remotely prepared} a state on $B$ by performing a measurement on $A$. It is easy to see that any cq-state of the form \eqref{eq:cq-state} which satisfies
\begin{equation}
\label{eq:correct-marginal}
\sum_{x} p_{x} \rho^{x} = \frac{ \mathbb{1} }{d},
\end{equation}
can be obtained by performing the right measurement on one half of the $d$-dimensional maximally entangled state. More specifically, the appropriate measurement $\{F_{x}\}_{x}$ is described by measurement operators $F_{x} = d p_{x} (\rho^{x}) \tran$. The restriction \eqref{eq:correct-marginal} expresses the rule that the reduced state on $B$ must remain unchanged, i.e.~it must remain maximally mixed. This phenomenon turns out to be important in quantum cryptography.

An essential feature of quantum information is the ability to encode information in two (or more) incompatible bases. The most common example was originally introduced by Wiesner \cite{wiesner83} but goes under the name of \index{BB84 states} \emph{BB84 states} (after Bennett and Brassard who popularised the term \cite{bennett84}). In this case Alice uses either computational or Hadamard basis to encode a logical bit $x \in \{0, 1\}$ in a qubit which she later sends to Bob. If the logical bit is uniform the two encodings lead to
\begin{equation*}
\rho_{XB} = \frac{1}{2} \sum_{x} \ketbraq{x}_{X} \otimes \ketbraq{x}_{B}.
\end{equation*}
and
\begin{equation*}
\rho_{XB} = \frac{1}{2} \sum_{x} \ketbraq{x}_{X} \otimes H \ketbraq{x}_{B} H,
\end{equation*}
respectively. It is easy to verify that both of these satisfy relation \eqref{eq:correct-marginal} with $d = 2$. This leads to an important observation (in this particular cryptographic context due to Bennett, Brassard and Mermin \cite{bennett92a}) that such states can be prepared by first generating the maximally entangled state of two qubits $\ket{\Psi_{2}}$ and then measuring subsystem $A$ in the right basis. In fact, Alice simply makes a measurement in either computational or Hadamard basis.

Since measurements on Alice's side commute with any operations on Bob's side, they can be delayed until some later point in the protocol, which means that now Alice and Bob share entanglement during the protocol. In other words, we have turned a prepare-and-measure scheme (Alice prepares a state and sends it to Bob, who performs a measurement), in which there is no entanglement between Alice and Bob, into an equivalent (from the security point of view) entanglement-based scheme (Alice and Bob simultaneously perform measurements on a shared entangled state). 
Often the entanglement-based schemes are easier to analyse, which we we will take advantage of to prove security of a quantum relativistic bit commitment protocol in Chapter~\ref{chap:transmitting}.
\section{Multiplayer games}
\label{sec:multiplayer-games}
For the purpose of this thesis, a game is an interaction between a \emph{referee} and one or more \emph{players}. The referee asks each player a question and the player must give an answer. In most cases the players are not allowed to communicate during the game. A \emph{strategy} is a procedure that the players follow to generate their answers. At the end of the game, the referee decides whether the game is won or lost.
\subsection{Classical and quantum strategies}
For concreteness, we consider a game of $m$ \emph{non-communicating} players. Each player receives an \emph{input} from $\cX$ and is required to \emph{output} a symbol from $\cY$ ($\cX$ and $\cY$ are arbitrary finite alphabets). A game is defined by the input distribution
\begin{equation*}
p : \underbrace{\cX \times \cX \times \ldots \times \cX}_{m \text{ times}} \mapsto [0, 1]
\end{equation*}
and a \emph{predicate function}
\begin{equation*}
V :\underbrace{(\cX \times \cY) \times \ldots \times (\cX \times \cY)}_{m \text{ times}} \mapsto \{0, 1\}
\end{equation*}
which specifies whether the players \emph{win} or \emph{lose} for a particular combination of inputs and outputs.\footnote{Clearly, this generalises in a straightforward manner to the case where the range of $V$ is $\amsbb{R}$. Then $V$ assigns a particular \emph{score} to every combination of inputs and outputs. However, in this thesis we only consider games in which the players either win or lose.}

Every strategy available to classical players can be written as a convex combination of deterministic strategies. Hence, the maximum winning probability, denoted by $\omega$ and referred to as \emph{the classical value} of the game, can be achieved by a deterministic strategy. A deterministic strategy is a collection of $m$ functions $(f_{j})_{j = 1}^{m}$, $f_{j} : \cX \mapsto \cY$, which determine each player's response. Therefore,
\begin{equation*}
\omega := \max_{f_{1}, f_{2}, \ldots, f_{m}} \sum_{x_{1} \in \cX} \ldots \sum_{x_{m} \in \cX} p (x_{1}, \ldots, x_{m}) V \big(x_{1}, f_{1}(x_{1}), \ldots, x_{m}, f_{m}(x_{m}) \big),
\end{equation*}
where the maximum is taken over all combinations of functions.

Quantum players, in turn, are allowed to share a quantum state and perform measurements that depend on the inputs. For simplicity in the quantum setting we only describe two-player games ($m = 2$) but these concepts extend in a straightforward way to an arbitrary number of players (see for example Ref.~\cite{vidick13}). A quantum strategy consists of a bipartite pure quantum state (of finite dimension) $\ket{ \psi }_{AB}$\footnote{It is sufficient to consider pure states since a mixed state can be written as a convex combination of pure states.} and measurements that each player will perform for every possible input $x \in \cX$, denoted by $\{ F_{y}^{x} \}_{y \in \cY}, \{ G_{y}^{x} \}_{y \in \cY}$. The maximum winning probability achievable by quantum players denoted by $\omega^{*}$ is called \emph{the quantum value}
\begin{equation*}
\omega^{*} := \sup \sum_{y_{1}, y_{2} \in \cY} \sum_{x_{1}, x_{2} \in \cX} p (x_{1}, x_{2}) V \big(x_{1}, y_{2}, x_{2}, y_{2} \big) \bramatketq{\psi}{ F_{y_{1}}^{x_{1}} \otimes G_{y_{2}}^{x_{2}} },
\end{equation*}
where the optimisation is taken over all quantum strategies.

Calculating the classical value of a game can be done by iterating over all possible strategies. While this is clearly not efficient (the number of strategies to check is exponential in the number of inputs), at least in principle it can be done.\footnote{In fact, finding the classical value of a general game is NP-hard, i.e.~we believe it cannot be done efficiently.} On the other hand, computing the quantum value is a more difficult problem and no generic procedure is known.\footnote{The quantum value of an \texttt{XOR} game can be calculated using semidefinite programming techniques \cite{wehner06}. For general games there exist hierarchies by Navascu\'{e}s, Pironio, Ac\'{i}n \cite{navascues07} and Doherty, Liang, Toner, Wehner \cite{doherty08}, which give increasingly tighter approximations on the correct value. While these hierarchies ultimately converge to the correct value, the rate of convergence is not well-understood. Moreover, calculating the higher level approximations becomes a difficult task from the computational point of view.} The problem stems from the fact that we do not have a convenient description of the quantum set of correlations (i.e.~there is no efficient procedure to decide whether a given point belongs to the set or not). To establish an upper bound on the quantum value of a game it is common to consider a larger set of correlations known as the \emph{no-signalling} correlations, which does admit a simple description. Intuitively, this is the largest set of correlations that does not allow to send messages between different parties and the simplest example is the so-called Popescu-Rohrlich box \cite{popescu94}. Because the no-signalling set is a polytope (i.e.~the convex hull of a finite set of extreme points) we know how to optimise over it (at least in principle, efficiency considerations apply as before). For a detailed characterisation of different sets of correlations refer to a recent review paper on Bell nonlocality \cite{brunner14}.
\subsection{Finite fields}
A field is a set with two operations: addition and multiplication, which satisfy the usual properties as listed below.
\begin{itemize}
\item The field is closed under multiplication and addition.
\item Both operations are associative.
\item Both operations are commutative.
\item There exist additive and multiplicative identity elements.
\item There exist additive and multiplicative inverses (except for the additive identity which does not have a multiplicative inverse).
\item Multiplication is distributive over addition.
\end{itemize}
It is easy to see that real or complex numbers form with the standard addition and multiplication are fields. We call a field \emph{finite} (the name \emph{Galois field} is also used after \'{E}variste Galois) if the set of elements is finite. The order of a finite field is the number of elements in the set and a finite field of order $q$ exists iff $q$ is a prime power, i.e.~$q = p^{k}$ for some prime $p$ and integer $k$. Since all finite fields of a given order are isomorphic (i.e.~they are identical up to relabelling of the elements), we speak of \emph{the} finite field of order $q$ denoted by $\amsbb{F}_{q}$. For a thorough introduction to finite fields please consult an excellent book by Mullen and Mummert \cite{mullen07}.

Finite fields appear often in coding theory and cryptography since they are finite sets closed under (appropriately defined) addition, multiplication and their inverses. Moreover, all these operations can be implemented efficiently on a computer. Fields corresponding to $p = 2$ are a common choice since their elements have a natural representation as strings of bits. The protocol proposed in Chapter~\ref{chap:multiround} uses finite-field arithmetic and its security hinges on the difficulty of a certain family of multiplayer games. In this section we prove upper bounds on the classical value of such games and discuss the connection to a natural algebraic problem concerning multivariate polynomials over finite fields.
\subsection{Definition of the game}
\label{sec:game-definition}
\begin{figure}
\centering
\begin{tikzpicture}[scale=1, line width=0.5]
\draw (-0.8, -0.4) rectangle (0, 0.4);
\node at (-0.4, 0) {$\cP_{1}$};
\draw[-] (1.3, 1.5) -- (1.3, -1.5);
\draw (2.6, -0.4) rectangle (3.4, 0.4);
\node at (3, 0) {$\cP_{2}$};
\draw[-] (4.7, 1.5) -- (4.7, -1.5);
\hordots{6.2}{0}{0.08}{black}
\draw[-] (7.7, 1.5) -- (7.7, -1.5);
\draw (9.1, -0.4) rectangle (9.9, 0.4);
\node at (9.5, 0) {$\cP_{m}$};
\node at (-0.4, 1.4) {$X_{2}, X_{3}, \ldots, X_{m}$};
\draw [->] (-0.4, 1.1) to (-0.4, 0.6);
\node at (3, 1.4) {$X_{1}, X_{3}, \ldots, X_{m}$};
\draw [->] (3, 1.1) to (3, 0.6);
\node at (9.5, 1.4) {$X_{1}, X_{2}, \ldots, X_{m - 1}$};
\draw [->] (9.5, 1.1) to (9.5, 0.6);
\node at (-0.4, -1.4) {$Y_{1}$};
\draw [->] (-0.4, -0.6) to (-0.4, -1.1);
\node at (3, -1.4) {$Y_{2}$};
\draw [->] (3, -0.6) to (3, -1.1);
\node at (9.5, -1.4) {$Y_{m}$};
\draw [->] (9.5, -0.6) to (9.5, -1.1);
\end{tikzpicture}
\caption{The ``Number on the Forehead'' model. Vertical lines remind us that the players are not allowed to communicate.}
\label{fig:number-on-the-forehead}
\end{figure}
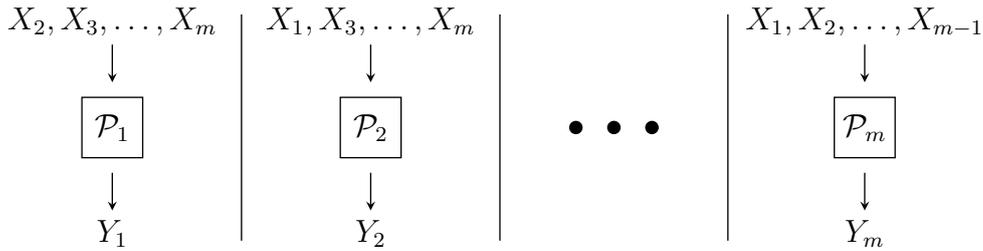
Buhrman and Massar \cite{buhrman05} proposed a generalisation of the CHSH game \cite{clauser69}, which was further studied by Bavarian and Shor \cite{bavarian15}. A natural multiplayer generalisation of this game arises in the security analysis of the multiround bit commitment protocol in Chapter~\ref{chap:multiround}. Since the analysis does not require familiarity with the actual actual bit commitment protocol and might be of independent interest, we have decided to make it a stand-alone component of the Preliminaries (rather than incorporating it in Chapter~\ref{chap:multiround}).

Consider a game with $m$ players, denoted by $\cP_{1}, \cP_{2}, \ldots, \cP_{m}$, and let $X_{1}, X_{2}, \ldots, X_{m}$ be random variables drawn independently, uniformly at random from $\amsbb{F}_{q}$
\begin{equation*}
p(x_{1}, x_{2}, \ldots, x_{m}) = ( q^{-1} )^{m} = q^{-m}.
\end{equation*}
We use $[n]$ to denote the set of integers between $1$ and $n$, $[n] := \{1, 2, \ldots, n\}$. In the \index{Number on the Forehead model} ``Number on the Forehead'' model \cite{chandra83} $\cP_{k}$ receives \emph{all the random variables except for the $k\th$ one}, which we denote by $X_{[m] \setminus \{k\}}$, and is required to output an element of $\amsbb{F}_{q}$, which we denote by $Y_{k}$ (see Fig.~\ref{fig:number-on-the-forehead}). The game is won if the \emph{sum of the outputs equals the product of the inputs} (all the operations are performed in the finite field), i.e.~the predicate function is
\begin{equation*}
V(x_{1}, y_{2}, \ldots, x_{m}, y_{m}) =
\begin{cases}
1 &\nbox{if} \prod_{k = 1}^{m} x_{k} = \sum_{k = 1}^{m} y_{k},\\
0 &\nbox{otherwise.}
\end{cases}
\end{equation*}
If the player $\cP_{k}$ employs a deterministic strategy described by $f_{k} : \amsbb{F}_{q}^{(m - 1)} \mapsto \amsbb{F}_{q}$, i.e.~he outputs $Y_{k} = f_{k} ( X_{ [m] \setminus \{k\} } )$, then the winning probability equals
\begin{equation*}
\omega_{m}(f_{1}, f_{2}, \ldots, f_{m}) := \Pr \Big[ \prod_{k = 1}^{m} X_{k} = \sum_{k = 1}^{m} f_{k}( X_{ [m] \setminus \{k\} } ) \Big].
\end{equation*}
As described in Section \ref{sec:multiplayer-games} the classical value of the game equals
\begin{equation}
\label{eq:omega-definition}
\omega_{m} := \max_{ f_{1}, f_{2}, \ldots, f_{m} } \omega_{m}(f_{1}, f_{2}, \ldots, f_{m}),
\end{equation}
where the maximisation is taken over all combinations of functions from $\amsbb{F}_{q}^{(m - 1)}$ to $\amsbb{F}_{q}$.\footnote{Clearly, this is a function of both $q$ and $m$ but we have decided not to mention the dependence on $q$ explicitly (to avoid overcrowding the symbol with sub- or superscripts). This is justified because in our application $q$ is a parameter that has to be chosen before the protocol begins, whereas $m$ can be decided upon at some later point.} Our goal is to find an upper bound on $\omega_{m}$ as a function of $q$ and $m$.
\subsection{Relation to multivariate polynomials over finite fields}
As the probability distribution of inputs is uniform the winning probability of a particular deterministic strategy (defined by a collection of functions $f_{1}, f_{2}, \ldots, f_{m}	$) is proportional to the number of inputs $(x_{1}, x_{2}, \ldots, x_{m})$ on which the following equality holds
\begin{equation}
\label{eq:winning-condition}
\prod_{k = 1}^{m} x_{k} = \sum_{k = 1}^{m} f_{k}( x_{ [m] \setminus \{k\} } ).
\end{equation}
Alternatively, we can count the zeroes of the following function
\begin{equation*}
P(x_{1}, x_{2}, \ldots, x_{m}) = \prod_{k = 1}^{m} x_{k} - \sum_{k = 1}^{m} f_{k}( x_{ [m] \setminus \{k\} } ).
\end{equation*}
By the Lagrange interpolation method every function from $\amsbb{F}_{q}^{n}$ to $\amsbb{F}_{q}$ (for arbitrary $n \in \amsbb{N}$) can be written as a polynomial. Therefore, the question concerns the number of zeroes of the polynomial $P$. Different strategies employed by the players give rise to different polynomials and we need to characterise what polynomials are ``reachable'' in this scenario. The output of $\cP_{k}$ is an arbitrary polynomial of $x_{ [m] \setminus \{k\}}$, hence, it only contains terms that depend on \emph{at most $m - 1$ variables}. This means that the part of $P$ that depends on \emph{all $m$ variables} comes solely from the first term and equals $\prod_{k = 1}^{m} x_{k}$. Therefore, finding the classical value of the game is equivalent to finding the polynomial with the largest number of zeroes, whose only term that depends on all $m$ variables equals $\prod_{k = 1}^{m} x_{k}$. This shows that the optimal strategy for our game is closely related to purely algebraic properties of polynomials over finite fields.
\subsection{A recursive upper bound on the classical value}
\label{sec:recursive-upper-bound}
Here, we find explicit upper bounds on $\omega_{m}$ through an induction argument. First, note that for $m = 1$ there is only one term on the right-hand side of Eq.~\eqref{eq:winning-condition} and since this term takes no arguments it is actually a constant. Since $X_{1}$ is uniform we have
\begin{equation*}
\omega_{1} := \max_{ c \in \amsbb{F}_{q} } \Pr [ X_{1} = c ] = \frac{1}{q}.
\end{equation*}
Now, we derive an upper bound on $\omega_{m}$ in terms of $\omega_{m - 1}$. For a fixed strategy $(f_{1}, f_{2}, \ldots, f_{m})$ the winning probability can be written as
\begin{align*}
\omega_{m}(f_{1}, f_{2}, \ldots, f_{m}) &= \Pr \big[ X_{1} X_{2} \ldots X_{m} = \sum_{k = 1}^{m} f_{k}( X_{ [m] \setminus \{k\} } ) \big]\\
&= \sum_{y \in \amsbb{F}_{q}} \Pr[X_{m} = y] \cdot \Pr \big[ X_{1} X_{2} \ldots X_{m} = \sum_{k = 1}^{m} f_{k}( X_{ [m] \setminus \{k\} } ) \big| X_{m} = y \big]\\
&= q^{-1} \sum_{y} \Pr \big[ X_{1} X_{2} \ldots X_{m} = \sum_{k = 1}^{m} f_{k}( X_{ [m] \setminus \{k\} } ) \big| X_{m} = y \big].
\end{align*}
Conditioning on a particular value of $X_{m}$ leads to events that only depend on $X_{1}, X_{2}, \ldots, X_{m - 1}$. In particular, setting $X_{m} = y$ defines the event $F_{y}$
\begin{equation*}
F_{y} \iff X_{1} X_{2} \ldots X_{m - 1} y = \sum_{k = 1}^{m - 1} f_{k}( X_{ [m - 1] \setminus \{k\} }, y ) + f_{m}(X_{[m - 1]}),
\end{equation*}
which satisfies
\begin{equation}
\label{eq:fy-definition}
\Pr[ F_{y} ] = \Pr \big[ X_{1} X_{2} \ldots X_{m} = \sum_{k = 1}^{m} f_{k}( X_{ [m] \setminus \{k\} } ) | X_{m} = y \big].
\end{equation}
We can use Lemma~\ref{lem:cauchy-schwarz} to find a bound on $\omega_{m}(f_{1}, f_{2}, \ldots, f_{m}) = q^{-1} \sum_{y} \Pr[ F_{y} ]$ as long as we are given bounds on $\Pr[ F_{y} \wedge F_{z}]$ for $y \neq z$.
\begin{prop}
\label{prop:intersections}
For $y \neq z$ we have $\Pr[ F_{y} \wedge F_{z}] \leq \omega_{m - 1}$.
\end{prop}
\begin{proof}
Eq.~\eqref{eq:fy-definition} defines $F_{y}$ through a certain equation in the finite field. If the equations corresponding to $F_{y}$ and $F_{z}$ are satisfied simultaneously then any linear combination of these equations is also satisfied. More specifically, we define a new event
\begin{equation}
\label{eq:gyz-definition}
G_{yz} \iff X_{1} X_{2} \ldots X_{m - 1} (y - z) = \sum_{k = 1}^{m - 1} f_{k}( X_{ [m - 1] \setminus \{k\} }, y ) - f_{k}( X_{ [m - 1] \setminus \{k\} }, z )
\end{equation}
and since $F_{y} \wedge F_{z} \implies G_{yz}$ we are guaranteed that $\Pr[F_{y} \wedge F_{z}] \leq \Pr[G_{yz}]$. To find an upper bound on $\Pr[ G_{yz} ]$ we give the players more power by allowing a more general expression on the right-hand side. In Eq.~\eqref{eq:gyz-definition} the $k\th$ term is a particular function of $X_{ [m - 1] \setminus \{k\} }, y$ and $z$, so let us replace it by an arbitrary function of these variables
\begin{equation*}
f_{k}( X_{ [m - 1] \setminus \{k\} }, y ) - f_{k}( X_{ [m - 1] \setminus \{k\} }, z ) \quad \to \quad g_{k}( X_{ [m - 1] \setminus \{k\} }, y, z ).
\end{equation*}
Under this relaxation, we arrive at the following equality
\begin{equation*}
X_{1} X_{2} \ldots X_{m - 1} (y - z) = \sum_{k = 1}^{m - 1} g_{k}( X_{ [m - 1] \setminus \{k\} }, y, z ).
\end{equation*}
Clearly, $(y - z)$ is a constant, non-zero multiplicative factor known to each player. Dividing the equation through by $(y - z)$ leads to the same game as considered before but one player has been eliminated (there are only $m - 1$ players now). Therefore,
\begin{equation*}
\Pr[ F_{y} \wedge F_{z} ] \leq \Pr[ G_{yz} ] \leq \omega_{m - 1}.
\end{equation*}
\end{proof}
\noindent This allows us to prove the main technical result.
\begin{prop}
The classical value of the game defined in Section \ref{sec:game-definition} satisfies the following recursive relation
\begin{equation}
\label{eq:recursive-bound}
\omega_{m} \leq \frac{1 + \sqrt{1 + 4 q ( q - 1 ) \omega_{m - 1} } }{ 2 q }.
\end{equation}
\end{prop}
\begin{proof}
The statement follows directly from combining Lemma~\ref{lem:cauchy-schwarz} with Proposition~\ref{prop:intersections}.
\end{proof}
\noindent Since we know that $\omega_{1} = q^{-1}$, we can obtain a bound on $\omega_{m}$ by recursive evaluation of Eq.~\eqref{eq:recursive-bound}. More precisely, we get $\omega_{m} \leq c_{m}$ for
\begin{equation*}
c_{m} =
\begin{cases}
q^{-1} &\nbox{for} m = 1,\\
\frac{1 + \sqrt{1 + 4 q ( q - 1 ) c_{m - 1} } }{ 2 q } &\nbox{for} m \geq 2.
\end{cases}
\end{equation*}
Note that this bound is always non trivial, i.e.~$c_{m} < 1$ for all values of $q$ and $m$. To obtain a slightly weaker but simpler form presented in Eq.~\eqref{eq:bound-multi} in Chapter~\ref{chap:multiround} we note that $1 - 4 q c_{m - 1} \leq 0$ and set $q = 2^{n}$.
\section{Cryptographic protocols and implementations}
\label{sec:cryptographic-protocols}
Cryptography is a field is driven by applications, i.e.~the starting point is a particular task that two (or more) parties want to perform. Formulating a task in a rigorous, mathematical language gives rise to a \index{cryptographic primitive} \emph{cryptographic primitive}. In case of two-party cryptography two aspects must be specified.
\begin{itemize}
\item \textbf{Correctness:} The expected behaviour when executed by honest parties.
\item \textbf{Security:} A list of behaviours that are forbidden \emph{regardless} of the strategies that the dishonest parties might employ.
\end{itemize}
Defining correctness is straightforward because what we want to achieve is clear from the beginning. Finding the right definition of security, on the other hand, might be a challenging task. Converting our intuition about what the primitive should not allow for into a mathematical statement is not always straightforward and often multiple security definitions are simultaneously in use depending on the exact context. Sometimes security is perfect (cf.~the hiding property of bit commitment in Definition \ref{df:hiding}), but more often it is quantified by a (small) number usually denoted by $\varepsilon$ (cf.~the binding property in Definition \ref{df:binding}), which can be (usually) understood as an upper bound on the probability that a cheating attempt is successful.

It is worth emphasising that no meaningful statements can be made if \emph{all involved parties} decide to cheat simply because if they collectively deviate in the ``right'' way they can produce any imaginable output. If all the dishonest parties form a coalition whose only goal is to enforce a certain output, nothing can stop them from achieving it. In particular, in the two-party case Alice and Bob could, instead of executing the protocol, decide to play a game of chess and then the output of the interaction would be a complete account of a chess game. Clearly, no cryptographic statements can be made about a chess game. Therefore, we only consider scenarios in which at least one party is honest and that is why in the two-party setting we prefer to talk about \emph{security for honest Alice (Bob)} instead of \emph{security against dishonest Bob (Alice)}.

Once the primitive has been defined we propose a protocol (i.e.~a sequence of interactions between the players) that implements it. Verifying the correctness of a cryptographic protocol is simple since the honest parties behave in a well-defined manner. Showing security, on the other hand, is more complicated because we need to characterise all possible ways in which the dishonest parties might deviate from the protocol and argue that none of them violates the security requirements of the primitive. In a protocol that does not achieve perfect security, the final outcome of a security proof is an upper bound on how well the dishonest party can cheat. Since the level of security that we are happy to accept depends on the precise circumstances, protocols usually come in families parametrised by an integer $n \in \amsbb{N}$ and the security guarantee is a function of $n$ ideally satisfying $\varepsilon(n) \to 0$ as $n \to \infty$. Increasing the value of $n$ leads to protocols that use more resources (e.g.~computation, communication or randomness) but achieve better security. Ideally, we would like $\varepsilon(n)$ to decay exponentially but inverse polynomial decay might also be acceptable. Security analysis of such a family of protocols aims to find the tightest bound, i.e.~lowest $\varepsilon(n)$, as a function of $n$.

Having performed the theoretical analysis of a protocol, the last step is to actually implement it. In case of mature technologies (e.g.~modern digital devices) \index{fault-tolerance} \emph{fault-tolerance} (capability of terminating correctly even in the presence of errors) is ensured at the hardware level so there is no need to introduce any extra measures in the actual protocol. The multiround classical relativistic bit commitment protocol discussed in Chapter~\ref{chap:multiround} is a prime example: the simplest theoretical protocol is already suitable for implementation and no modifications are necessary. On the other hand, in case of less developed fields like quantum technologies the situation is a bit more complicated. Since we have not yet found a way to (generically) eliminate all the errors, we must consider how they will affect our protocol. What happens when honest parties follow the protocol but their communication or storage suffers from noise? Depending on how severe the errors are, the protocol either terminates with the wrong output or it aborts. To prevent such an undesirable outcome the protocol must be modified to become fault-tolerant. The exact modifications that need to be made depend on what type of noise we want to protect ourselves against. More specifically, we need to have a model of noise that is simple enough to analyse but remains a reasonably faithful description of the experimental setup. As a consequence, turning a theoretical protocol into an experimental proposal is not so straightforward and usually requires multiple rounds of communication between the theoretician and the experimentalist. The new fault-tolerant protocol admits a couple of parameters, which determine its error tolerance, and these should be chosen to ensure that the protocol terminates successfully (with high probability) when performed by the honest parties. In this case asymptotic analysis is sufficient, since it is the actual experiment that demonstrates correctness (while calculations simply give us an indication whether the experiment is worth setting up).

Having modified our protocol we need to reassess its security and it is clear that introducing fault-tolerant features makes a protocol more vulnerable to cheating. Moreover, since the security analysis is supposed to please the most paranoid cryptographers, we must make minimal assumptions about the adversary. In particular, we do not want to impose on him any technological restrictions. Our devices are imperfect due to our lack of skills and knowledge but we do not want to assume that about the adversary. The standard approach to quantum cryptography is to assume that the devices used by the honest party are trusted (i.e.~their precise description including potential imperfections is known) but the devices used by the adversary might be arbitrary (i.e.~they are only limited by the laws of physics).\footnote{As mentioned before the trust assumption can in fact be dropped. See Section~\ref{sec:quantum-cryptography} for references.}

Clearly, requiring that our protocol is correct for honest parties with imperfect devices and remains secure against an all-powerful adversary puts us in a difficult situation. As mentioned before, the fault-tolerant protocol takes a couple of parameters which we can try adjusting but we might nevertheless reach the conclusion that guaranteeing correctness and security simultaneously is not possible. This means that the quality of the devices available to the honest parties is not sufficient to allow for a secure execution of the protocol.

We can turn this statement around and ask about the minimal requirements on the honest devices. How much noise can we tolerate before the protocol becomes insecure? Note that now this is a property of the protocol alone and we should aim to design protocols with the highest possible noise tolerance. In case of quantum technologies a successful implementation of a cryptographic protocol often requires a collective effort of the experimentalist (who attempts to reduce the experimental noise to the absolute minimum) and the theoretician (who improves the theoretical security analysis). An example of such an analysis for a quantum bit commitment protocol is presented in Chapter~\ref{chap:transmitting}.
\section{Bit commitment}
\label{sec:bit-commitment}
Recall Example 2 from Section~\ref{sec:cryptography}, in which Alice wants to commit to a certain message without actually revealing it. Commitment schemes have multiple applications, for example they allow us to prove that we know something  or that we are able to predict some future event without revealing any information in advance. They are also a useful tool to force different parties to act simultaneously, even if the communication model is inherently sequential. Consider two bidders who want to take part in an auction but there is no trusted auctioneer at hand. In the usual, sequential communication model one of them has to announce his bid first, which gives an unfair advantage to the other bidder. This can be rectified if the first bidder commits to his bid (instead of announcing it) and opens it only after the second bidder has announced his price. Hence, given access to a commitment functionality, one can perform a fair auction without a trusted third party. Moreover, commitment schemes are often used in reductions to construct other cryptographic primitives. For the purpose of this section we restrict ourselves to schemes in which the committed message is just one bit.

As explained in Section \ref{sec:cryptographic-protocols} the protocol should be correct (it should succeed if executed by honest parties) and secure (the honest party should be protected even if the other party deviates arbitrarily from the protocol).\footnote{One can also design \emph{cheat-sensitive} protocols \cite{aharonov00, kent04}, in which one party constantly monitors the other party's action and might abort the protocol in the middle if they believe that the other party is cheating, but we do not consider them here. Similarly, cheat-sensitive coin flipping protocols have been proposed \cite{spekkens02}.} To make precise mathematical statements, we need a formal description of the protocol in the quantum language.
\subsection{Formal definition}
\label{sec:formal-definition}
The primitive of bit commitment is usually split into two phases: the commit phase and the open phase. In the commit phase Alice interacts with Bob and at the end of the commit phase she should be committed, i.e.~she should no longer have the freedom to choose (or change) her commitment. Nevertheless, Bob should remain ignorant about Alice's commitment. In the open phase Alice sends to Bob the bit she has committed to, along with a proof of her commitment, which he examines to decide whether to accept the opening or not.

While this description is sufficient for most purposes, it has some undesirable features. First of all, since it does not explicitly \emph{mention} the period in between the two phases, it might create the impression that there \emph{is} no interval in between, i.e.~it might lead to the false conclusion that the \emph{end of the commit phase} and the \emph{beginning of the open phase} correspond to the same point of time. This is clearly misleading as the whole point of a commitment scheme is to obtain a finite interval \emph{between} the two, i.e.~a period in which Alice is committed to a message which Bob remains ignorant about. The distinction is usually not made explicit because in most protocols nothing happens in between the two phases (Alice and Bob just savour the moment of being securely committed), which means that the two points are \emph{operationally} equivalent (e.g.~any information that Bob might extract about Alice's commitment just before the open phase he might also extract immediately after the commit phase). This is not true for protocols in which communication continues in between the two phases and there the distinction is important. Therefore, we explicitly introduce the \emph{sustain} phase, i.e.~the period during which the commitment is valid. For reasons which will become clear soon, we call the beginning of the sustain phase the \emph{commitment point} and the end the \emph{opening point}. We also split up what is usually called the open phase into two separate parts: in the \emph{open} phase Alice unveils $\cval$ to Bob and sends him a proof of her commitment (which we assume to be a single message\footnote{The assumption that the open phase consists of a single message from Alice to Bob might seem restrictive but it does not rule out any interesting protocols. Any interaction between Alice and Bob in the open phase can be simulated locally by Bob given that Alice provides him with all the relevant information. Since there is no need to protect Alice's privacy any more, the security of the protocol is not affected. In fact, we could consider an extreme case in which Alice does not even extract a proof and instead passes all the (possibly quantum) information in her possession to Bob. Again, this would not affect security but might unnecessarily increase the size of the message. Note that while this simulation argument does not change the situation of honest Alice, it might change (to worse) the situation of dishonest Alice but this shall not concern us.}), while in the \emph{verify} phase Bob decides whether to accept the opening or not. The phase structure is shown in Fig.~\ref{fig:phases}.

We use $A$ and $B$ to denote the subsystems of Alice and Bob, respectively. We use $P$ to denote the proof, which is generated (in the open phase) by Alice and sent to Bob. We implicitly assume that $P$ contains the information about the value $\cval$ that Alice is trying to unveil. Since the commit and sustain phases are interactive they do not admit a compact description in the quantum formalism. The open phase can be described as a quantum channel $\Phi_{A \to P}^{\textnormal{open}}$, which acts on Alice's subsystem ($A$) to produce a proof ($P$). Bob's decision whether to accept or reject the commitment in the verify phase can be described by a binary measurement $\cM = \{ M_{\textnormal{accept}}, M_{\textnormal{reject}} \}$ performed jointly on subsystems $B$ and $P$.
\begin{figure}
\centering
\begin{tikzpicture}[scale=1]
\phases{-0.4}
\end{tikzpicture}
\caption{The phase structure of a generic bit commitment protocol.}
\label{fig:phases}
\end{figure}
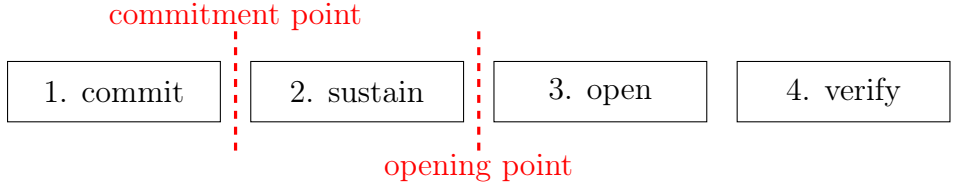

The honest scenario is relatively straightforward to analyse. The protocol specifies uniquely (for each value of Alice's commitment $\cval$) the state shared between Alice and Bob at every stage of the protocol and finding it explicitly is a matter of simple calculation.
\begin{df}
Let $\rho_{AB}^{\cval}$ be the state shared between Alice and Bob at the opening point, $\Phi_{A \to P}^{\textnormal{open}}$ be the opening map and $\cM = \{ M_{\textnormal{accept}}, M_{\textnormal{reject}} \}$ be the final measurement. A bit commitment protocol is \textbf{$(1 - \delta)$-correct} if for $\cval \in \{0, 1\}$ we have
\begin{gather*}
\tr (M_{\textnormal{accept}} \rho_{BP}^{\cval} ) \geq 1 - \delta,\\
\nbox{where} \rho_{BP}^{\cval} = \Phi_{A \to P}^{\textnormal{open}}(\rho_{AB}^{\cval}).
\end{gather*}
\end{df}

In the dishonest scenario the situation becomes a bit more complicated because the state shared between Alice and Bob is no longer uniquely specified. For example, if Alice is dishonest then the state of her subsystem might be completely arbitrary. For the purpose of defining security it is convenient to talk about the \emph{set} of states that dishonest Alice (or Bob) can \emph{enforce} during the protocol and we will use $\sigma_{AB}$ to denote such states (to distinguish them from the honest states denoted by $\rho_{AB}$).\footnote{Note that for a generic, multiround protocol characterising such sets might be a difficult task. Nevertheless, these sets are always well-defined and allow us to define security in a convenient way.} These sets are then used to quantify security.

As discussed before, coming up with the right security definition is not trivial because it requires us to turn the intuitive notion of security into a mathematical statement. It is useful to realise that the dishonest scenario is operationally equivalent to a game between the honest party (acting as a referee since their behaviour is determined by the protocol) and the dishonest party (a player who is allowed to adopt an arbitrary strategy). Thus, defining security is equivalent to specifying the exact rules of such a ``cheating game''.

To look at a concrete example let us start with the case of honest Alice and dishonest Bob. Bob's goal is to find out the value of Alice's commitment before the open phase begins, i.e.~at the opening point, and to achieve this he might deviate arbitrarily from the protocol. This admits a natural formulation as a game in which Alice (the referee) chooses $\cval \in \{0, 1\}$ uniformly at random and follows the honest protocol until the opening point. Then, Bob is challenged to guess $\cval$ at the opening point and the probability of guessing $\cval$ correctly is a natural measure of his cheating abilities. To phrase this in terms of quantum states, let $\sigma_{AB}^{\cval}$ be the state at the opening point and note that a particular strategy of dishonest Bob enforces two distinct states $(\sigma_{AB}^{0}, \sigma_{AB}^{1})$. 
\index{the hiding property}
\begin{df}
\label{df:hiding}
A bit commitment protocol is \textbf{hiding} if all pairs of states $(\sigma_{AB}^{0}, \sigma_{AB}^{1})$ that Bob can enforce at the opening point satisfy
\begin{equation}
\label{eq:equal-marginals}
\sigma_{B}^{0} = \sigma_{B}^{1},
\end{equation}
where $\sigma_{B}^{\cval} = \tr_{A} \sigma_{AB}^{\cval}$.
\end{df}
\noindent This definition implies that whatever strategy Bob employs, he obtains no information about Alice's commitment.\footnote{Note that this implies that the equality \eqref{eq:equal-marginals} holds not just \emph{at} at the opening point but at all times \emph{until} that point. If at any point the two reduced states were not equal Bob could simply store his system until the opening time (possibly sending Alice some freshly prepared states if necessary).} Note that this property is sometimes referred to as being \emph{perfectly hiding}, in contrast to schemes that only guarantee Alice partial security. Since all protocols considered in this thesis are perfectly hiding, we always use hiding to mean perfect security.

The case of dishonest Alice and honest Bob is a bit more complex. In order to claim that Alice's commitment begins \emph{at} the commitment point, we must show that at that point she no longer has the freedom to unveil both values, \emph{regardless} of the strategy adopted prior to that. In other words, the dishonest behaviour of Alice can be seen as two distinct strategies (corresponding to $\cval = 0$ and $\cval = 1$) which are identical until the commitment point and let us call such strategies \emph{compatible}. Intuitively, this means that she can delay the choice which strategy to follow until the commitment point.
\index{the binding property}
\begin{df}
\label{df:binding}
Let $(\sigma_{AB}^{0}, \sigma_{AB}^{1})$ be a pair of states that Alice can enforce at the opening point using compatible strategies and let $(\Phi^{\textnormal{cheat}, 0}_{A \to P}, \Phi^{\textnormal{cheat}, 1}_{A \to P})$ be opening maps. Define $p_{\cval}$ to be the probability that Alice's attempt to unveil $\cval$ is accepted by Bob
\begin{equation*}
p_{\cval} = \tr \big( M_{\textnormal{accept}} \big[ \Phi^{\textnormal{cheat}, \cval}_{A \to P} (\sigma_{AB}^{\cval}) \big] \big).
\end{equation*}
A bit commitment protocol is called \textbf{$\varepsilon$-binding} if for all states $(\sigma_{AB}^{0}, \sigma_{AB}^{1})$ and for all opening maps $(\Phi^{\textnormal{cheat}, 0}_{A \to P}, \Phi^{\textnormal{cheat}, 1}_{A \to P})$ we have
\begin{equation*}
p_{0} + p_{1} \leq 1 + \varepsilon.
\end{equation*}
\end{df}
\noindent Note that finding the optimal opening map for a particular intermediate state $\sigma_{AB}^{\cval}$ is a semidefinite program so it can be solved efficiently. Therefore, the cheating strategy is essentially specified by a pair of compatible strategies.

It is clear that the restriction that the two strategies are compatible is crucial. Clearly, Alice can enforce the honest pair of states $(\rho_{AB}^{0}, \rho_{AB}^{1})$, which leads to $p_{0} = p_{1} = 1$ (as long as the protocol is correct), but this cannot be achieved using compatible strategies (if it was possible, the protocol would be completely insecure).

Note that this formulation is equivalent to a game in which Alice employs some generic strategy until the commitment point and is \emph{immediately after} challenged to open either $\cval = 0$ or $\cval = 1$ chosen uniformly at random.\footnote{Note that this challenge must come from an external source like the referee. It is not convincing for Alice to unveil a bit chosen by herself.} A strategy that wins such a game with probability $\pwin$ is equivalent to a cheating strategy which achieves $p_{0} + p_{1} = 2 \pwin$. Thinking of the cheating scenario as a game often allows a more intuitive understanding of what the dishonest party is trying to achieve.

Note that it might seem natural to demand that there is \emph{only one} value that Alice might successfully open. However, this requirement is too strong because we cannot prevent Alice from committing to a random bit which leads to $p_{0} = p_{1} = \frac{1}{2}$. This idea can be developed further to produce a reachable notion of security~\cite{damgard05, wehner08a}, which has the advantage of being \emph{composable}, i.e.~security is guaranteed even if the primitive is used as a subroutine in a longer procedure \cite{canetti01}. However, it is known that this stronger notion of security cannot be reached in the relativistic setting (see Appendix \ref{app:classical-certification} for details). Therefore, in this thesis we only consider the weaker, non-composable Definition \ref{df:binding}.
\subsection{The Mayers-Lo-Chau impossibility result}
\label{sec:impossibility}
As explained in Section \ref{sec:quantum-cryptography}, in the early 1990s a significant effort went into investigating whether various two-party tasks can be solved using quantum protocols. Unfortunately, most of such tasks were ultimately shown to be impossible even in the quantum world. In this section we sketch out the impossibility proof for quantum bit commitment discovered independently by Mayers \cite{mayers97} and Lo and Chau \cite{lo97}.

Before going into the details of the quantum no-go argument let us sketch out the classical one.
Consider a protocol which is correct and hiding and suppose Alice follows the honest strategy for $\cval = 0$ until the commitment point. Since the protocol is correct Alice can proceed with the honest strategy and successfully unveil $\cval = 0$. What if she wants to cheat and unveil $\cval = 1$ instead? The protocol is hiding, which means that Bob does not know the value of her commitment. This implies that there \emph{exists} a particular strategy for Alice after the commitment point that will make him accept $\cval = 1$ (otherwise he could eliminate this possibility). The existence of such a strategy implies that Alice can unveil either value (with certainty), i.e.~achieve $p_{0} + p_{1} = 2$. This argument can be extended to show that in every classical protocol (which is correct) at least one of the parties can cheat with certainty. Such arguments are formalised using the notion of a transcript, which is simply the complete list of messages exchanged by Alice and Bob. It is clear that in the classical world each party can produce a transcript by copying all the messages to a private, auxiliary register. In the quantum world one cannot simply copy messages so it is not meaningful to talk about the transcript of a quantum protocol. That is why this simple classical argument does not apply to quantum protocols.

The quantum impossibility argument hinges on the fact that, without loss of generality, we can assume that Alice and Bob keep the entire quantum state pure until the commitment point. Since Alice and Bob share no correlations (these would count as a resource) and private randomness can be purified locally, we can assume that Alice and Bob start in a pure, product state. Then, all the measurements can be performed coherently (also known as \emph{keeping the measurements quantum}, see Section 1.3 of Ref.~\cite{colbeck06} for an explanation). This requires us to replace all the classical channels in the protocol by quantum channels. Since this might open up new, inherently quantum cheating strategies we must instruct each party to measure (coherently) each incoming message. This gives rise to a new (but equivalent from the security point of view) protocol, in which all the interactions happen at the quantum level and at the commitment point Alice and Bob share a pure state, which we denote by $\ket{\psi^{d}}_{AB}$.\footnote{This is one way of dealing with classical communication known as the \emph{indirect approach}. For more details on the indirect approach and also the alternative \emph{direct approach} see Ref.~\cite{brassard97}.} According to Definition \ref{df:hiding} the protocol is hiding if $\rho_{B}^{0} = \rho_{B}^{1}$. Unfortunately, due to Uhlmann's theorem \cite{uhlmann76} this implies that there exists a unitary $U_{A}$ acting on subsystem $A$ alone such that
\begin{equation*}
( U_{A} \otimes \mathbb{1}_{B} ) \ket{\psi^{0}}_{AB} = \ket{\psi^{1}}_{AB}.
\end{equation*}
In other words, Alice can switch between the two honest states by acting on her system alone, which allows her to unveil either bit (with certainty) and, therefore, renders the protocol completely insecure. Formally the impossibility can be stated in the following manner.
\begin{thm}
Any protocol which is perfectly correct and hiding allows Alice to cheat perfectly. In other words, there exists a cheating strategy for Alice which achieves $p_{0} = p_{1} = 1$.
\end{thm}

This simple argument applies only to the exact case where $\rho_{B}^{0} = \rho_{B}^{1}$ but it is easy to show that if $\rho_{B}^{0}$ is close to $\rho_{B}^{1}$ (i.e.~Bob finds it difficult to distinguish them) then Alice can cheat with high probability and trade-offs based on this idea were derived by Spekkens and Rudolph \cite{spekkens01}. Optimal bounds on quantum bit commitment have been found by Chailloux and Kerenidis \cite{chailloux11}.

While quantum mechanics does not allow for perfect bit commitment, it still beats classical protocols. As mentioned before every classical protocol is completely insecure against one of the parties. Quantum protocols, on the other hand, allow us to achieve some intermediate points, in which security is in some sense ``distributed'' between Alice and Bob.
\end{proof}
\chapter{Non-communicating models}
\label{chap:nc-models}
\emph{This chapter (excluding Section 3.1) is based on}
\paperA
It is well-known that interrogating suspects is more fruitful if they cannot communicate during the process, simply because coming up with two reasonable stories is more difficult than with one.\footnote{This only holds if the interrogation is an interactive and unpredictable process. If the suspects can predict all the possible questions in advance, nothing prohibits them from producing consistent answers.} This intuition was first made rigorous in the context of complexity theory but similar features can be seen in cryptography, in which non-communicating models allow us to implement primitives which would be otherwise forbidden. Care has to be taken, however, since the primitive implemented in the non-communicating model is usually subtly different (weaker) than the original one and, hence, might not be suitable for all applications.

Non-communicating models can be formalised using the concept of \emph{agents}. For example, if in the standard protocol Alice interacts with Bob, in the multiagent variant she might be required to interact with two distinct agents of Bob. In this case we think of Bob as being the main party, who decides on the strategy and briefs all his agents beforehand but steps back as soon as the protocol begins. During the protocol the agents follow the instructions but are not allowed to communicate with each other (or the main party). In this chapter we adopt the convention that if Alice (Bob) only needs to delegate one agent we do not make an explicit distinction between the main party and the agent. If more agents are involved we make a distinction by calling the agents \alice{1} and \alice{2} (\bob{1} and \bob{2}).

While it is entirely possible to discuss multiagent commitment schemes without any reference to complexity theory, we feel it is beneficial to explain where the idea of employing multiple agents originally came from. We explain how such models arose in the context of \index{interactive proof} \emph{interactive proofs} \cite{babai85, goldwasser85, benor88} and we discuss connections between \emph{zero-knowledge proofs} and cryptography \cite{goldreich86, benor88, goldreich08}. We also consider a multiagent variant of oblivious transfer and present a (trivial) protocol that implements it. This serves as a useful example to demonstrate that the functionality implemented in the multiagent scenario might differ significantly from the original primitive.

When it comes to analysing multiagent commitment schemes one of the major conceptual challenges is to establish a framework which encompasses all interesting schemes without being overly complicated. While for a particular scheme it is fairly straightforward to come up with an ad hoc treatment and security definition (which is often left implicit), a general framework is necessary for comparing various schemes. We propose such a framework based largely on results published in Ref.~\cite{kaniewski13}.

\noindent \textbf{Outline:} We start by explaining the concept of an interactive proof system and why employing multiple agents makes the model significantly more powerful. Then we discuss how such models can be used in the context of cryptography using \emph{distributed oblivious transfer} \cite{naor00} as an example. The last section of the chapter is dedicated to multiagent commitment schemes. We explain what kind of arrangements of agents are useful for the purpose of commitment schemes and propose how to quantify security in these new, multiagent models.
\section{Interactive proof systems}
The purpose of this section is to give a brief, non-technical introduction to the field of interactive proof systems. We are particularly interested in multiprover models\footnote{In the context of complexity theory, it is always the prover (one of the involved parties) who is required to delegate multiple agents.}, zero-knowledge proofs and their relation to cryptography. The notes of Oded Goldreich \cite{goldreich08} provide a thorough and accessible introduction to interactive proof systems. Readers interested in the early history of interactive proofs are referred to a wonderfully entertaining essay by L\'{a}szl\'{o} Babai \cite{babai90}.

Let us start with a motivating story. Suppose that Bob wants to be convinced that a certain statement is true. His own computational powers are limited (so he cannot simply verify the statement on his own) but he has access to an all-powerful computer called Alice. Unfortunately, Alice is a malicious machine and she will always assert that the statement is true (even if it is actually false). To make things worse she will even provide an incorrect proof, hoping that Bob will fall for it. Bob wants to interact with Alice in such a way that if she is honest and the proof is correct he accepts it, but if she misbehaves and outputs an incorrect proof her misconduct should be noticed. On a more fundamental level, we are asking whether it is possible to verify computations that are, by assumption, beyond our own capabilities.

Since we always assume that Alice knows the statement that Bob wants to be convinced of, she might simply produce a proof and send it to Bob. This coincides with the way we usually think of proofs as static, non-interactive objects, e.g.~something that can be published in a book. This is a valid solution but we know from everyday experience that the process of learning and understanding is often facilitated by the possibility of asking questions and receiving answers. The same phenomenon occurs in case of proofs and gives rise to the concept of an \emph{interactive proof}, which cannot be published in a book but can be explained in class. It turns out that allowing Alice and Bob to interact might significantly simplify certain proofs. Moreover, as counter-intuitive as it sounds it allows Alice to prove a statement without revealing anything about the actual proof. In complexity theory the setting described above is known as an \emph{interactive proof system}, with Alice being the \emph{prover} and Bob the \emph{verifier}, and was introduced independently by Babai \cite{babai85} and Goldwasser, Micali and Rackoff \cite{goldwasser85}.

To demonstrate the advantage of interactive proofs we need some concrete statements, which we will then construct (interactive) proofs for. It turns out that graph theory is a good source of intuitive and interesting examples. Given two graphs $G_{0}$ and $G_{1}$ we say that they are \emph{isomorphic} if we can map $G_{0}$ onto $G_{1}$ by simply relabelling the vertices. The problem of deciding whether two graphs are isomorphic is known as the \index{graph isomorphism} \emph{graph isomorphism problem} and we do not know how to solve it efficiently.\footnote{In fact, it is one of the few interesting problems that seem to sit in the middle between the ``easy problems'' (i.e.~the ones that can be solved efficiently) and the really hard ones (i.e.~the ones that we do not think can be solved efficiently, like the travelling salesman problem). Interested readers are encouraged to read a survey by Scott Aaronson on the distinction between the easy and the hard problems and how they relate to physical reality \cite{aaronson05}.}

This is exactly the setting we want to look at: Bob has two graphs $G_{0}$ and $G_{1}$ and he wants to know whether they are isomorphic. Since he is unable to solve this problem on his own, he asks Alice for help. If the graphs are isomorphic Alice simply sends Bob a valid relabelling (of vertices) and he verifies that it indeed maps $G_{0}$ onto $G_{1}$. This is an efficient, non-interactive proof. The problem becomes a bit more complex if the graphs are \emph{not} isomorphic. Alice could, of course, write down all possible relabellings and show that none of them achieves the goal but this does not really save Bob any computational effort. Verifying such a brute-force ``proof'' is not any easier than producing it. As of today, we do not know how to (generically) construct a non-interactive, efficient proof that two graphs are not isomorphic.

On the other hand, a beautifully simple solution exists if Alice and Bob are allowed to interact \cite{goldreich86}. Bob picks a random bit $b \in \{0, 1\}$, applies a random relabelling to $G_{b}$ and sends it to Alice, whose task is to guess $b$. If the graphs are not isomorphic then Alice can always correctly identify the original graph (she is all-powerful so she can simply try all possible relabellings) and successfully answer Bob's challenge. On the other hand, if the graphs are isomorphic then by applying a random relabelling Bob made the message that Alice receives independent of $b$.\footnote{More precisely, the probability distributions over graphs sent to Alice are identical for $b = 0$ and $b = 1$.} Hence, her probability of guessing $b$ correctly is exactly $\frac{1}{2}$. If we repeat this game multiple times the probability of correctly answering all the challenges decays exponentially. If Alice can reliably tell the two graphs apart, then Bob should be convinced that the two graphs are not isomorphic (except for exponentially small probability).

The connection between interactive proofs and cryptography appears when we impose an additional requirement that the proof should carry no information beyond \emph{the validity of the statement}. This concept introduced by Goldwasser, Micali and Wigderson goes under the name of a \index{zero-knowledge proof} \emph{zero-knowledge proof} \cite{goldreich86}. Note that this formulation sounds suspiciously similar to our initial motivation for commitment schemes in Section \ref{sec:cryptography}, in which Alice wants to prove to Bob that she knows something without revealing any additional information.

Let us go back to the problem of proving that two graphs are isomorphic. The obvious solution presented before is to provide a valid relabelling explicitly. Unfortunately, this reveals much more information than necessary: we want to prove the \emph{existence} of a relabelling rather to exhibit a particular one. Can we prove that two graphs are isomorphic in a zero-knowledge manner? Clearly, this cannot be done using a static proof but adding interactions helps as demonstrated below.\footnote{Requiring a static proof to be zero-knowledge reduces it to a trivial assertion ``this statement is true'', which the verifier will not find too convincing.}

Again, we assume that Alice knows both graphs $G_{0}$ and $G_{1}$. She applies a random relabelling to $G_{0}$ and sends it to Bob as $H$. Bob chooses a random bit $b$ and challenges Alice to reveal the relabelling that maps $H$ onto $G_{b}$. Clearly, if $G_{0}$ and $G_{1}$ are isomorphic Alice can always produce a valid answer. However, if they are not, she can find a valid answer to at most one of the two challenges (regardless of how she chose $H$). Again, by repeating this test a number of times Bob can be convinced that the two graphs are indeed isomorphic. Why is this proof zero-knowledge? This is clear if Bob acts honestly (i.e.~he chooses the bit $b$ at random), because then at the end of the protocol we can see $H$ as a random relabelling of $G_{b}$. This is something that Bob could have generated himself, hence, he has obtained no extra knowledge. The situation becomes more complex if we consider malicious Bob who might choose $b$ based on the graph $H$ he receives. A rigorous proof that this protocol remains zero-knowledge in this adversarial scenario is significantly more involved \cite{goldreich86}.

Once we know how to prove that two graphs are isomorphic in a zero-knowledge manner it is natural to ask what other statements can be proven in such a way. If we are happy to accept an extra computational assumption then it turns out that any statement that can be proven using a static proof can also be proven in a zero-knowledge fashion \cite{goldreich86}. Ben-Or, Goldwasser, Kilian and Wigderson realised that the computational assumption can be dropped by introducing an extra prover (who is not allowed to communicate with the first one during the protocol) and, in fact, their solution is quite simple \cite{benor88}. Before the protocol begins the provers generate a long, random string. During the protocol all the work is done by \prover{1}, while \prover{2} simply outputs segments of the shared randomness (randomly chosen by the verifier). Essentially, the goal is to convince the verifier that \prover{1} is using genuine, pre-existing randomness rather than generating (faking?) it on the spot. As a crucial step in the proof they propose a bit commitment scheme in the two-prover model and prove its security. They also present a construction for a particular flavour of distributed oblivious transfer.

The observation that computational security in Ref.~\cite{goldreich86} is used to provide commitment-like functionality is made explicit in Construction 2.4 from Goldreich's lecture notes \cite{goldreich08}, in which a generic zero-knowledge proof is constructed under the assumption that commitment functionality is available for free. This shows that the primitive of bit commitment establishes a connection between zero-knowledge proofs and multiprover models.

Multiprover models were introduced to remove computational assumptions in the context of zero-knowledge proofs but have since become an independent object of study in complexity theory. In fact, they have been shown to be significantly more powerful than the single-prover class \cite{fortnow94, babai91}. The quantum versions of these complexity classes have been proposed by allowing the provers to share entanglement either with \cite{kobayashi02} or without \cite{cleve04} quantum communication (with the verifier). The two classes have recently been shown to be equal \cite{reichardt13}.
\section{Applications in cryptography}
\label{sec:applications-in-cryptography}
We have seen that introducing multiple provers is useful in the context of interactive proofs and now we would like to see what can be gained in cryptography. Here, we consider a simple example and our main goal is to convince the reader that such models are not subject to the usual impossibility arguments and explain why that is the case.

Let us go back to the primitive of oblivious transfer explained by the example of an online movie service in Section \ref{sec:cryptography}. Alice has paid for one movie and wants to download it without revealing her choice to the company (Bob). In spirit of the previous section we consider a multiagent model in which Bob is required to delegate two agents, who interact with Alice but cannot communicate with each other. In the original primitive Bob should never find out which movie Alice chose to download. However, in the multiprover setting an interesting question arises: what happens to the agents after the protocol ends? Since it is hard to envision keeping them isolated until the end of time, we may first lean towards a model in which they are allowed to communicate after the protocol is finished. However, as pointed out in Appendix A.2 of Ref.~\cite{benor88}~in that case secure oblivious transfer is not possible. Temporary communication constraints are not sufficient as the standard no-go argument applies whenever the agents meet: if their combined knowledge does not allow them to deduce which message was retrieved, both messages must have leaked out to Alice.\footnote{It is possible to retain some security if we assume that the amount of communication between the provers is bounded \cite{benor88}.}

This encourages us to investigate the other extreme case in which the provers are not allowed to ever communicate again.\footnote{\label{footnote:split-forever}Note, however, a certain conceptual weakness of this model. The only manner in which Alice can ensure that the two agents never communicate again is to keep at least one of them isolated forever. But in that case it should not matter if that particular agent finds out which movie she wants to watch, hence, no cryptography is necessary. Note that keeping an agent isolated forever sounds morally wrong if we think of him as a human being but becomes more socially acceptable if we replace him by a disposable electronic device. Unfortunately, while in case of a human agent the assumption that he will only allow Alice to retrieve one movie is natural (an agent is capable of protecting the integrity of his laboratory), in case of an inanimate device this becomes essentially a technological assumption. Such devices have been proposed under the name of \emph{one-time memories} \cite{goldwasser08}.} Such a primitive is known as \index{distributed oblivious transfer} \emph{distributed oblivious transfer} \cite{naor00} (or \emph{symmetrically-private information retrieval} if we focus on the limit of a large number of messages \cite{gertner00, malkin00, gasarch04, kerenidis04}) and it admits the following simple solution based on secret sharing\footnote{We only use the simplest type of secret sharing in which an unknown string $x$ is split up into two shares: $s_{1} = x \oplus r$ and $s_{2} = r$, where $r$ is a string chosen uniformly at random. The two shares together allow us to reconstruct the string but it is easy to verify that having just one share conveys no information about $x$.}. For simplicity let us consider the case of Bob having only two messages $m_{0}, m_{1} \in \bs{n}$.
\phantomsection
\distributedOT
This protocol is secure because both \bob{1} and \bob{2} see Alice asking for a random message so neither of them obtains any knowledge about her choice. Moreover, it is easy to verify that no information is leaked about the message that Alice did not choose. Hence, this constitutes a secure multiagent implementation of oblivious transfer. However, as discussed in Section \ref{sec:simple-relativistic-protocols} we do not know how to usefully implement this protocol in a relativistic setting.

Why does such a protocol evade the standard no-go result\footnote{The intuition behind the standard no-go argument in the classical case is as follows. If at the end of the protocol Bob cannot tell which message Alice has decided to retrieve it must mean that through the interaction he has leaked both of them. In a world split only between Alice and Bob whatever Bob leaks becomes immediately available to Alice, which implies that she must have learnt both messages.}? It is important to realise that the no-go implicitly assumes that the whole world is split between Alice and Bob and there are no third parties: Alice can only be sure about the systems in her possession and everything else is fully controlled by Bob (this is equivalent to the assumption that the state shared between Alice and Bob is pure). In the multiagent model this must be modified as the state is now shared between Alice, \bob{1} and \bob{2}. Since \bob{1} and \bob{2} cannot communicate (their knowledge cannot be combined), the usual impossibility argument does not apply.
\section{Commitment schemes}
\label{sec:commitment-schemes}
The original zero-knowledge interactive proof proposed by Ben-Or et al.~relies on a multiagent bit commitment scheme \cite{benor88}. The proposed scheme is correct, hiding and $\varepsilon$-binding for $\varepsilon = \frac{1}{2}$. On the other hand, in Section \ref{sec:impossibility} we have argued that in the standard two-party model such schemes cannot exist.

Again, we must realise that the standard notion of a commitment scheme implicitly assumes that the protocol is executed by two parties only (no additional agents) and the impossibility result only holds for that case. Multiagent schemes require new security definitions and in general the usual limitations (proven in the standard two-party model) will not apply. While it is usually clear how security definitions should be extended to multiagent protocols, it is important to do it explicitly, as it helps to understand the exact nature of the primitives under consideration.

Requiring a party to delegate agents who are not allowed to communicate (which we also refer to as \emph{splitting}) restricts the range of actions available to that party. Clearly, this might only be useful for security purposes if communication constraints apply during the relevant party's ``turn to cheat''. According to the phase structure discussed in Section \ref{sec:bit-commitment}, this leads to either splitting Bob \emph{until the opening point} (which we call $\alpha$-split) or splitting Alice \emph{from the commitment point} ($\beta$-split).\footnote{Note that these are the \emph{minimal} splits, i.e.~they are necessary to evade the impossibility result. Later we will consider models which impose more than the minimal splits.} The two different splits are shown in Fig.~\ref{fig:splits}. Since we are interested in the fundamental possibilities and limitations, we will discuss protocols for both splits (and we will find that the resulting bit commitment primitives exhibit subtle differences).
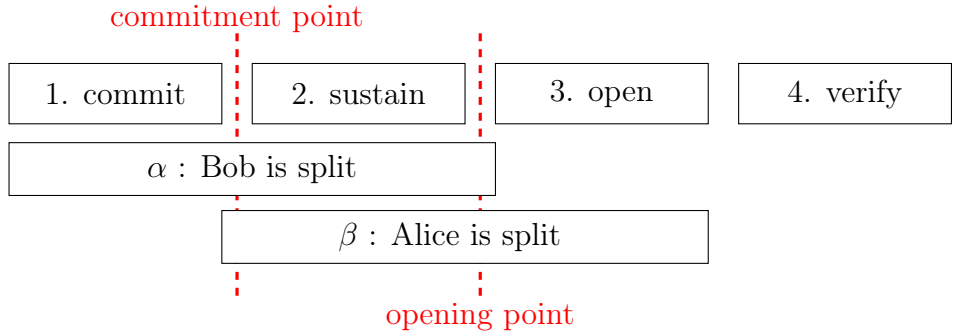
\begin{figure}
\centering
\begin{tikzpicture}[scale=1]
\phases{-2.36}
\draw[fill=white] (0, -0.25) rectangle (6.4, -0.95);
\node (alpha) at (3.2, -0.6) {$\alpha$ : Bob is split};
\draw[fill=white] (2.8, -1.15) rectangle (9.2, -1.85);
\node (beta) at (5.8, -1.5) {$\beta$ : Alice is split};
\end{tikzpicture}
\caption{The two types of minimal splits that are potentially useful for the purpose of commitment schemes.}
\label{fig:splits}
\end{figure}

Before proposing particular protocols, let us first adapt the security definitions to such multiagent scenarios. Since security requirements state what the dishonest party should not be able to achieve, it is clear that we need a new definition of the hiding property in the $\alpha$-split and a new definition of the binding property in the $\beta$-split.

A commitment scheme is hiding if at the opening point Bob remains ignorant about Alice's commitment. In the $\alpha$-split model at the opening point there are two agents \bob{1} and \bob{2}, who are not allowed to communicate. Similarly to the case of distributed oblivious transfer if we require that even their \emph{combined} knowledge does not allow them to learn the commitment, then the standard no-go applies (i.e.~Alice can cheat with certainty). However, we can instead require that \emph{neither} \bob{1} \emph{nor} \bob{2} can guess the commitment, which leads to a natural condition closely resembling Definition \ref{df:hiding}.
\begin{df}
\label{df:hiding-two}
A multiagent bit commitment protocol is \textbf{hiding} if all pairs of states $(\sigma_{ A B_{1} B_{2} }^{0}, \sigma_{ A B_{1} B_{2} }^{1})$ that \bob{1} and \bob{2} can enforce at the opening point satisfy
\begin{equation*}
\sigma_{B_{1}}^{0} = \sigma_{B_{1}}^{1} \nbox{and} \sigma_{B_{2}}^{0} = \sigma_{B_{2}}^{1},
\end{equation*}
where $\sigma_{B_{c}}^{d} = \tr_{ A B_{1 - c} } \sigma_{A B_{1} B_{2}}^{d}$.
\end{df}
\noindent This definition means that neither of the agents has learnt anything about Alice's commitment but it says nothing about their combined knowledge. This naturally leads to the following protocol based on secret sharing. For bit commitment protocols we adopt the convention that $a$ ($b$) denotes to the private randomness of Alice (Bob) while $x$ ($y$) are the messages sent during the protocol by Alice (Bob). Note that the labels $x$ and $y$ are used regardless of whether the parties are honest or not.
\phantomsection
\secretsharingnc
This protocol is so simple that neither party can even attempt to cheat! In the commit phase whatever combination of messages Alice decides to produce, it will correspond to an honest commitment, which she has no influence over once the messages are received by \bob{1} and \bob{2} (i.e.~this is exactly the commitment point of the protocol). On the other hand, if Alice is honest then both \bob{1} and \bob{2} receive a uniform bit (regardless of the value of $\cval$) so the protocol is hiding according to Definition~\ref{df:hiding-two}.

A potential drawback of this protocol is that in certain scenarios, we might want to give Alice the right to \emph{refuse} opening a commitment. Clearly, in this protocol this could only be done if \bob{1} and \bob{2} were never allowed to communicate again, which is a problematic assumption (cf.~footnote \ref{footnote:split-forever} in Section \ref{sec:applications-in-cryptography}).

It turns out that this feature (of allowing Alice to keep the commitment value hidden forever) is much easier to achieve in the $\beta$-split model. As explained in Section \ref{sec:formal-definition} security for honest Bob can be quantified through a game in which Alice performs some generic strategy until the commitment point and is then challenged (by an external referee) to open either $\cval = 0$ or $\cval = 1$ with equal probabilities. The commitment is considered secure if she is not able to win this game with probability significantly exceeding $\frac{1}{2}$ (an honest commitment achieves at least $\frac{1}{2}$ as long as the scheme is correct). In case of \alice{1} and \alice{2} performing the open phase in a non-communicating fashion, we need to specify who actually receives the challenge. Is it both \alice{1} and \alice{2} or just, say, \alice{1}? The former scenario might arise if \alice{1} and \alice{2} despite not being able to communicate with each other might still receive messages from an external source (it might be easier to isolate the agents from each other than from the external world). For example, what they attempt to unveil might depend on the latest stock market news. It turns out that this distinction is important and gives rise to two different models, which we call \index{global command model} \emph{global} and \index{local command model} \emph{local command}, respectively. This choice does not affect \alice{1}: in both cases her cheating behaviour is determined by two compatible strategies\footnote{See Section~\ref{sec:formal-definition} for an explanation what it means for two strategies to be compatible.}, just like in the standard single-agent model. However, the allowed behaviour of \alice{2} is affected. In the global command model she chooses two compatible strategies but in the local command she may only choose one (since she never actually finds out what they are trying to unveil).
\begin{df}
\label{df:binding-two}
Let $(\sigma_{AB}^{0}, \sigma_{AB}^{1})$ be a pair of states that \alice{1} and \alice{2} can enforce at the opening point given that \alice{1} employs two compatible strategies and \alice{2} employs
\begin{itemize}
\item \textbf{local command:} only one strategy (regardless of the value of $\cval$).
\item \textbf{global command:} two compatible strategies.
\end{itemize}
Let $(\Phi^{\textnormal{cheat}, 0}_{A \to P}, \Phi^{\textnormal{cheat}, 1}_{A \to P})$ be opening maps of the form
\begin{itemize}
\item \textbf{local command:} $\Phi^{\textnormal{cheat}, \cval}_{A \to P} = \Phi^{\textnormal{cheat}, \cval}_{A_{1} \to P_{1}} \otimes \Phi^{\textnormal{cheat}}_{A_{2} \to P_{2}}$.
\item \textbf{global command:} $\Phi^{\textnormal{cheat}, \cval}_{A \to P} = \Phi^{\textnormal{cheat}, \cval}_{A_{1} \to P_{1}} \otimes \Phi^{\textnormal{cheat}, \cval}_{A_{2} \to P_{2}}$.
\end{itemize}
Define $p_{\cval}$ to be the probability that Alice's attempt to unveil $\cval$ is accepted by Bob
\begin{equation*}
p_{\cval} = \tr \big( M_{\textnormal{accept}} \big[ \Phi^{\textnormal{cheat}, \cval}_{A \to P} (\sigma_{AB}^{\cval}) \big] \big).
\end{equation*}
A multiagent bit commitment protocol is called \textbf{$\varepsilon$-binding} in the \textbf{local/global command} model if for all states $(\sigma_{AB}^{0}, \sigma_{AB}^{1})$ and for all opening maps $(\Phi^{\textnormal{cheat}, 0}_{A \to P}, \Phi^{\textnormal{cheat}, 1}_{A \to P})$ allowed by the model we have
\begin{equation*}
p_{0} + p_{1} \leq 1 + \varepsilon.
\end{equation*}
\end{df}
\noindent To see that the distinction between the two models is important, note that the local command model allows for the following trivial bit commitment protocol.
\phantomsection
\begin{prot}{3}{Bit commitment in the local command model}
\label{prot:local-command}
\begin{enumerate}
\item (commit) Alice sends $\cval$ to \alice{1} and \alice{2}.
\item (open) \alice{1} sends $x_{1} = d$ and \alice{2} sends $x_{2} = d$ to Bob.
\item (verify) Bob verifies that $x_{1} = x_{2}$.
\end{enumerate}
\end{prot}
In the local command model dishonest \alice{1} receives the challenge and knows what they are trying to unveil but \alice{2} does not. Since the value they are challenged to unveil is chosen uniformly at random, she cannot guess it too well. In fact, the best she can do is to always output the same value, which essentially corresponds to an honest commitment. Here, security is a direct consequence of the fact that \alice{2} does not know what she is supposed to be unveiling. It is clear that in this protocol Alice is committed as soon as communication between \alice{1} and \alice{2} is forbidden. Protocol \hyperref[prot:local-command]{3} is secure in the local command model but it is easy to see that it is completely insecure in the more stringent global command model. Does there exist a protocol that remains secure in the global command model?

It turns out that no classical protocol in the $\beta$-split model can meet this requirement and the argument is similar to the standard no-go for bit commitment. Let us assume that the protocol is correct and hiding, i.e.~it allows \alice{1} and \alice{2} to make an honest commitment, which until the opening point leaks no information to Bob and the opening is always accepted. Suppose \alice{1} and \alice{2} honestly commit to $\cval = 0$. Clearly, unveiling $\cval = 0$ in the open phase is easy but since Bob cannot rule out Alice's commitment to $\cval = 1$, there must also exist a sequence of messages from \alice{1} and \alice{2} which will make him accept $\cval = 1$. Since now both of them know what they are trying to unveil, this strategy can be implemented and the protocol is completely insecure.

The intuitive argument presented above makes a subtle assumption that all information that Alice and Bob exchange in the commit phase is available to \emph{both} \alice{1} and \alice{2} in the open phase. There are two ways of invalidating this assumption.
\begin{enumerate}
\item Make the information that Bob shares with Alice in the commit phase quantum. Then, by the no-cloning theorem \cite{wootters82} it will not (in general) be possible for both \alice{1} and \alice{2} to have an exact copy.
\item Strengthen the communication constraint, i.e.~require that only \alice{1} takes part in the commit phase while \alice{2} is already isolated.
\end{enumerate}
The first solution was explored under the name of \emph{quantum relativistic bit commitment} by Kent \cite{kent11, kent12} and a rigorous security analysis of the latter protocol (including experimental imperfections like noise and losses) can be found in Chapter \ref{chap:transmitting} of this thesis. Moreover, two new protocols based on different features of quantum theory were recently proposed \cite{adlam15a, adlam15b}. The second solution corresponds to the original proposal of Ben-Or et al.~\cite{benor88}, further developed in Refs.~\cite{simard07, crepeau11}. Since the protocol is simple and intuitive we present it here but we defer rigorous security analysis until Chapter \ref{chap:multiround}.

The bit commitment scheme proposed in Ref.~\cite{benor88} is sufficient from the complexity point of view but it is not the most convenient formulation for cryptographic purposes. As described in Section \ref{sec:cryptographic-protocols} in cryptography it is convenient to have a family of protocols with a parameter $n \in \amsbb{N}$ which can be chosen to guarantee the desired level of security. Such a protocol was presented under the name \texttt{simplified-BGKW} (\texttt{sBGKW}) in Refs.~\cite{simard07, crepeau11}. In this case \alice{1} and \alice{2} are not allowed to communicate throughout the entire protocol. Let $a$ and $b$ be $n$-bit strings chosen uniformly at random by Alice and Bob, respectively.
\phantomsection
\label{prot:sBGKW-nc}
\sBGKWnc
(The bit-by-string multiplication was defined in Section \ref{sec:strings-of-bits}.) In a protocol which requires \alice{1} and \alice{2} to be already isolated in the commit phase, it becomes important whether the value of the commitment must be known to both or just one of them. In this particular case \alice{1} can single-handedly decide on the value of the commitment.\footnote{It is interesting to note that \alice{2} (the only agent of Alice who takes part in the open phase) does not need to know the value she is unveiling.} Correctness of the protocol is easy to verify while the hiding property is a simple consequence of the fact that the message that \alice{1} sends to Bob in the commit phase is ``one-time padded'' with a uniformly random string. On the other hand, we intuitively see that the binding property is a direct consequence of the communication constraint between \alice{1} and \alice{2} (cheating would be easy if \alice{2} knew $b$). Moreover, note that in this protocol \alice{2} can simply refuse to take part in the open phase and then the commitment made by \alice{1} (if she indeed followed the protocol in the commit phase) will remain secret forever. In this aspect, this protocol differs significantly from Protocol \hyperref[prot:secret-sharing-nc]{2}. This difference will have quite interesting consequences when we consider relativistic variants of these protocols in Section \ref{sec:simple-relativistic-protocols}.
\chapter{Relativistic protocols}
\label{chap:relativistic-protocols}
\emph{This chapter is based on}
\paperA
In Chapter \ref{chap:nc-models} we saw that communication constraints are useful in a variety of situations. In particular, they enable us to implement cryptographic primitives which are not possible otherwise. Non-communicating models are widely studied in computer science but unless one can justify such communication constraints, they should be treated on equal footing with other technological limitations and we already know that assumptions concerning computational power or storage capabilities make two-party cryptography possible.

How could Alice possibly ensure that \bob{1} and \bob{2} cannot communicate? Well, in principle she could lock each of them up in separate rooms. First of all, \bob{1} and \bob{2} might not be happy with such a solution but even if they are, how does she ensure that the rooms are perfectly shielded from the outside world? Does this not lead to yet another technological assumption?

One way out of the vicious circle of technological assumptions is relativity. Imposing an upper bound on the speed at which information spreads implies that communication between any two distinct locations incurs some minimal delay (proportional to the distance between them). This gives rise to temporary communication constraints, which rely solely on the correctness of the theory of relativity. It is worth pointing out that this is the \emph{only} feature of relativity used in relativistic cryptography.

It is important to stress the difference between non-communicating and relativistic protocols. In a non-communicating protocol (like the ones discussed in Chapter \ref{chap:nc-models}) we first explicitly specify communication constraints and then the interactions between the agents. On the other hand, in a relativistic protocol one cannot simply impose such arbitrary communication constraints. Instead, they must \emph{arise} from the arrangement of agents in space and appropriately chosen timing of the protocol. Therefore, the description of the protocol must specify where and when each interaction takes place and then the resulting communication constraints may be used to prove security. Note that not every combination of communication constraints might be achieved in this model, e.g.~if Alice simultaneously communicates with \bob{1} and \bob{2}, they must be able to communicate too.

To the best of our knowledge, the idea of combining relativity and quantum mechanics for cryptographic purposes first appeared in writing in a summary article by Gilles Brassard and Claude Cr\'{e}peau \cite{brassard96, crepeau96}, who attributed it to Louis Salvail. The foundations were laid by Adrian Kent (first relativistic commitment schemes \cite{kent99, kent05}) and Roger Colbeck (proposals for various flavours of coin-tossing and impossibility results for secure two-party computation \cite{colbeck06, colbeck06a, colbeck07a}). More recently, significant interest was sparked by position-verification schemes \cite{kent11a, buhrman14, tomamichel13, unruh14, ribeiro15}. A relativistic quantum key distribution scheme has also been proposed \cite{radchenko14}.

The defining feature of relativistic cryptography is the requirement that different phases of the protocol take place at distinct locations. With the appropriate choice of timing this imposes communication constraints, which are no longer due to technological limitations but result directly from the physical theory (security of such schemes is often advertised to be ``guaranteed by the laws of physics''). Unfortunately, this desirable feature comes at a price. Communication constraints guaranteed by relativity are \emph{temporary}, which means that we must leave the neat and tidy world of non-communicating models, in which we are free to impose arbitrary communication constraints, and enter the complex world of relativistic models, in which communication is only \emph{delayed} rather than \emph{forbidden}.\footnote{It is useful to contrast this aspect of relativistic models with the non-communicating case. In the non-communicating world we can choose whether or not the agents are allowed to communicate once the protocol is finished and both options are equally valid. In the relativistic setting there is only one natural solution, which lies somewhere in between the two extremes: the agents can communicate but their communication is not instantaneous.} The analysis of such scenarios becomes significantly more involved if the agents are required to handle quantum information (or when dishonest parties use quantum devices to cheat in a classical protocol). In fact, this has led to interesting and fundamental questions about how to \emph{define} the location of a quantum system. Consider the process of teleportation \cite{bennett93}, in which a quantum state $\rho$ located initially at one place is reconstructed at another place by using entanglement (pre-shared between the two locations) and sending classical data. Interestingly enough, during this procedure there is a period of time when the state seemingly ``ceases to exist'', in the sense that there is \emph{no location} at which any information about $\rho$ can be \emph{immediately} extracted. Where is the state then? This counter-intuitive phenomenon is captured operationally through the task of \emph{summoning} recently investigated by Kent \cite{kent13, kent12a}, Hayden and May \cite{hayden12}.

\noindent \textbf{Outline:} In this chapter we first show how some of the protocols discussed in Chapter \ref{chap:nc-models} can be implemented in the relativistic setting and what limitations such a ``translation'' brings about. We then present an explicit procedure for mapping a relativistic protocol onto a communication-constrained model. We show that in the fully classical setting communication-constrained models can be further mapped onto non-communicating models and we discuss why such a simplifying reduction cannot be done when quantum information is involved. Finally, we discuss the power and limitations of relativistic cryptography.
\section{Non-communicating schemes in the relativistic setting}
\label{sec:simple-relativistic-protocols}
We start by considering how some of the non-communicating schemes discussed in Chapter \ref{chap:nc-models} can be implemented in the relativistic setting. Since the communication constraints imposed by relativity are temporary, the resulting commitment schemes cannot guarantee everlasting security.\footnote{Unless the parties keep communicating, see Section \ref{sec:limitations-relativistic-cryptography} for more details.} Understanding exactly the ``mode of failure'', i.e.~how different commitment schemes ``expire'', provides valuable insight into the power of relativistic cryptography.

The only realistic implementation of a relativistic protocol involves stationary agents exchanging information at the speed of light. The protocol specifies a set of locations and each party is required to delegate a (stationary) agent to each location. All communication between Alice and Bob occurs locally, i.e.~between agents occupying the same location, and for simplicity we assume that all local communication is instantaneous.\footnote{Note that this is the only reasonable model. If an agent of Alice were to send a message to a far-away agent of Bob, she would either have to ``escort'' the message until it reaches the agent of Bob (which is equivalent to placing an extra agent at the receiving end as in our model) or she would let the message out unguarded, in which case there is no guarantee that the message will not be intercepted by some other agent of Bob at some earlier location.} Communication between distinct agents of the same party is unrestricted (and assumed to be secure) but must respect the speed-of-light constraint (for simplicity we take $c = 1$).\footnote{Security of internal communication can be ensured by using teleportation to transmit quantum states and information-theoretic encryption (one-time pad) for classical information. Alternatively, we can assume that distinct agents occupy different locations within the same laboratory (e.g.~the model of two long laboratories in a single spatial dimension as in Section 1.7.2 and Fig.~1.6 of Ref.~\cite{colbeck06}).}

All examples considered in this thesis take place in a single spatial dimension labelled by $x$ and as usual time is labelled by $t$. All considerations in this chapter extend in a straightforward fashion to more spatial dimensions but we are not aware of any examples in which this gives any advantage. We label the locations by integers and refer to the agents occupying Location $k$ as \alice{k} and \bob{k}. For convenience we define the following three locations.
\begin{center}
\setlength{\tabcolsep}{0.7cm}
\begin{tabular}{l l}
\hline
Location 0 & $x = 0$\\
Location 1 & $x = -1$\\
Location 2 & $x = 1$\\
\hline
\end{tabular}
\end{center}
It is important to bear in mind that in relativistic protocols all the interactions are performed by agents occupying well-defined locations. We avoid referring to the main party (whose location during the protocol is not specified) as it might create the impression that there exists some higher form of life that is able to instantaneously communicate with all its agents. The existence of such a being is forbidden by relativity and would indeed render all the relativistic protocols insecure.

Let us first present a relativistic variant of Protocol \hyperref[prot:secret-sharing-nc]{2}. Before the protocol begins \alice{1} and \alice{2} must be provided with a random bit $a \in \{0, 1\}$ (e.g.~generated by \alice{0} at $t = -1$).
\phantomsection
\secretsharingrel
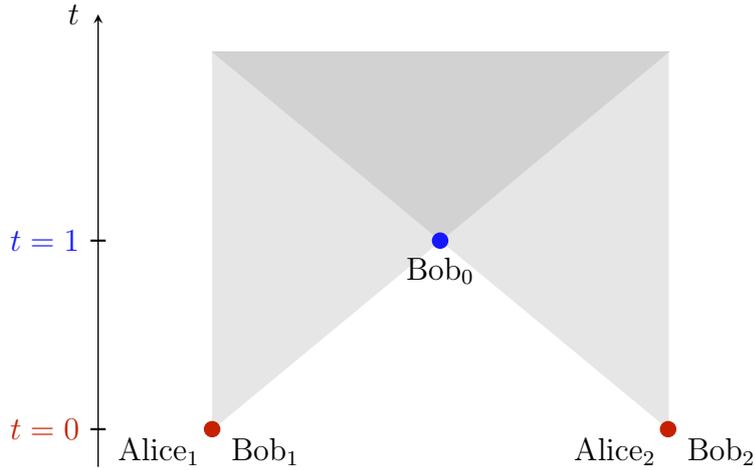
\begin{figure}
\centering
\begin{tikzpicture}[scale=1, line width=0.5]
\timeaxis{5.5}
\draw [-, thick] (-0.1, 2.5) to (0.1, 2.5);
\node[left, color=mycolour1] at (-0.1, 2.5) {$t = 1$};
\node at (0.8, -0.3) {\alice{1}};
\node at (2.2, -0.3) {\bob{1}};
\node at (6.8, -0.3) {\alice{2}};
\node at (8.2, -0.3) {\bob{2}};
\draw [fill, lightgray] (1.5, 0) -- (7.5, 5) -- (1.5, 5);
\draw [fill, lightgray] (7.5, 0) -- (7.5, 5) -- (1.5, 5);
\draw [fill, darkgray] (4.5, 2.5) -- (7.5, 5) -- (1.5, 5);
\draw[fill, color=mycolour3] (1.5, 0) circle [radius=0.1];
\draw[fill, color=mycolour3] (7.5, 0) circle [radius=0.1];
\draw[fill, color=mycolour1] (4.5, 2.5) circle [radius=0.1];
\node at (4.5, 2.1) {\bob{0}};
\end{tikzpicture}
\caption{Spacetime diagram for Protocol \hyperref[prot:secret-sharing-rel]{5}. The red dots represent the commit phase while the blue dot represents the open phase. The shaded areas correspond to the future light cones of the interactions in the commit phase.}
\label{fig:secret-sharing-rel}
\end{figure}
In a sense this protocol is easier to understand than the original, non-communicating version (cf.~the spacetime diagram in Fig.~\ref{fig:secret-sharing-rel}). It is clear that Alice becomes committed at $t = 0$ (the commitment point) and that the commitment becomes known to Bob (\bob{0} to be more specific) at $t = 1$ (the opening point), hence, the commitment is valid for $t \in (0, 1)$. On the other hand, in the non-communicating variant it is not a priori clear when (and why!) communication constraints vanish and the commitment opens. Just like in Protocol \hyperref[prot:sBGKW-nc]{4}, \alice{1} can single-handedly decide on the commitment and the choice can be delayed until $t = 0$.

Our second example is a relativistic variant of Protocol \hyperref[prot:sBGKW-nc]{4}. This time \alice{1} and \alice{2} must share a random $n$-bit string $a \in \{0, 1\}^{n}$.
\phantomsection
\label{prot:sBGKW-rel}
\sBGKWrel{t \in (0, 2)}{= 2}
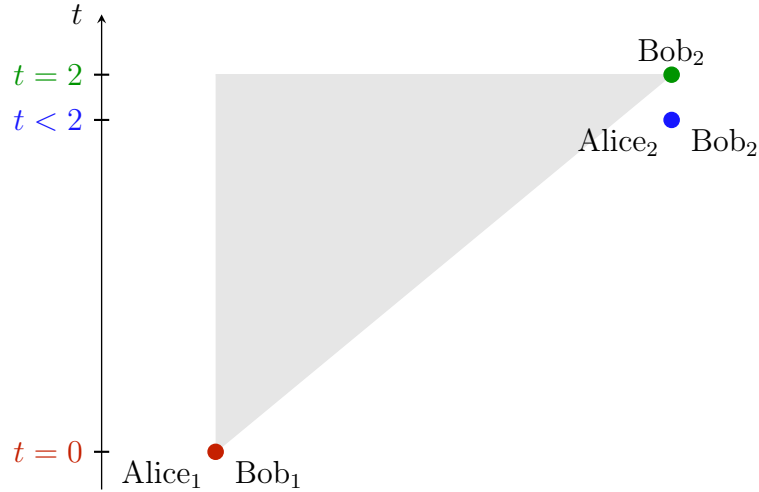
\begin{figure}
\centering
\begin{tikzpicture}[scale=1, line width=0.5]
\timeaxis{5.8}
\node at (0.8, -0.3) {\alice{1}};
\node at (2.2, -0.3) {\bob{1}};
\draw [fill, lightgray] (1.5, 0) -- (7.5, 5) -- (1.5, 5);
\draw[fill, color=mycolour3] (1.5, 0) circle [radius=0.1];
\draw [-, thick] (-0.1, 4.4) to (0.1, 4.4);
\node[left, color=mycolour1] at (-0.1, 4.4) {$t < 2$};
\draw[fill, color=mycolour1] (7.5, 4.4) circle [radius=0.1];
\node at (6.8, 4.1) {\alice{2}};
\node at (8.2, 4.1) {\bob{2}};
\draw [-, thick] (-0.1, 5) to (0.1, 5);
\node[left, color=mycolour2] at (-0.1, 5) {$t = 2$};
\draw[fill, color=mycolour2] (7.5, 5) circle [radius=0.1];
\node at (7.5, 5.3) {\bob{2}};
\end{tikzpicture}
\caption{Spacetime diagram for Protocol \hyperref[prot:sBGKW-rel]{6}. The red dot represents the commit phase, the blue dot represents the open phase, the green dot corresponds to the point at which \bob{2} verifies the commitment.}
\label{fig:sBGKW-rel}
\end{figure}
Just like in Protocol \hyperref[prot:sBGKW-nc]{4}, \alice{1} can choose the value of the commitment single-handedly and this choice can be delayed until $t = 0$ (\alice{2} does not need to know it). The requirement that the open phase happens at $t < 2$ ensures that no signals can be sent between the commit and open phases (cf.~Fig.~\ref{fig:sBGKW-rel}). It is easy to see that \alice{2} could cheat perfectly if she knew $b$ so the timing must be chosen such that $b$, which is announced by \bob{1} at $t = 0$, is not available to \alice{2} during the open phase. Under this condition the relativistic protocol and the original, non-communicating version are equivalent as far as security is concerned.

Note that in this relativistic scheme there is always a non-zero delay in verifying the commitment but it can be made arbitrarily small.\footnote{The possibility of immediate verification of the opening would imply that the commit phase and the open phase are \emph{not} space-like separated. Then, there would have been enough time for $b$ to reach \alice{2}, which would render the protocol insecure.} Whether this constitutes a severe limitation or not depends on the particular application but this feature, which appears often in relativistic protocols, should be always kept in mind, especially when considering composability (i.e.~executing a relativistic scheme as a subroutine in a longer procedure).

It is instructive to consider what happens if for some reason the open phase does not happen in the interval $t \in (0, 2)$. At $t = 2$ dishonest \alice{2} receives $b$ (sent by dishonest \alice{1} at $t = 0$) and at this point she can provide a valid proof for either value of $\cval$, which makes the protocol completely insecure. In other words, the commitment expires at $t = 2$ and no opening should be accepted at (or after) that point. If \alice{2} does not perform the opening during $t \in (0, 2)$, Bob will never find out whether \alice{1} made an honest commitment, let alone its value.

Having presented two cases in which non-communicating protocols can be turned in a straightforward manner into relativistic protocols, let us briefly discuss one case in which such a simple translation is not possible. Recall Protocol \hyperref[prot:distributed-OT]{1} for distributed oblivious transfer presented in Section \ref{sec:applications-in-cryptography}. Security of this protocol hinges on the assumption that \bob{1} and \bob{2} cannot communicate from the beginning of the protocol until the end of time. We know that permanent communication constraints cannot be enforced by relativity so we cannot hope for everlasting security but temporary security is not immediately ruled out. To restrict communication between \bob{1} and \bob{2} we would have to place them at distant locations, as usual accompanied by their communication partners \alice{1} and \alice{2}. During the protocol each Alice receives a single message and the message that they actually want to obtain is the \texttt{XOR} of the two. Unfortunately, the earliest point at which the transmitted message might be reconstructed coincides with the point at which the information gathered by \bob{1} and \bob{2} can be recombined to reveal which message Alice chose to retrieve (the spacetime diagram is essentially identical to the one shown in Fig.~\ref{fig:secret-sharing-rel}). We could have hoped for some finite interval during which Alice already knows the message but Bob still remains ignorant about her choice but in case of Protocol \hyperref[prot:distributed-OT]{1} this is not possible. This shows that not all non-communicating protocols can be mapped directly onto the relativistic setting in a meaningful way.
\section{Explicit analysis of relativistic protocols}
\label{sec:explicit-analysis}
We have seen how simple non-communicating protocols can be implemented in the relativistic setting but so far the security analysis was rather ad hoc. While this is sufficient for simple schemes, for more complex protocols (involving more agents and/or multiple rounds, which might be necessary to achieve improved security features, e.g.~longer commitment time) a systematic approach is desirable. In this section we provide a solution to a subclass of these problems and discuss the complications arising while dealing with the most general case.

A relativistic protocol is classical if all the messages exchange between agents of Alice and agents of Bob are classical. A protocol is quantum if there is at least one quantum message. Since classical protocols are designed to be executed by classical parties they should not require the agents of Alice or Bob to perform quantum operations \emph{in the honest scenario}. However, this cannot be ruled out in the dishonest case and it is natural to study the security of classical protocols against quantum adversaries.

We first consider classical protocols and we show that analysing the dishonest scenario is equivalent to a certain multiplayer game with \emph{partial communication constraints}\footnote{Similar models have been previously studied from the foundational point of view under the name of \emph{time-ordered models} \cite{gallego14} or \emph{correlation scenarios} \cite{fritz12, fritz14}.} played by the agents of the dishonest party.\footnote{Note that this procedure is not specific to commitment schemes and applies to any relativistic protocol in which cheating can be cast as a game.} In Section \ref{sec:classical-players} we show that if the agents are restricted to classical strategies, the situation is equivalent to a multiplayer game of non-communicating players. In Section \ref{sec:quantum-players} we mention some complications that arise when analysing such games against quantum players. Finally, in Section \ref{sec:quantum-relativistic-protocols} we discuss briefly the problems related to quantum relativistic protocols.

For the sake of concreteness let us consider the case of honest Alice. Since the agents of Alice follow the protocol, we might think of them as an omnipresent referee, who interacts with the agents of Bob. The following simple procedure explains how to turn a relativistic protocol into a multiplayer game (similar to those described in Section \ref{sec:multiplayer-games}) such that winning the game is equivalent to cheating in the protocol.
\begin{enumerate}
\item Identify all points of spacetime at which the agents of Alice and Bob interact, order them by their time coordinate and label by (positive) integers.\footnote{For interactions occurring at the same time the order does not matter.} Without loss of generality we assume that every interaction consists of a \emph{challenge} from Alice followed by a \emph{response} from Bob, which for the $j\th$ interaction are denoted by $c_{j}$ and $r_{j}$, respectively.\footnote{If the protocol requires more rounds of communications in a sequence, consider them as separate interactions.} Let $(x_{j}, t_{j})$ be the spacetime coordinates of the $j\th$ interaction and let $n$ be the total number of interactions in the protocol. Construct the \index{communication graph} \emph{communication graph} $G = ([n], E)$, in which each vertex corresponds to an interaction and the set of (directed) edges is determined by the causality constraints. More precisely, $(j, k)$ is an edge iff $k$ is in the future light cone of $j$
\begin{equation*}
(j, k) \in E \iff \abs{x_{k} - x_{j}} \leq t_{k} - t_{j}.
\end{equation*}
Note that $G$ is an oriented and acyclic graph.
\item Without loss of generality the challenge issued by Alice in the $j\th$ interaction is a deterministic function of some pre-shared randomness (represented by a random variable $Z$) and the previous responses of Bob. For a particular value of the random variable $Z = z$ we have
\begin{equation*}
c_{j} = f_{j}(z, r_{1}, r_{2}, \ldots, r_{j - 1}).
\end{equation*}
(Clearly, $f_{j}$ might not depend on the responses which do not belong to the past light cone of the $j\th$ interaction but to keep the notation simple we do not indicate this restriction explicitly.) The collection of functions $f_{1}, f_{2}, \ldots, f_{n}$ together with the probability distribution of $Z$ fully determines the distribution of challenges issued by Alice.
\item Deciding whether a cheating attempt is successful, i.e.~the predicate function for the game, might without loss of generality be taken to depend only on the initial randomness and the responses from Bob, i.e.~$V(z, r_{1}, r_{2}, \ldots, r_{n})$.
\end{enumerate}
This procedure provides us with three components: the communication graph, the distribution of challenges and the predicate function. Clearly, this triple defines a multiplayer game in which communication, instead of being completely forbidden, is restricted. More specifically, we can identify the $j\th$ interaction with a player $\cP_{j}$ and starting from $j = 1$ every player takes part in the following procedure.
\begin{enumerate}
\item Player $\cP_{j}$ receives messages sent by previous players.
\item Player $\cP_{j}$ receives a challenge $c_{j}$ and issues a response $r_{j}$.
\item Player $\cP_{j}$ might send a message to any player $\cP_{k}$ such that $(j, k) \in E$.
\end{enumerate}
At the end all the answers are collected and the predicate function $V$ is evaluated to determine whether the game is won or lost.

Since quantum communication can be implemented by teleportation (and we do not impose any restrictions on the amount of entanglement shared by the players) we can assume all communication to be classical.

Let us summarise what we have accomplished so far. We have started from a classical relativistic protocol and we have turned it into an equivalent classical multiplayer game with communication constraints. Note that by classical we mean that all the challenges and responses are classical but this does not prevent the players from using quantum systems to generate them. As discussed in the next section, the case of quantum players (i.e.~players using quantum systems to generate their classical responses) is significantly harder to analyse than the case of classical players.

Note that multiplayer games with communication constraints include many interesting scenarios as special cases. For example if $E = \emptyset$ (i.e.~the communication graph $G$ has no edges) we recover the standard scenario of multiplayer non-communicating games. The other extreme case is when the players satisfy a ``total order'', i.e.~$(j, k) \in E \iff k > j$, which is equivalent to a single player responding to a sequence of challenges. This is exactly the scenario that arises in classical non-relativistic two-party cryptography.\footnote{These two special cases have also been studied if the challenges and/or responses are quantum. For some recent results on two-player quantum games see Refs.~\cite{regev13, cooney15} while for sequential quantum games see papers on quantum non-relativistic two-party cryptography listed in Section \ref{sec:quantum-cryptography}.}
\subsection{Classical players}
\label{sec:classical-players}
Any strategy available to classical players can be expressed as a convex combination of deterministic strategies. Since randomness can be shared among the players in advance and their goal is to achieve the optimal winning probability (which is determined by a fixed and known function), we might restrict our attention to deterministic strategies. What is the most general strategy of $\cP_{j}$, i.e.~what is his response allowed to depend on? Clearly, it might depend on the challenge that he receives $c_{j}$ but it might also depend on messages received by him from the ``previous'' players. This seems to complicate the situation, since these might be arbitrary and depend on anything that was available to the sender, etc. However, a simple observation allows us to simplify this seemingly complicated structure. Since the message sent by a particular player is a function of the data available to him, he could alternatively send the whole data set to the receiver, who can then generate the message himself. This leads to the simple conclusion that it is optimal\footnote{Optimal in the sense of spreading information to the largest number of players, certainly not in terms of efficiency.} to broadcast any challenge received from the referee to all eligible players. Then the response of $\cP_{j}$ becomes a deterministic function of all the challenges \emph{in his past}. If we supply every player with these additional inputs, they no longer need to communicate. This reduction works because there exists a trivial but optimal communication strategy for the players, namely ``broadcast everything''.
\begin{obs}
\label{obs:games-equivalence}
Let $\cG_{1}$ be the game in which $\cP_{j}$ receives $c_{j}$ and the allowed communication pattern is specified by $G = ([n], E)$. Let $\cG_{2}$ be the game in which $\cP_{j}$ receives $\{ c_{k} \}_{k \in \cS_{j}}$ where
\begin{equation*}
\cS_{j} := \{ k \in [n] :  (k, j) \in E \}
\end{equation*}
and no communication is allowed $G = ([n], \emptyset)$. The sets of strategies available to the classical players in games $\cG_{1}$ and $\cG_{2}$ are identical.
\end{obs}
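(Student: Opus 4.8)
The plan is to prove the two set inclusions separately, each by an explicit simulation, after first reducing to deterministic strategies. Because randomness can be shared among the players before the game begins and the winning condition is a fixed function, the optimal value is attained on a deterministic strategy, so it suffices to show that the \emph{deterministic} strategies of $\cG_1$ and $\cG_2$ coincide; the general (convex) case then follows. The structural fact I would isolate at the outset is that the light-cone relation defining $E$ is a partial order: it is reflexive, since $(j,j) \in E$ because $\abs{x_j - x_j} = 0 \leq t_j - t_j$, and transitive, since $(i,j), (j,k) \in E$ give $\abs{x_k - x_i} \leq \abs{x_k - x_j} + \abs{x_j - x_i} \leq (t_k - t_j) + (t_j - t_i) = t_k - t_i$ by the triangle inequality. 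Consequently the set $A_j$ of \emph{ancestors} of a vertex $j$ (all vertices from which a directed path reaches $j$) already coincides with its direct in-neighbourhood $\cS_j$, and moreover $j \in \cS_j$.

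The inclusion $\cG_2 \subseteq \cG_1$ is the easy direction. Given any no-communication strategy of $\cG_2$, in which $\cP_j$ responds with a fixed function of $\{c_k\}_{k \in \cS_j}$, I would realise it in $\cG_1$ by instructing every player to forward the challenge it receives along each of its outgoing edges and to send nothing else. Since $k \in \cS_j$ is by definition equivalent to $(k,j) \in E$, player $\cP_j$ then receives precisely the challenges $\{c_k\}_{k \in \cS_j}$ and can apply the prescribed function. Thus every deterministic strategy of $\cG_2$ is available in $\cG_1$.

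The reverse inclusion $\cG_1 \subseteq \cG_2$ is where the ``broadcast everything'' idea does the work. Processing the players in a topological order of $G$, I would show by induction that the total information on which the response of $\cP_j$ can depend in $\cG_1$ is, without loss of generality, the collection of challenges of its ancestors. Any message a player sends is a function of the data it already holds, so instead of transmitting a processed message a player may just as well relay all the raw challenges in its possession; this can only enlarge the receivers' information, never shrink it, and is optimal in the sense of spreading information to the largest set of players. Hence each $\cP_j$ may be assumed to know exactly the challenges of all vertices that can reach it, which by transitivity is $\{c_k\}_{k \in \cS_j}$, and its response is therefore a deterministic function of this set. But that is precisely the input $\cP_j$ is handed for free in $\cG_2$, so the same function is an admissible $\cG_2$ strategy.

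The step I expect to require the most care is making ``broadcasting everything is without loss of generality'' rigorous rather than merely intuitive. One must verify that permitting arbitrary, adaptively computed messages (rather than verbatim challenge-forwarding) cannot supply $\cP_j$ with any information beyond the challenges of its ancestors: this is a data-processing triviality combined with the transitivity of $E$, and the induction over the topological order is what converts these local statements into the global claim $A_j = \cS_j$. Once this is pinned down, reflexivity guarantees that the own challenge $c_j$ of $\cP_j$ lies in $\cS_j$, the two inclusions match input sets exactly, and the equality of strategy sets follows.
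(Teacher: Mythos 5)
Your proposal is correct and follows essentially the same route as the paper: reduce to deterministic strategies via shared randomness, then observe that ``broadcast everything'' is an optimal communication pattern, so that each $\cP_j$'s response becomes a function of exactly the challenges in its causal past, which are the inputs handed out in $\cG_2$. The only addition is that you make explicit the transitivity of the light-cone relation (so that ancestors coincide with direct in-neighbours $\cS_j$) and the fact that $c_j$ itself is among $\cP_j$'s inputs in $\cG_2$, both of which the paper leaves implicit.
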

\noindent This observation plays a crucial role in the analysis of a multiround classical relativistic bit commitment protocol in Chapter \ref{chap:multiround}.
\subsection{Quantum players}
\label{sec:quantum-players}
We have seen that for classical players games with communication constraints can be reduced to fully non-communicating games. What happens if we attempt such a reduction for quantum players?

As one might expect the quantum case is not so simple and it is instructive to consider the following example. Consider a game of three players where $G$ contains only one edge, $E = \{ (1, 2) \}$. Clearly, $\cP_{3}$ cannot communicate with the other players but his presence is necessary to hope for a quantum advantage.\footnote{\label{footnote} Without $\cP_{3}$ we would have a game equivalent to asking a sequence of classical questions to a single player and in such games no quantum advantage is possible.} The response of $\cP_{1}$ is determined by some measurement he performs on his quantum system (and the measurement setting depends on the challenge $c_{1}$). Then he passes whatever is left of the quantum system along with the classical messages $c_{1}$ and $r_{1}$ to $\cP_{2}$ who then receives $c_{2}$ and completely measures the quantum system to obtain $r_{2}$.

This scenario is difficult to analyse because the measurement performed by $\cP_{1}$ affects how much information $\cP_{2}$ (who learns a new piece of information $c_{2}$) might extract from the state. This problem goes under the name of sequential measurements and is currently an active area of research \cite{heinosaari15}. Note that in the classical setting such trade-offs do not exist: generating the response for the current round does not affect the information that might be sent to other players.

To the best of our knowledge, this is the simplest example in which finding the quantum value of a classical game cannot be reduced to any of the previously studied models. Interestingly enough, this is precisely the scenario which arises when analysing security of the multiround protocol presented in Chapter \ref{chap:multiround} against quantum adversaries.
\subsection{Quantum relativistic protocols}
\label{sec:quantum-relativistic-protocols}
While presenting the procedure to map a relativistic protocol onto a communication-constrained model, we have explicitly restricted ourselves to classical protocols. This was mainly to avoid the trouble of specifying the most general way in which the referee may choose the challenge. While this is conceptually not difficult, formalising these notions would be quite cumbersome. In particular, we would need to explicitly define the Hilbert spaces corresponding to the referee's memory, the ``message'' space, define the class of operations the referee might use to prepare the challenge, argue what the new predicate is, etc.\footnote{Note that as a special case we must recover the standard model for quantum protocols of Yao \cite{yao95}, which puts a lower bound on the complexity of the description.}

While mapping a quantum relativistic protocol onto a quantum game is not difficult, we do not know how to analyse the resulting ``quantum'' games. Without aiming for full generality let us just sketch out two quantum games, which demonstrate difficulties that might arise in these scenarios.

The first game is a variation on the example presented in the previous section. Basically, by making the first challenge $c_{1}$ quantum we can eliminate $\cP_{3}$ without trivialising the problem. Consider a game of two players $\cP_{1}, \cP_{2}$ such that $E = \{ (1, 2) \}$. The challenge received by $\cP_{1}$ is an unknown quantum state and he is required to give a classical response $r_{1}$. $\cP_{1}$ passes the remaining quantum state together with his classical response to $\cP_{2}$, who receives a new (classical) challenge and must produce another classical response. Clearly, the information extractable in the second round depends on the measurement performed in the first one, hence, the two rounds cannot be decoupled and mapped onto a non-communicating model. Games of this type arise when considering quantum non-relativistic protocols.

The second game is arguably the simplest manifestation of no-cloning. Consider a game of three players whose communication graph contains two edges: $E = \{ (1, 2), (1, 3) \}$. The challenge issued to $\cP_{1}$ is an unknown quantum state and no response is required. Players $\cP_{2}$ and $\cP_{3}$ are then challenged to unveil one out of two incompatible properties of the original state. Clearly, this would be easy if each of them could hold a copy of the original state but this is forbidden by the no-cloning theorem. One solution is for $\cP_{1}$ to measure one of the two properties and send the classical outcomes to $\cP_{2}$ and $\cP_{3}$. However,  this only allows them to answer one of the challenges correctly. This is exactly the quantum feature used in Kent's quantum relativistic bit commitment protocol \cite{kent12}, whose complete analysis can be found in Chapter~\ref{chap:transmitting}.
\section{Limitations of relativistic cryptography}
\label{sec:limitations-relativistic-cryptography}
We have made contributions towards understanding of relativistic commitment schemes but in general the exact power of relativistic cryptography is not yet completely understood. The goal of this section is to summarise what is known to be possible and what the known limitations are. It turns out that between the two there is a sizeable piece of land yet to be discovered.

Let us start with the simplest task: coin tossing. The trivial classical protocol (described for example as Protocol 2.3 in Ref.~\cite{colbeck06}), in which \alice{1} sends a random bit to \bob{1} and simultaneously \bob{2} sends a random bit to \alice{2} and the outcome of the coin toss is the \texttt{XOR} of the two bits, achieves perfect security and is easily implemented in the simplest relativistic model with just two locations. More sophisticated flavours of coin tossing, in which Alice and Bob can partially influence the bias of the coin, are also possible \cite{colbeck06}.

The situation becomes a bit more complicated when it comes to bit commitment. All the commitment protocols we have discussed so far expire in some way: in case of Protocol \hyperref[prot:secret-sharing-rel]{5} the commitment automatically opens, in case of Protocol \hyperref[prot:sBGKW-rel]{6} the commitment vanishes. In principle these commitments can be made arbitrarily long but only at the price of increasing the spatial separation between the sites. This is clearly not a desirable solution, since in practice we are restricted to a fixed region of space (we have easy access to the surface of the Earth but going beyond that seems somewhat impractical). Can we achieve an arbitrarily long commitment while performing the protocol in a finite region of space? Let us first consider protocols in which the commit phase only requires a finite amount of communication, i.e.~at some point the communication stops and no more messages need to be exchanged until the open phase. It is clear that at that point both parties could bring all their systems together and within some period of time (proportional to the size of the accessible region of space) we would be back in the standard scenario, in which the usual trade-offs apply. Hence, arbitrarily long commitment cannot be achieved by a protocol with a bounded number of messages in the commit phase. What about protocols in which the agents keep communicating? The multiround scheme presented in Chapter \ref{chap:multiround} belongs to this class and implements bit commitment which is secure against classical adversaries and can be made arbitrarily long. We conjecture that the protocol remains secure against quantum adversaries but we currently do not have a proof.

Commitments with a finite period of validity (which at some point expire) have been previously studied under the name of \emph{timed commitments}. For example Boneh and Naor~\cite{boneh00} study commitments which fail in the same way as Protocol \hyperref[prot:secret-sharing-rel]{5}, i.e.~after some fixed time the committed value is revealed to Bob.\footnote{Their motivation comes from schemes which only offer computational security. Such schemes can always be forced open given enough time and computational power.} They show that such commitments can be used for contract signing or honesty-preserving auctions. Generally speaking, such temporary secrecy is sufficient if the goal is to force parties to act simultaneously (in the sense that their respective actions should not depend on each other) even if the communication model is sequential. Broadbent and Tapp considered the task of secure voting, for which such commitments would be sufficient \cite{broadbent08}. Timed commitments that vanish (i.e.~the commitment is no longer valid but the committed value, if there was one, remains secret) can be used in similar situations if we want to give Alice more power to protect her privacy. This type of commitment might also be used in multiparty protocols which are robust against a certain fraction of dishonest parties (then any party that refuses to open the commitment would be declared dishonest).

To see the limitations of relativistic commitment schemes it is instructive to investigate whether they can be used to implement other, more powerful primitives. For example, a well-known construction shows how to use bit commitment and quantum communication to implement oblivious transfer \cite{bennett92, yao95}. Are relativistic schemes suitable for this canonical construction? Without going into too many details let us describe one important feature of this construction. At some point of the procedure Bob is required to make several commitments. Later, Alice asks Bob to open a random subset of them but the rest he keeps untouched. Security for Bob hinges on the fact that some commitments remain closed, which rules out relativistic schemes that expire by opening (like Protocol \hyperref[prot:secret-sharing-rel]{5}). The commitments that vanish without revealing any information (like Protocol \hyperref[prot:sBGKW-rel]{6}) might seem perfectly suited for the task. However, a simple conceptual problem referred to as \index{classical certification} \emph{classical certification} or \index{retractability} \emph{retractability} arises \cite{kent12b, colbeck06}. Basically, the canonical construction implicitly assumes that every commitment (including the unopened ones) \emph{has a value}. While it might not be immediately clear what it means for an unopened commitment to have a value, this concept can be made rigorous and it is possible to show that relativistic protocols do not satisfy this property. A more detailed discussion on the issue of classical certification of relativistic commitment schemes and an explicit example how it renders the canonical construction insecure can be found in Appendix \ref{app:classical-certification}. While this is by no means a proof that no relativistic commitment scheme can be used for the canonical construction, we have at least ruled out the ones considered so far.

What about implementing relativistic oblivious transfer directly without going through canonical construction? This possibility seems unlikely due to the following informal argument. In every (correct) oblivious transfer protocol at some fixed point Alice must receive the chosen message. At this point the knowledge of Alice (or Bob) might be scattered among all their agents but in an attempt to cheat it can be (within some finite time) gathered at one location and then the usual impossibility results apply. Investigating whether this intuition can be turned into a rigorous argument would be an interesting research problem for two reasons: it would require us to propose a meaningful definition of relativistic oblivious transfer and the actual impossibility result (if true) would determine an important boundary point of quantum relativistic cryptography. Alternatively, one can relax the requirements and look for a protocol whose security is only guaranteed for a finite period of time. Such protocols might exist and it would be interesting to know how useful they are.
\chapter{Bit commitment by transmitting measurement outcomes}
\label{chap:transmitting}
\emph{This chapter is based on}
\paperB
In the previous chapters we have seen why non-communicating models are useful in cryptography and how such models can be implemented using relativity. In this chapter we use the tools introduced before and present a complete analysis of a particular quantum relativistic bit commitment protocol proposed by Kent \cite{kent12}. Our initial approach to this problem, presented in Ref.~\cite{kaniewski13}, relied on some tools from non-asymptotic quantum information theory and a recently discovered uncertainty relation \cite{tomamichel11, tomamichel12}. Later, however, we found another, simpler approach (which does not explicitly use any uncertainty relation), which we then extended to apply to experimental implementations \cite{lunghi13}. In this chapter we only present the latter, superior method. Note that similar techniques found applications to other interesting problems in quantum cryptography \cite{tomamichel13}. (During the lifetime of these projects an independent security analysis of the same protocol was provided by Croke and Kent \cite{croke12} and an independent experiment was performed by Liu, Cao, Curty, Liao, Wang, Cui, Li, Lin, Sun, Li, Zhang, Zhao, Chen, Peng, Zhang, Cabello and Pan~\cite{liu14}).

\noindent \textbf{Outline:} We start by proving security of the original protocol. Our methods are robust, as they also apply to the case of imperfect state preparation and noisy transmission. This would be sufficient to prove security of an implementation that uses a single-photon source, a lossless quantum channel and perfect detectors (or devices which are good approximations thereof). Unfortunately, such devices are not available at the moment, hence, we modify the protocol so it can be implemented using currently available devices, in our case a weak-coherent source and inefficient and noisy detectors. We describe the new security model, extend the previous security analysis and determine the minimum requirements on the honest devices that allow for a secure implementation of the protocol. We also present an explicit calculation of the security parameter of the protocol. We finish this chapter by giving a brief overview of an experiment performed between Geneva and Singapore in collaboration with an experimental group at the University of Geneva.
\section{The original protocol}
\label{sec:the-original-protocol}
We use the following notation for the BB84 states
\begin{equation}
\label{eq:bb84-states}
\ket{\psi_{x}^{\theta}} = H^{\theta} \ket{x},
\end{equation}
where $x, \theta \in \{0, 1\}$. For a sequence of BB84 states described by $x, \theta \in \{0, 1\}^{n}$ we use
\begin{equation}
\label{eq:bob-measurement}
\ket{x^{\theta}} = \bigotimes_{k = 1}^{n} \ket{\psi_{x_{k}}^{\theta_{k}}}.
\end{equation}
For a particular basis string $\theta \in \bs{n}$ we define $S$ and $T$ to be the rounds in which Alice encoded her qubits in the computational and Hadamard basis, respectively,
\begin{gather*}
S = \{k \in [n] : \theta_{k} = 0\},\\
T = \{k \in [n] : \theta_{k} = 1\}.
\end{gather*}
The protocol proposed by Kent \cite{kent12} uses the same locations as described in Section \ref{sec:simple-relativistic-protocols} and Fig.~\ref{fig:tmo} shows the relevant spacetime diagram. The parameter $n \in \amsbb{N}$ determines the usual cost vs.~security trade-off (see Section \ref{sec:cryptographic-protocols}), while $\delta \in [0, 1)$ specifies the noise tolerance of the protocol. Recall that $\dham(\cdot, \cdot)$ is the fractional Hamming distance (defined in Section \ref{sec:strings-of-bits}).
\phantomsection
\backreportingfalse
\tmoprot
\begin{figure}
\centering
\begin{tikzpicture}[scale=1, line width=0.5]
\timeaxis{3.5}
\draw [-, thick] (-0.1, 2.5) to (0.1, 2.5);
\node[left, color=mycolour1] at (-0.1, 2.5) {$t = 1$};
\draw [fill, lightgray] (1.5, 0) -- (1.5, 2.5) -- (4.5, 0) -- (7.5, 2.5) -- (7.5, 0);
\node at (3.8, -0.3) {\alice{0}};
\node at (5.2, -0.3) {\bob{0}};
\draw[fill, color=mycolour3] (4.5, 0) circle [radius=0.1];
\node at (0.8, 2.2) {\alice{1}};
\node at (2.2, 2.2) {\bob{1}};
\node at (6.8, 2.2) {\alice{2}};
\node at (8.2, 2.2) {\bob{2}};
\draw[fill, color=mycolour1] (1.5, 2.5) circle [radius=0.1];
\draw[fill, color=mycolour1] (7.5, 2.5) circle [radius=0.1];
\end{tikzpicture}
\caption{Spacetime diagram for Protocol \hyperref[prot:tmo]{7}. The red dot represents the commit phase while the blue dots represent the open phase. The shaded area corresponds to the past light cones of the events of the open phase.}
\label{fig:tmo}
\end{figure}
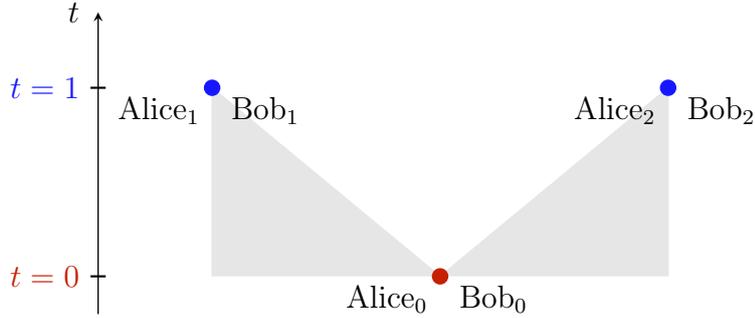
\noindent Before we proceed with a complete analysis let us mention a couple of unusual features of the protocol.

First of all, it seems that Alice becomes committed without actually sending any information to Bob (no communication from Alice to Bob happens until the open phase). Is that possible? How can Bob be sure that for $t > 0$ Alice is indeed committed?

To answer this question it is instructive to consider a slight variation on Protocol \hyperref[prot:tmo]{7}, in which the open phase is delayed until $t = 2$. Clearly, in this case \alice{0} could keep the quantum states untouched until $t = 1$ and only then perform the measurement. For this modified protocol Alice only becomes committed at $t = 1$, which does not occur \emph{immediately after} the communication in the commit phase is over (as was the case in all the previous protocols).

This quantum protocol challenges the preconception that the timing of the commitment point is determined by the interactions in the commit phase. Strangely enough, in this case the commitment point is determined by the timing of and locations used in the open phase. In fact, it is easy to see that the commitment point is determined by the latest point in the common past of the openings performed by \alice{1} and \alice{2}.

Protocols discussed in Section~\ref{sec:simple-relativistic-protocols} result in commitments which are only valid for a finite amount of time (in case of Protocol \hyperref[prot:secret-sharing-rel]{5} the commitment at some point automatically opens while in case of Protocol \hyperref[prot:sBGKW-rel]{6} at some point security for honest Bob is lost). It is interesting to note that the quantum commitment scheme we consider now does not expire. A commitment initiated at $t = 0$ may be opened at any $t = t_{\textnormal{open}} > 1$ but a successful opening only demonstrates that Alice was committed for $t \in (t_{\textnormal{open}} - 1, t_{\textnormal{open}})$. It is important to stress that at $t = t_{\textnormal{open}} - 1$ Alice is not yet committed, so we must take the commitment point to be $t = t_{\textnormal{open}} - 1 + r$, where $r > 0$ is an arbitrarily small (but non-zero) constant.

Finally, let us point out that verifying whether an opening should be accepted or not is not immediate. Moreover, in contrast to Protocol \hyperref[prot:sBGKW-rel]{6}, the delay cannot be made arbitrarily small. Since the conditions that Bob needs to verify depend on data unveiled at both opening locations, the delay is proportional to the distance between them.
\subsection{Correctness}
Correctness in the noiseless setting is clear by inspection while for an experimental implementation the only relevant quantity turns out to be the total bit-flip error rate between (honest) Alice and Bob (this rate includes contributions coming from imperfect state preparation, transmission noise and measurement errors). For simplicity we assume that noise acts independently on every qubit. Let $\err$ be the bit-flip error rate, i.e.~the probability of obtaining the wrong outcome despite the qubit having been prepared and measured in the same basis.\footnote{If the error probabilities are different for the two bases, we take the larger value to be on the safe side.} The protocol is asymptotically correct (i.e.~the probability of honest parties aborting decays exponentially in $n$) if
\begin{equation}
\label{eq:tmo-correctness}
\err < \delta.
\end{equation}
Note that this depends solely on the numerical value of $\err$ and not on the exact effects that contribute to it.
\subsection{Security for honest Alice}
Since Bob receives no information before the open phase, he remains completely ignorant about Alice's commitment and so the protocol is hiding.
\subsection{Security for honest Bob}
\label{sec:tmo-honest-bob}
To investigate security for honest Bob we first turn the original prepare-and-measure scheme of Kent into an entanglement-based scheme (equivalence for security purposes was explained in Section \ref{sec:equivalence}). In the entanglement-based formulation instead of generating BB84 states \bob{0} generates EPR pairs, he keeps one half of each (to be measured later) and sends the other halves to \alice{0}. The most general attack performed by \alice{0} during the commit phase is to perform an isometry that ``splits up'' the entire quantum system received from \bob{0} into two parts, which she then sends to \alice{1} and \alice{2}. In the open phase, \alice{1} and \alice{2} measure their respective quantum systems and pass the outcomes to \bob{1} and \bob{2}, respectively.

Since we want to prove security with respect to the global command variant of Definition \ref{df:binding-two} let us spell out how this definition applies to the protocol.\footnote{Security in the local command can be achieved by the trivial Protocol \hyperref[prot:local-command]{3}, cf.~Section \ref{sec:commitment-schemes}.}

First, we need to characterise the set of states that \alice{0} might enforce at the commitment point. While at $t = 0$ the state is (without loss of generality) only shared between \alice{0} and \bob{0}, at the commitment point ($t = r > 0$) the share of \alice{0} is already explicitly split up into two parts (that will reach \alice{1} and \alice{2} in time for the open phase) and this partitioning is essential to determine the commitment. We denote the relevant subsystems by $A_{1}$ and $A_{2}$ (even if at the commitment point these subsystems are not with the agents \alice{1} and \alice{2} yet). It is straightforward to see that at the commitment point any tripartite state $\sigma_{B A_{1} A_{2}}$ can be enforced as long as the marginal state held by \bob{0} remains unchanged, i.e.
\begin{equation*}
\tr_{ A_{1} A_{2} } \sigma_{ B A_{1} A_{2} } = \Big( \frac{ \mathbb{1} }{2} \Big)^{\otimes n}.
\end{equation*}
Interestingly enough, our proof does not make use of this property. In other words, the protocol remains secure even if \alice{0} were allowed to provide an arbitrary tripartite state compatible with the measurements that \bob{0} will later perform (i.e.~the subsystem of \bob{0} must consist of $n$ qubits).

In the open phase \alice{1} and \alice{2} must provide proofs, which are just classical strings of length $n$. Hence, the opening maps correspond to measurements. Each of \alice{1} and \alice{2} has two different measurements used to unveil the two different values of the commitment and we denote the measurement operators of \alice{1} (\alice{2}) attempting to unveil $\cval$ by $\{P_{y}^{\cval}\}_{y \in \bs{n}}$ \big($\{Q_{y}^{\cval}\}_{y \in \bs{n}}$\big).
Since we do not impose any constraints on the local dimensions of $A_{1}$ and $A_{2}$ we may without loss of generality assume that these measurements are projective. The fact that \alice{2} might use different measurements for $d = 0$ and $d = 1$ indicates that we work in the global command model.

If $\theta \in \bs{n}$ is the basis string (picked by \bob{0} uniformly at random), then his measurement is described by operators $\{ \ketbraq{x^{\theta}} \}_{x \in \bs{n}}$ as defined in Eq.~\eqref{eq:bob-measurement}. The commitment is accepted if the strings supplied by \alice{1} and \alice{2} are consistent with the classical outcomes obtained by \bob{0}. This condition can be written as a projector acting on the original tripartite state and it is easy to see that the projector $\Pi_{d}^{\theta}$ corresponding to \bob{0} accepting the unveiling of $d$ for a particular basis string $\theta$ equals
\begin{align*}
\Pi_{0}^{\theta} &= \sum_{x \in \bs{n}} \ketbraq{x^{\theta}} \otimes \sum_{\substack{y \in \bs{n}\\ \dham(x_{S}, y_{S}) \leq \delta}} P_{y}^{0} \otimes Q_{y}^{0},\\
\Pi_{1}^{\theta} &= \sum_{x \in \bs{n}} \ketbraq{x^{\theta}} \otimes \sum_{\substack{y \in \bs{n}\\ \dham(x_{T}, y_{T}) \leq \delta}} P_{y}^{1} \otimes Q_{y}^{1},
\end{align*}
and the three registers correspond to the subsystems held by \bob{0}, \alice{1} and \alice{2}, respectively (and the latter two result from an isometry applied by \alice{0} to subsystem $A_{0}$, which she received in the commit phase). These projectors require that \alice{1} and \alice{2} unveil the same string, which on the relevant subset ($S$ for $d = 0$ and $T$ for $d = 1$) is $\delta$-close (in terms of fractional Hamming distance) to the string obtained by Bob. To calculate the probability of successfully unveiling $d$ we must average over all possible basis choices
\begin{equation*}
p_{d} = 2^{-n} \sum_{\theta \in \bs{n}} \tr( \Pi_{d}^{\theta} \sigma_{ B A_{1} A_{2} } ).
\end{equation*}
In fact, our technique allows us to generalise the definition~\eqref{eq:bb84-states} to any pair of bases on a qubit
\begin{equation*}
\braket{\psi_{0}^{0}}{\psi_{1}^{0}} = \braket{\psi_{0}^{1}}{\psi_{1}^{1}} = 0.
\end{equation*}
This requirement comes directly from the fact that the equivalence between prepare-and-measure and entanglement-based schemes as presented in Section \ref{sec:equivalence} only applies if the average state is fully mixed. It turns out that the final bound depends only on the overlap between the bases
\begin{equation}
\label{eq:overlap}
c := \max_{x, y} \abs{ \braket{\psi_{x}^{0}}{\psi_{y}^{1}} },
\end{equation}
which is a well-known measure of incompatibility used extensively in the study of uncertainty relations \cite{deutsch83, maassen88, berta09, tomamichel11}.
\begin{prop}
\label{prop:tmo-binding}
Let
\begin{equation*}
\lambda_{0} = \frac{1 + c}{2} \nbox{and} \lambda_{1} = \frac{1 - c}{2},
\end{equation*}
where $c$ is the overlap as defined in Eq.~\eqref{eq:overlap}. For any strategy of dishonest Alice, the probabilities of Bob accepting the commitment satisfy
\begin{equation*}
p_{0} + p_{1} \leq 1 + \varepsilon,
\end{equation*}
for
\begin{equation}
\label{eq:tmo-security-bound}
\varepsilon =
\begin{cases}
\lambda_{0}^{n} &\nbox{for} \delta = 0,\\
\exp \bigg(- \frac{1}{2} \Big( \sqrt{\lambda_{1}} - \frac{\delta}{\sqrt{\lambda_{1}}} \Big)^{2} n \bigg) &\nbox{for} 0 < \delta < \lambda_{1}.
\end{cases}
\end{equation}
\end{prop}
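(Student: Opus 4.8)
The plan is to read the binding condition for honest Bob as a \emph{monogamy-of-entanglement game} played by $\alice1$ and $\alice2$, and to control the combined opening probability by the operator norm of a single sum of acceptance projectors. Writing $p_{\cval}=2^{-n}\sum_{\theta}\tr(\Pi_{\cval}^{\theta}\sigma_{BA_1A_2})$ and using $\tr\sigma_{BA_1A_2}=1$ together with $\sigma_{BA_1A_2}\ge 0$, I would first merge the two probabilities,
\begin{equation*}
p_0+p_1=2^{-n}\tr\big(\Gamma\,\sigma_{BA_1A_2}\big)\le 2^{-n}\norm{\Gamma},\qquad \Gamma:=\sum_{\cval\in\{0,1\}}\ \sum_{\theta\in\bs n}\Pi_{\cval}^{\theta}.
\end{equation*}
Since $A_1,A_2$ may be enlarged at will we may assume $\alice1,\alice2$ measure projectively, so each $\Pi_{\cval}^{\theta}$ is a projector (an orthogonal sum over $x$ of products of commuting projectors on $B$, $A_1$ and $A_2$). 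The whole problem then reduces to the operator-norm estimate $\norm{\Gamma}\le 2^{n}+(1+c)^{n}$, which gives exactly $p_0+p_1\le 1+\lambda_0^{n}$.

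To estimate $\norm{\Gamma}$ I would invoke the operator-norm lemma for sums of positive operators: if $\{A_i\}_{i\in I}$ are positive semidefinite and $\{\pi_1,\dots,\pi_{\abs{I}}\}$ are mutually orthogonal permutations of $I$ (i.e.\ $\pi_a(i)\ne\pi_b(i)$ whenever $a\ne b$), then $\norm{\sum_i A_i}\le\sum_a\max_i\norm{\sqrt{A_i}\sqrt{A_{\pi_a(i)}}}$. I take $I=\{0,1\}\times\bs n$ and the $2^{n+1}$ translations $\pi_{(\alpha,\beta)}:(\cval,\theta)\mapsto(\cval\oplus\alpha,\theta\oplus\beta)$ of the group $\amsbb{Z}_{2}^{\,n+1}$, which are mutually orthogonal; as the $\Pi_{\cval}^{\theta}$ are projectors this yields
\begin{equation*}
\norm{\Gamma}\le\sum_{\alpha\in\{0,1\}}\ \sum_{\beta\in\bs n}\ \max_{\cval,\theta}\ \norm{\Pi_{\cval}^{\theta}\,\Pi_{\cval\oplus\alpha}^{\theta\oplus\beta}}.
\end{equation*}
The $\alpha=0$ block keeps the commitment fixed and reproduces the original monogamy game: I would show $\norm{\Pi_{\cval}^{\theta}\Pi_{\cval}^{\theta'}}\le c^{\,w}$ with $w=\abs{\{k:\theta_k\ne\theta'_k\}}$, whence $\sum_{\beta}c^{\,\abs{\{k:\beta_k=1\}}}=\sum_{w=0}^{n}\binom{n}{w}c^{w}=(1+c)^{n}$. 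The $\alpha=1$ block mixes the two commitments, and here I would prove $\norm{\Pi_0^{\theta}\Pi_1^{\theta'}}\le c^{\,N}$ with $N=\abs{\{k:\theta_k=0,\ \theta'_k=1\}}$ (and symmetrically for $\Pi_1^{\theta}\Pi_0^{\theta'}$). For a fixed $\beta$ one may choose $\theta$ so that every disagreement sits at $\theta_k=1,\theta'_k=0$, forcing $N=0$, so each term is $\le 1$ and this block contributes at most $\sum_{\beta}1=2^{n}$. Summing the two blocks gives $\norm{\Gamma}\le(1+c)^{n}+2^{n}$ and settles the case $\delta=0$.

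The main obstacle is the two pairwise norm bounds, and especially the cross-commitment one. Both rest on the tensor structure $\ket{x^{\theta}}=\bigotimes_k\ket{\psi_{x_k}^{\theta_k}}$ of Bob's register: expanding $\Pi_{\cval}^{\theta}\Pi_{\cval'}^{\theta'}$ over $x,x'$ produces the scalar $\braket{x^{\theta}}{x'^{\theta'}}=\prod_k\braket{\psi_{x_k}^{\theta_k}}{\psi_{x'_k}^{\theta'_k}}$ multiplied by products of Alice's measurement projectors. The delicate part is the per-round bookkeeping: a round should contribute the overlap factor $c$ of \eqref{eq:overlap} \emph{only} when it is examined by both openings and was prepared in mutually unbiased bases, i.e.\ when $\theta_k=0$ and $\theta'_k=1$, while a round examined by neither opening must decouple, because the unconstrained sum over Alice's outcome $y_k$ completes to the identity on that factor and contributes $1$. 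Showing that precisely the doubly-checked conjugate rounds pay the uncertainty price $c$ and no others do — and carrying this through for an arbitrary splitting isometry of $\alice0$ and arbitrary projective $P^{\cval},Q^{\cval}$ — is exactly where the monogamy/no-cloning structure is used and is the technical heart of the argument.

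For $0<\delta<\lambda_1$ the acceptance sets become Hamming balls: $\alice1,\alice2$ are accepted whenever $\dham(x_{S},y_{S})\le\delta$ (respectively on $T$), so each $\Pi_{\cval}^{\theta}$ now sums over a ball of strings rather than a single one. Repeating the pairwise estimates, a doubly-checked conjugate round is ``lost'' by the cheater unless it happens to be one of the few tolerated error positions, so the clean factor $c^{N}$ is replaced by a sum over the Hamming ball, i.e.\ a binomial tail. Re-summing exactly as before leaves the probability that a $\mathrm{Binomial}(n,\lambda_1)$ variable falls below $\delta n$; applying the Chernoff bound (Lemma~\ref{lem:chernoff}) in its second form with mean $\mu=\lambda_1 n$ and threshold $s=\delta n$ produces $\exp\!\big(-\frac{1}{2}\big(\sqrt{\lambda_1}-\delta/\sqrt{\lambda_1}\big)^2 n\big)$, and the hypothesis $0<\delta<\lambda_1$ is precisely the requirement $0<s<\mu$ under which the bound is meaningful.
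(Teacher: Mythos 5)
Your reduction of $p_{0}+p_{1}$ to $2^{-n}\norm{\Gamma}$ and the appeal to the overlapping\--permutation lemma are fine in themselves, but the pairwise estimate you need for the $\alpha=0$ block is false, and this kills the argument. For a fixed commitment $\cval$ the projector $\Pi_{\cval}^{\theta}$ only tests the rounds $k$ with $\theta_{k}=\cval$, so a round where $\theta_{k}\neq\theta'_{k}$ is examined by \emph{at most one} of $\Pi_{\cval}^{\theta}$ and $\Pi_{\cval}^{\theta'}$; by your own bookkeeping rule such a round contributes a factor $1$, not $c$, and the same-$\cval$ block does \emph{not} reproduce the monogamy game (where both players must reproduce the full string). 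Concretely, for $n=1$, $\delta=0$, $\cval=0$, $\theta=0$, $\theta'=1$: since $S(\theta')=\emptyset$ no consistency is checked and $\Pi_{0}^{1}=\mathbb{1}\otimes\sum_{y}P_{y}^{0}\otimes Q_{y}^{0}$, so projectivity gives $\Pi_{0}^{0}\Pi_{0}^{1}=\Pi_{0}^{0}$ and $\norm{\Pi_{0}^{0}\Pi_{0}^{1}}=1>c$. In fact $\norm{\Pi_{0}^{\theta}\Pi_{0}^{\theta'}}=1$ for \emph{all} $\theta,\theta'$: with the honest strategy $P_{y}^{0}=Q_{y}^{0}=\ketbraq{y}$, any product vector whose $B$-factor equals $\ket{y_{k}}$ on $S(\theta)\cup S(\theta')$ is a common $+1$ eigenvector of both projectors. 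Hence the $\alpha=0$ block contributes $2^{n}$ rather than $(1+c)^{n}$, your decomposition yields only $\norm{\Gamma}\leq 2^{n+1}$, i.e.\ the trivial $p_{0}+p_{1}\leq 2$. The structural reason is that any bound of the form $\sum_{a}\max_{i}\norm{\cdots}$ lets the analysis pick the worst $\theta$ for each permutation, and in this game the worst $\theta$ always avoids every doubly-checked conjugate round — in both blocks, as your own computation of the $\alpha=1$ block already shows.

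The paper extracts the ``$1$'' and the ``$\varepsilon$'' from the opposite places. The trivial part is that Alice can always open \emph{one} value; it is removed first via the operator inequality $A\otimes\mathbb{1}+\mathbb{1}\otimes B\leq\mathbb{1}\otimes\mathbb{1}+A\otimes B$, which gives $\Pi_{0}^{\theta}+\Pi_{1}^{\theta}\leq\mathbb{1}+\Pi_{c}^{\theta}$ with $\Pi_{c}^{\theta}$ the cross-game projector in which \alice{1} must open $0$ and \alice{2} must open $1$. The nontrivial step is then $\norm{\ave{\Pi_{c}^{\theta}}}\leq\lambda_{0}^{n}$ (for $\delta=0$), obtained not by a permutation trick but by diagonalising the $\theta$-average exactly: the family $P_{y}^{0}\otimes Q_{z}^{1}$ is orthogonal, so $\ave{\Pi_{c}^{\theta}}$ is block diagonal with blocks that factorise over rounds into $\rho_{b,c}=\frac{1}{2}(\ketbraq{\psi_{b}^{0}}+\ketbraq{\psi_{c}^{1}})$, whose largest eigenvalue is $\lambda_{0}=(1+c)/2$. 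Because the average over $\theta_{k}$ is retained in every round, each round pays a factor $\leq\lambda_{0}$ and no adversarial choice of $\theta$ can escape it; the $\delta>0$ case then reduces to the binomial tail and the Chernoff step exactly as you describe at the end. If you want to keep your architecture, the cross term $\Pi_{0}^{\theta}\Pi_{1}^{\theta'}$ — not the same-$\cval$ term — is where the factors of $c$ live, and you must bound its average over $\theta$ rather than its maximum.
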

\begin{proof}
Let us write the sum out explicitly
\begin{equation}
\label{eq:p0-plus-p1}
p_{0} + p_{1} = 2^{-n} \sum_{\theta \bsr{n}} \tr \big( [\Pi_{0}^{\theta} + \Pi_{1}^{\theta}] \sigma_{ B A_{1} A_{2} } \big).
\end{equation}
Adding up the two projectors (for a particular value of $\theta$) gives
\begin{equation*}
\Pi_{0}^{\theta} + \Pi_{1}^{\theta} = \sum_{x \bsr{n}} \ketbraq{x^{\theta}} \otimes \bigg[ \sum_{\substack{y \bsr{n}\\ \dham(x_{S}, y_{S}) \leq \delta}} P_{y}^{0} \otimes Q_{y}^{0} \quad + \sum_{\substack{y \bsr{n}\\ \dham(x_{T}, y_{T}) \leq \delta}} P_{y}^{1} \otimes Q_{y}^{1} \bigg].
\end{equation*}
The terms in the square bracket can be upper bounded by replacing one of the measurement operators by the identity matrix. Therefore,
\begin{gather*}
\sum_{\substack{y \bsr{n}\\ \dham(x_{S}, y_{S}) \leq \delta}} P_{y}^{0} \otimes Q_{y}^{0} \quad + \sum_{\substack{y \bsr{n}\\ \dham(x_{T}, y_{T}) \leq \delta}} P_{y}^{1} \otimes Q_{y}^{1} \leq \sum_{\substack{y \bsr{n}\\ \dham(x_{S}, y_{S}) \leq \delta}} P_{y}^{0} \otimes \mathbb{1} \quad + \mathbb{1} \otimes \sum_{\substack{y \bsr{n}\\ \dham(x_{T}, y_{T}) \leq \delta}} Q_{y}^{1}\\
\leq \mathbb{1} \otimes \mathbb{1} \quad + \sum_{\substack{y \bsr{n}\\ \dham(x_{S}, y_{S}) \leq \delta}} P_{y}^{0} \otimes \sum_{\substack{z \bsr{n}\\ \dham(x_{T}, z_{T}) \leq \delta}} Q_{z}^{1},
\end{gather*}
where the last step follows from the following operator inequality
\begin{equation}
\label{eq:operator-inequality}
A \otimes \mathbb{1} + \mathbb{1} \otimes B = \mathbb{1} \otimes \mathbb{1} + A \otimes B - (\mathbb{1} - A) \otimes (\mathbb{1} - B)\leq \mathbb{1} \otimes \mathbb{1} + A \otimes B,
\end{equation}
which holds for any $0 \leq A, B \leq \mathbb{1}$. Therefore,
\begin{equation}
\label{eq:projector-inequality}
\Pi_{0}^{\theta} + \Pi_{1}^{\theta} \leq \sum_{x \bsr{n}} \ketbraq{x^{\theta}} \otimes \mathbb{1} \otimes \mathbb{1} + \Pi_{c}^{\theta} = \mathbb{1} \otimes \mathbb{1} \otimes \mathbb{1} + \Pi_{c}^{\theta},
\end{equation}
where
\begin{equation*}
\Pi_{c}^{\theta} = \sum_{x \bsr{n}} \ketbraq{x^{\theta}} \otimes \sum_{\substack{y \bsr{n}\\ \dham(x_{S}, y_{S}) \leq \delta}} P_{y}^{0} \otimes \sum_{\substack{z \bsr{n}\\ \dham(x_{T}, z_{T}) \leq \delta}} Q_{z}^{1}
\end{equation*}
is a projector for the ``cross-game'', in which \alice{1} has to unveil a string consistent with $\cval = 0$ and \alice{2} has to unveil a string consistent with $\cval = 1$.\footnote{Our security analysis goes through a thought experiment in which \alice{1} and \alice{2} are challenged to unveil \emph{different} bits and in the current method this connection is made through the operator inequality \eqref{eq:operator-inequality}. Interestingly enough, our previous method relies on the same idea but expressed at the level of no-signalling probability distributions (see Lemma V.1 of Ref.~\cite{kaniewski13}).} Combining Eqs.~\eqref{eq:p0-plus-p1} and \eqref{eq:projector-inequality} gives
\begin{equation*}
p_{0} + p_{1} \leq 1 + 2^{-n} \sum_{\theta \bsr{n}} \tr \big( \Pi_{c}^{\theta} \sigma_{ B A_{1} A_{2} } \big) = 1 + \tr \big( \ave{\Pi_{c}^{\theta}} \sigma_{ B A_{1} A_{2} } \big) \leq 1 + \norm{\ave{\Pi_{c}^{\theta}}},
\end{equation*}
where $\ave{\cdot}$ denotes averaging over $\theta$, i.e.~$\ave{\Pi_{c}^{\theta}} = 2^{- n} \sum_{\theta \bsr{n}} \Pi_{c}^{\theta}$, and $\norm{\cdot}$ denotes the Schatten $\infty$-norm (defined in Section \ref{sec:linear-algebra}). Changing the order of summation in $\Pi_{c}^{\theta}$ gives
\begin{equation*}
\Pi_{c}^{\theta} = \sum_{y, z \bsr{n}} \sum_{\substack{x \bsr{n}\\ \dham(x_{S}, y_{S}) \leq \delta\\ \dham(x_{T}, z_{T}) \leq \delta}} \ketbraq{x^{\theta}} \otimes P_{y}^{0} \otimes Q_{z}^{1}.
\end{equation*}
Now, it is clear that only the $x$-dependent part needs to be averaged:
\begin{equation*}
\ave{\Pi_{c}^{\theta}} = 2^{-n} \sum_{\theta \bsr{n}} \Pi_{c}^{\theta} = \sum_{y, z \bsr{n}} B_{yz} \otimes P_{y}^{0} \otimes Q_{z}^{1},
\end{equation*}
where
\begin{equation*}
B_{yz} = 2^{-n} \sum_{\theta \bsr{n}} \sum_{\substack{x \bsr{n}\\ \dham(x_{S}, y_{S}) \leq \delta\\ \dham(x_{T}, z_{T}) \leq \delta}} \ketbraq{x^{\theta}}.
\end{equation*}
Since the product $P_{y}^{0} \otimes Q_{z}^{1}$ yields orthogonal projectors, we have
\begin{equation}
\label{eq:cross-projector-norm}
\norm{\ave{\Pi_{c}^{\theta}}} = \max_{y, z \bsr{n}} \norm{B_{yz}}.
\end{equation}
To identify values of $y$ and $z$ which maximise the norm we take a closer look at the matrices $B_{yz}$. For every $\theta$ define $u(\theta) \bsr{n}$ to be the string that satisfies $[u(\theta)]_{S} = y_{S}$ and $[u(\theta)]_{T} = z_{T}$. Relabelling $x \mapsto x \oplus u(\theta)$ yields
\begin{equation*}
B_{yz} = 2^{-n} \sum_{\theta \bsr{n}} \sum_{\substack{x \bsr{n}\\ \wham(x_{S}) \leq \delta\\ \wham(x_{T}) \leq \delta}} \ketbraq{(x \oplus u(\theta))^{\theta}}.
\end{equation*}
The constraints on the second sum can be relaxed by noting that $\wham(x_{S}) \leq \delta$ and $\wham(x_{T}) \leq \delta$ imply $\wham(x) \leq \delta$. Therefore,
\begin{equation*}
B_{yz} \leq B_{yz}' = 2^{-n} \sum_{\theta \bsr{n}} \sum_{\substack{x \bsr{n}\\ \wham(x) \leq \delta}} \ketbraq{(x \oplus u(\theta))^{\theta}},
\end{equation*}
which makes the second sum independent of $\theta$. Hence, the summation over $\theta$ can be performed first and due to the tensor product structure we have
\begin{equation*}
\ketbraq{x^{\theta}} = \bigotimes_{k = 1}^{n} \ketbraq{\psi_{x_{k}}^{\theta_{k}}}
\end{equation*}
and
\begin{equation*}
\sum_{\theta \bsr{n}} \ldots \iff \bigotimes_{k = 1}^{n} \sum_{\theta_{k} \in \{0, 1\}} \ldots.
\end{equation*}
Therefore,
\begin{equation*}
2^{-n} \sum_{\theta \bsr{n}} \ketbraq{(x \oplus u(\theta))^{\theta}} = \bigotimes_{k = 1}^{n} \rho_{x_{k} \oplus y_{k}, x_{k} \oplus z_{k}}.
\end{equation*}
where
\begin{equation*}
\rho_{b, c} = \frac{1}{2} (\ketbraq{\psi_{b}^{0}} + \ketbraq{\psi_{c}^{1}})
\end{equation*}
for $b, c \in \{0, 1\}$. Note that $\rho_{b, c} + \rho_{1 - b, 1 - c} = \mathbb{1}$ so they are diagonal in the same basis. Therefore, without loss of generality we can write
\begin{equation*}
\rho_{b, c} = \sum_{t \in \{0, 1\}} \lambda_{t}^{b \oplus c} \ketbraq{e_{t}^{b \oplus c}},
\end{equation*}
for $b, c \in \{0, 1\}$, where $\lambda_{0}^{b \oplus c} + \lambda_{1}^{b \oplus c} = 1$. In particular, we have
\begin{equation*}
\Big( \bigotimes_{k = 1}^{n} \rho_{x_{k} \oplus y_{k}, x_{k} \oplus z_{k}} \Big) \Big( \bigotimes_{k = 1}^{n} \ket{e_{v_{k}}^{y_{k} \oplus z_{k}}} \Big) = \bigotimes_{k = 1}^{n} \lambda_{x_{k} \oplus y_{k} \oplus v_{k}}^{y_{k} \oplus z_{k}} \ket{e_{v_{k}}^{y_{k} \oplus z_{k}}}.
\end{equation*}
Therefore, we also know the eigenbasis of
\begin{equation*}
B_{yz}' = \sum_{\substack{x \bsr{n}\\ \wham(x) \leq \delta}} \bigotimes_{k = 1}^{n} \rho_{x_{k} \oplus y_{k}, x_{k} \oplus z_{k}},
\end{equation*}
and the largest eigenvalue equals
\begin{equation*}
\norm{B_{yz}'} = \max_{v \bsr{n}} \sum_{\substack{x \bsr{n}\\ \wham(x) \leq \delta}} \prod_{k = 1}^{n} \lambda_{x_{k} \oplus y_{k} \oplus v_{k}}^{y_{k} \oplus z_{k}}.
\end{equation*}
Recall from Eq.~\eqref{eq:cross-projector-norm} that the expression we want to bound is 
\begin{equation*}
\max_{y, z \bsr{n}} \norm{B_{yz}'} = \max_{v, y, z \bsr{n}} \sum_{\substack{x \bsr{n}\\ \wham(x) \leq \delta}} \prod_{k = 1}^{n} \lambda_{x_{k} \oplus y_{k} \oplus v_{k}}^{y_{k} \oplus z_{k}} = \max_{a, b \bsr{n}} \sum_{\substack{x \bsr{n}\\ \wham(x) \leq \delta}} \prod_{k = 1}^{n} \lambda_{x_{k} \oplus b_{k}}^{a_{k}} .
\end{equation*}
It is clear that every bit of $a$ and $b$ should be chosen to satisfy $\lambda_{b_{k}}^{a_{k}} = \max_{s, t} \lambda_{s}^{t} := \lambda_{0}$. Then
\begin{equation*}
\max_{y, z \bsr{n}} \norm{B_{yz}'} = \sum_{\substack{x \bsr{n}\\ \wham(x) \leq \delta}} \prod_{k = 1}^{n} \lambda_{x_{k}} = \sum_{k = 0}^{ \lfloor \delta n \rfloor } {n \choose k} \lambda_{0}^{n - k} \lambda_{1}^{k},
\end{equation*}
where $\lambda_{1} = 1 - \lambda_{0}$.
Finally, since we know that
\begin{equation*}
\norm{\ave{\Pi_{c}^{\theta}}} = \max_{y, z \bsr{n}} \norm{B_{yz}} \leq \max_{y, z \bsr{n}} \norm{B_{yz}'},
\end{equation*}
we obtain the security guarantee of the form
\begin{equation*}
\varepsilon = \sum_{k = 0}^{ \lfloor \delta n \rfloor } {n \choose k} \lambda_{0}^{n - k} \lambda_{1}^{k}.
\end{equation*}
For $\delta = 0$ there is only one term in the sum while for $0 < \delta < \lambda_{1}$ we use the Chernoff bound (Lemma \ref{lem:chernoff}) to obtain the final result of the lemma.
\end{proof}
The protocol is secure as long as $\delta < \lambda_{1}$, which combined with Eq.~\eqref{eq:tmo-correctness} implies that correctness and security is possible as long as
\begin{equation*}
\err < \lambda_{1}.
\end{equation*}
This allows us to check whether a particular experimental setup (characterised by $\err$ and $\lambda_{1}$) allows for a secure implementation of the protocol. For example, if the source emits perfect BB84 states (or, in fact, any two mutually unbiased bases on a qubit) we can tolerate up to $14.6\%$ of errors.
\section{Modelling imperfect devices}
\label{sec:modelling-devices}
While we have allowed our states to be imperfect and undergo some noise process in transit, we have implicitly assumed that every time \bob{0} pushes a button a qubit in a well-defined state is sent towards \alice{0}, who always detects it to obtain a particular classical outcome. As of today there is no physical system which matches this idealised description to a reasonable degree. Therefore, in collaboration with an experimental group at the University of Geneva, we have developed a new version of the protocol, which can be implemented using currently available devices.

First of all, instead of a single-photon source we use a weak-coherent source with phase randomisation\footnote{We have decided to use phase randomisation because then the number of photons in a pulse can be modelled as a classical random variable, which turns out to be convenient for the security analysis.}, which emits pulses of light in which the number of photons is a Poisson-distributed random variable. Let $\ket{r}$ be the Fock state of $r$ photons and $\mu$ be the average number of photons per pulse (an adjustable parameter of the source). Then, the ensemble emitted by a weak-coherent source can be written as
\begin{equation*}
\rho = \sum_{r = 0}^{\infty} p_{r} \ketbraq{r} \nbox{for} p_{r} = e^{- \mu} \cdot \frac{\mu^{r}}{r!}.
\end{equation*}
A direct consequence of this model is that some pulses might contain more than one photon and we refer to those as \emph{multiphoton emissions}. Such pulses constitute a deadly threat to our protocol since \alice{0} could measure the first photon in the computational basis, the second photon in the Hadamard basis and, hence, obtain enough information to open either value with certainty. Clearly, multiphoton emissions do not contribute to security and so their number must be rigorously controlled.

Besides the imperfections of the source, there is also a certain probability that a photon might be lost either in transit or during the detection process. Let $\eta$ be the \emph{detection efficiency}, i.e.~the probability that a photon sent by \bob{0} is detected by \alice{0}. We assume that the loss process affects every photon independently so the number of photons detected by \alice{0} is, again, a Poisson-distributed random variable. The probability of detecting $r$ photons equals
\begin{equation}
\label{eq:pr-definition}
p_{r}(\mu, \eta) = e^{-\mu \eta} \cdot \frac{(\mu \eta)^{r}}{r!}.
\end{equation}
We assume that the detection efficiency depends neither on the measurement setting nor on the incoming state. Note that while this is a natural assumption from the theoretical point of view, it does not always hold for an experimental setup (it is common to have slightly different detection efficiencies for measurements in different bases) and this issue needs to be addressed while analysing experimental data as explained in Section \ref{sec:tmo-experimental}. Moreover, it has recently been demonstrated that strong pulses of light might allow Bob to learn some information about Alice's basis setting \cite{lydersen10}, which is not included in our analysis.

We follow the standard approach (presented in Section \ref{sec:cryptographic-protocols}), i.e.~we assume that the devices used by the honest party are trusted (their characterisation including any imperfections is known) but the dishonest party is limited only by the laws of physics. The following table lists the models used for each of the three distinct scenarios.
\begin{center}
\begin{tabular}{c c c c c}
Alice & Bob & source & losses & errors\\
\hline
honest & honest & weak-coherent source & yes & yes\\
honest & dishonest & perfect & detectors only & N/A\\
dishonest & honest & weak-coherent source & no & no\\
\end{tabular}
\end{center}
\section{Protocol with backreporting}
If we try to implement the original protocol using the equipment described above, we run into a very simple problem: in most rounds \alice{0} simply does not see a click (either because the photon was never emitted or it was not detected). What is she supposed to do then? One solution would be to simply generate a random bit and act as if this was the outcome of the measurement. This solution works as long as losses are infrequent and can be ``hidden'' within the error threshold. However, in the experimental setup described above losses are extremely common. In fact, it is the detection events that are rare. Therefore, flipping a coin for every loss is not a feasible solution.

We solve this problem using a standard technique known as \emph{backreporting}, which requires \alice{0} to inform \bob{0} at the end of the quantum exchange which rounds were successful (i.e.~a photon was detected) and only these rounds are used for the protocol (all the remaining data is discarded). This clearly restores correctness but, unfortunately, it opens a new security loophole as dishonest \alice{0} might also backreport single-photon rounds in order to increase the contribution of multiphoton emissions, which she can win with certainty. To avoid this threat \bob{0} must carefully monitor the number of rounds backreported by Alice. Let $\cM$ be the set of rounds in which \alice{0} observed a click, which we call \emph{the valid set}. \bob{0} only continues with the protocol if the size of the valid set exceeds a certain threshold, $m := \abs{ \cM } \geq \gamma n$, where $\gamma \in [0, 1)$ is an adjustable parameter of the protocol called the \emph{detection threshold}.
\phantomsection
\backreportingtrue
\tmoprot
\subsection{Correctness}
To guarantee correctness we must first ensure that Alice registers a sufficient number of clicks. Asymptotically, we simply require that the probability of seeing a click (i.e.~detecting at least one photon) is larger than the detection threshold
\begin{equation*}
\sum_{r = 1}^{\infty} p_{r} = 1 - p_{0} > \gamma.
\end{equation*}
Using Eq.~\eqref{eq:pr-definition} to express $p_{0}$ in terms of $\mu$ and $\eta$ gives
\begin{equation*}
e^{-\mu \eta} + \gamma < 1.
\end{equation*}
Since in our model errors are independent of losses or multiphoton emissions, the second correctness condition remains the same as before, i.e.~Eq.~\eqref{eq:tmo-correctness}.
\subsection{Security for honest Alice}
\label{sec:honest-alice-backreporting}
Since backreporting introduces communication from \alice{0} to \bob{0} in the commit phase, security for honest Alice is no longer unconditionally true. To make sure that the valid set $\cM$ does not contain any information about the commitment we must ensure that the detection efficiencies do not depend on the basis choice regardless of the state that the dishonest \bob{0} sends in. We assume that the detection system used by Alice satisfies these properties (see Section \ref{sec:modelling-devices}).
\subsection{Security for honest Bob}
Security analysis for honest Bob is an extension built on top of the previous argument. We take advantage of the fact that all the experimental imperfections (e.g.~multiphoton emissions or no-detection events) can be modelled as classical random variables and that for particular values of these random variables Proposition \ref{prop:tmo-binding} provides an explicit security bound.

In every round a certain number of photons (between $0$ and $\infty$) is emitted (recall that in this case we assume that \alice{0} has perfect detectors, i.e.~$\eta = 0$). Pulses with no photons affect correctness but do not constitute a security threat. Pulses with one photon is what the original protocol calls for and what we analysed in the previous section. Finally, multiphoton pulses are a serious threat as they allow \alice{0} to obtain sufficient information to successfully open both values of the commitment. To simplify our analysis we replace all the zero-photon emissions by single-photon emissions (which only gives Alice more power). Eq.~\eqref{eq:pr-definition} with $\eta = 0$ implies that the probability of a multiphoton emission in a particular round equals
\begin{equation*}
p_{m} = 1 - e^{- \mu} ( 1 + \mu ).
\end{equation*}
The number of multiphoton rounds $N_{m}$ is a binomially-distributed random variable
\begin{equation}
\label{eq:multiphoton-rounds}
\Pr[N_{m} = k] = {n \choose k} p_{m}^{k} (1 - p_{m})^{n - k}.
\end{equation}
The optimal strategy of dishonest \alice{0} is to discard as many single-photon rounds as possible. It is clear that if she can discard all of them, she is left with multiphoton emissions only and no security can be guaranteed. Therefore, the necessary condition for security is that the number of multiphoton rounds is lower than the detection threshold
\begin{equation*}
N_{m} < \lceil \gamma n \rceil.
\end{equation*}
After using up the entire backreporting allowance the number of valid rounds equals $m = \lceil \gamma n \rceil$ but there are only $\lceil \gamma n \rceil - N_{m}$ (which is now guaranteed to be a positive number) single-photon rounds among them. Honest Bob believes that they are performing a bit commitment protocol of $\lceil \gamma n \rceil$ rounds but there is a certain number of multiphoton ones, which Alice can win ``for free''. Hence, she can concentrate her error allowance on the single-photon rounds and the security we achieve is that of playing a game of $\lceil \gamma n \rceil - N_{m}$ rounds with the absolute (non-fractional) error allowance of $\delta \lceil \gamma n \rceil$, which gives the effective (fractional) error allowance of
\begin{equation}
\label{eq:tmo-fractional-error}
\delta' = \frac{ \delta \lceil \gamma n \rceil }{ \lceil \gamma n \rceil - N_{m} }.
\end{equation}
The proof in Section~\ref{sec:tmo-honest-bob} is valid only if the effective (fractional) error allowance, $\delta'$, satisfies
\begin{equation*}
\delta' < \lambda_{1},
\end{equation*}
where $\lambda_{1}$ measures the incompatibility of the measurements performed by \bob{0}. Hence, in our case we require
\begin{equation}
\label{eq:security-bound}
\delta \lceil \gamma n \rceil < (\lceil \gamma n \rceil - N_{m}) \lambda_{1}.
\end{equation}
In the asymptotic limit it is sufficient to look at the expectation value
\begin{equation*}
\amsbb{E}[N_{m}] = p_{m} n = [1 - e^{-\mu} (1 + \mu)] n,
\end{equation*}
which substituted into Eq.~\eqref{eq:security-bound} gives
\begin{equation*}
e^{-\mu} (1 + \mu) + (1 - \delta/\lambda_{1} ) \gamma > 1.
\end{equation*}
\subsection{Requirements on the honest devices}
Having derived explicit criteria for correctness and security we can check whether a given experimental setup allows for a secure implementation of the protocol. The correctness and security constraints are
\begin{gather*}
e^{-\mu \eta} + \gamma < 1,\\
\err < \delta,\\
e^{-\mu} (1 + \mu) + (1 - \delta/\lambda_{1} ) \gamma > 1.
\end{gather*}
It is clear that $\delta$ and $\gamma$ (parameters of the protocol) can be taken arbitrarily close to the values which would turn the first two conditions into equalities. This leaves us with only one, but rather complicated, condition
\begin{equation*}
e^{-\mu} (1 + \mu) + (1 - \err/\lambda_{1} ) (1 - e^{-\mu \eta}) > 1.
\end{equation*}
This expression allows us to check whether for devices of certain quality (quantified by $\lambda_{1}$, $\err$ and $\eta$) there exists a value of $\mu$ that makes the protocol both correct and secure.\footnote{Note that changing the mean photon number $\mu$ affects the physical aspect of the protocol so it might influence other physical parameters like the error rate. On the other hand, parameters like $\delta$ or $\gamma$, which are only relevant for the post-processing, do not have such an effect.}
\subsection{Explicit security calculation}
The asymptotic analysis is relevant as $n \to \infty$ but in any practical scenario the number of rounds is finite. Therefore, we want to explicitly calculate security guarantees as a function of $n$. Since correctness is verified experimentally and security for honest Alice is perfect by assumption, we only need to calculate security for honest Bob.

Let $E$ denote the event that Alice successfully cheats in the bit commitment protocol. We have shown in Eq.~\eqref{eq:tmo-fractional-error} that the probability of cheating successfully depends on the number of multiphoton emissions. Therefore, let us write
\begin{equation}
\label{eq:summation}
\Pr[E] = \sum_{k = 0}^{n} \Pr[E | N_{m} = k] \Pr[N_{m} = k].
\end{equation}
Equation~\eqref{eq:tmo-security-bound} allows us to bound $\Pr[E | N_{m} = k]$ as long as the number of multiphoton emissions is below the threshold. For $k < k_{t} := \gamma n (1 - \delta / \lambda_{1})$ we have
\begin{equation*}
\Pr[E | N_{m} = k] \leq \exp \bigg(- \frac{1}{2} \Big( \sqrt{(\gamma n - k) \lambda_{1}} - \frac{\delta \gamma n}{\sqrt{(\gamma n - k) \lambda_{1}}} \Big)^{2} \bigg).
\end{equation*}
On the other hand, for $k \geq k_{t}$ there is no security and the trivial bound, $\Pr[E | N_{m} = k] \leq 1$, is the best we can hope for. The second term is given by Eq.~\eqref{eq:multiphoton-rounds}. Performing the summation \eqref{eq:summation} for any particular values of the parameters is a straightforward exercise in any package for numerical calculations (e.g.~Octave \cite{eaton09}).
\section{Experimental implementation}
\label{sec:tmo-experimental}
Protocol \hyperref[prot:tmo-backreporting]{8} requires the use of three equidistant locations on a line. This is clearly quite difficult to do if we want to take maximal advantage of the size of the Earth. Therefore, we have implemented a modified protocol, in which only two locations are used. The experiment is performed between Geneva (Location 1) and Singapore (Location 2) and achieves secure commitment for 15.6 ms (the maximal duration achievable on the Earth, corresponding to antipodal locations on the surface, equals 21.2 ms). As explained in Section \ref{sec:the-original-protocol} the maximal commitment time equals \emph{half} the time it takes to travel at lightspeed between the two opening locations (we cannot rule out the possibility that dishonest Alice deployed an extra agent exactly in between the two locations, who receives the quantum states at $t = 1$ and only then performs the measurements, cf.~Fig.~\ref{fig:tmo-experimental}).
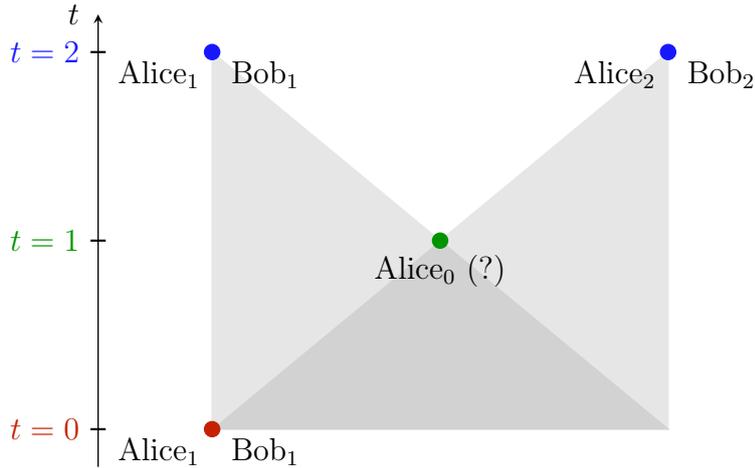
\begin{figure}
\centering
\begin{tikzpicture}[scale=1, line width=0.5]
\timeaxis{5.5}
\draw [-, thick] (-0.1, 5) to (0.1, 5);
\node[left, color=mycolour1] at (-0.1, 5) {$t = 2$};
\draw [fill, lightgray] (1.5, 0) -- (1.5, 5) -- (4.5, 2.5) -- (7.5, 5) -- (7.5, 0);
\draw [fill, darkgray] (4.5, 2.5) -- (7.5, 0) -- (1.5, 0);
\node at (0.8, -0.3) {\alice{1}};
\node at (2.2, -0.3) {\bob{1}};
\draw[fill, color=mycolour3] (1.5, 0) circle [radius=0.1];
\draw [-, thick] (-0.1, 2.5) to (0.1, 2.5);
\node[left, color=mycolour2] at (-0.1, 2.5) {$t = 1$};
\draw[fill, color=mycolour2] (4.5, 2.5) circle [radius=0.1];
\node at (4.5, 2.1) {\alice{0} (?)};
\node at (0.8, 4.7) {\alice{1}};
\node at (2.2, 4.7) {\bob{1}};
\node at (6.8, 4.7) {\alice{2}};
\node at (8.2, 4.7) {\bob{2}};
\draw[fill, color=mycolour1] (1.5, 5) circle [radius=0.1];
\draw[fill, color=mycolour1] (7.5, 5) circle [radius=0.1];
\end{tikzpicture}
\caption{Spacetime diagram for the experimental implementation of Protocol \hyperref[prot:tmo-backreporting]{8}. The red dot represents the commit phase while the blue dots represent the open phase. The shaded area corresponds to the past light cones of the events of the open phase. The green dot determines the commitment point, i.e.~the \emph{latest} point at which Alice can still perform an honest commitment.}
\label{fig:tmo-experimental}
\end{figure}

Each location hosts one agent for each player synchronised to universal time using a global positioning system (GPS) clocks. The protocol consists of two parts: (i) exchange of quantum information in the commit phase and (ii) exchange of classical information in the open phase. Phase (i) was implemented using a commercial quantum key distribution system Vectis 5100 from ID Quantique located at the University of Geneva (see Fig.~\ref{fig:experiment1}). This system is based on the two-way ``Plug\&Play'' configuration \cite{muller97}: strong optical pulses travel from \alice{1} to \bob{1}, who uses them to encode the BB84 states. Moreover, he attenuates the optical power down to single photon level and sends these weak-coherent pulses back to \alice{1}. Trojan-horse attacks on Bob's side are particularly effective against the ``Plug\&Play'' configuration so the power of the incoming beam is continuously monitored by \bob{1}. To use the quantum key distribution system for bit commitment some software changes must be made, in particular communication from \alice{1} to \bob{1} is restricted to backreporting only. As discussed in Section \ref{sec:honest-alice-backreporting} extra care has to be taken to ensure that the backreported data does not leak any information about the choice of measurements performed by \alice{1}. In our experimental setup the two bases exhibit slightly different detection probabilities so the raw data must be artificially ``equalised''.\footnote{More specifically, with probability $p$ we discard an outcome obtained for the more efficient setting, where $p$ is chosen to make the expected number of detections backreported for both settings equal.}

The duration of the commitment is ultimately limited by the distance between the locations under the assumption that all local communication is instantaneous. However, since the exchange of quantum states in the commit phase takes a considerate amount of time, which would reduce the achievable commitment time, we have performed a \emph{delayed commitment}. In a delayed commitment scheme \alice{1} first commits to a random bit $r$ and only later announces $\cval \oplus r$, which initiates the actual commitment (and determines its value). This allows us to perform all the quantum information exchange in advance, which maximises the commitment time.

Classical information needed by \alice{2} in Singapore to open the commitment was transmitted in advance through the internet (one-time-pad encrypted using pre-shared keys generated by a quantum random generator from ID Quantique). The classical information exchange in the open phase was performed using stand-alone computers equipped with field-programmable gate array (FPGA) to make the transmission time negligible relative to the commitment time (around 3 $\mu$s).
\begin{figure}
\centering
\includegraphics[scale=0.55]{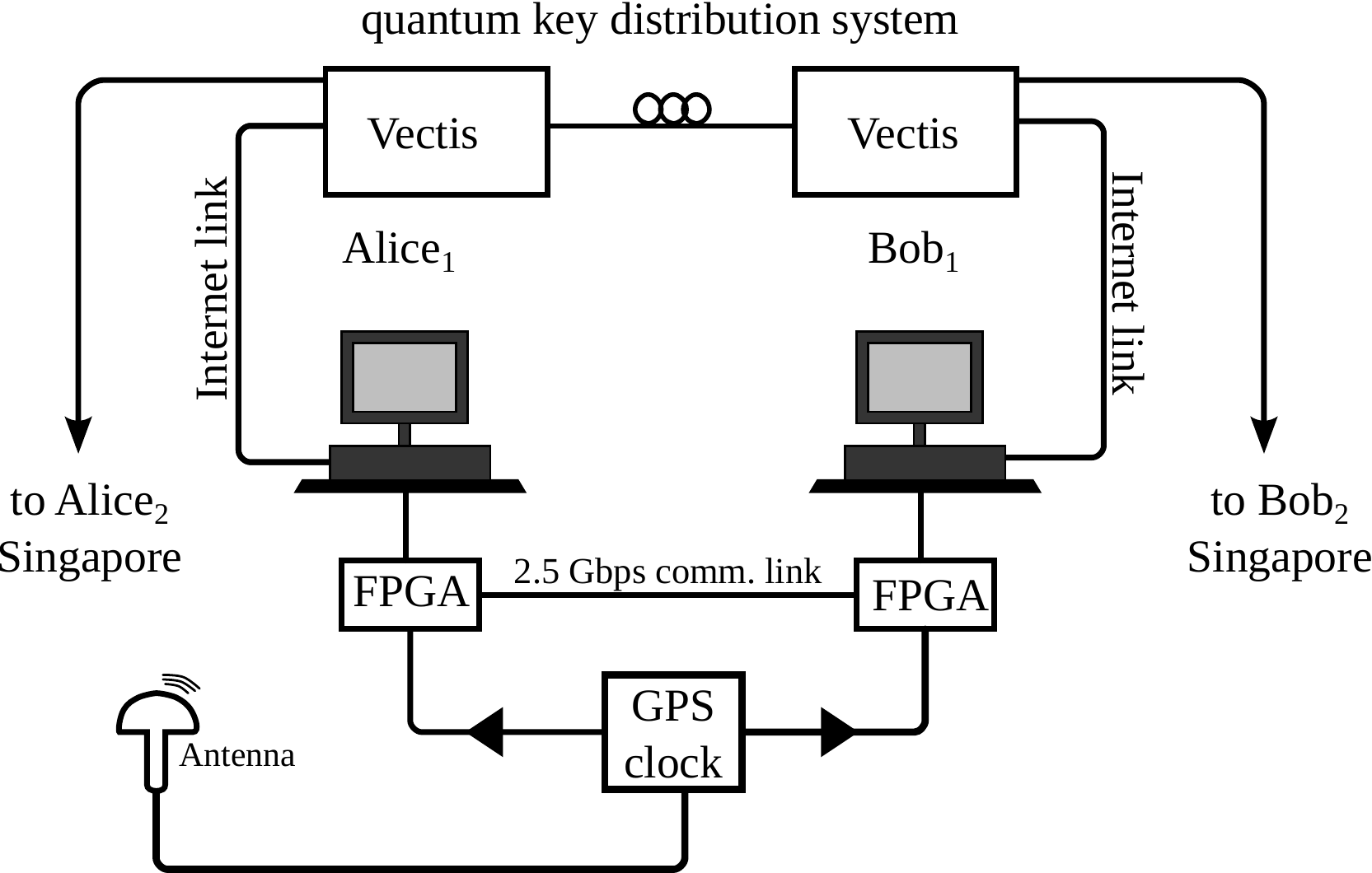}
\caption{Experimental setup located in Geneva. The setup located in Singapore is identical except that \alice{1} and \bob{1} are replaced by \alice{2} and \bob{2}, and that there is no quantum key distribution system (the protocol only requires quantum interactions in the commit phase which as explained in Section~\ref{sec:tmo-experimental} happens in Geneva). Figure reproduced with permission from Tommaso Lunghi.}
\label{fig:experiment1}
\end{figure}
\chapter{Multiround relativistic bit commitment protocol}
\label{chap:multiround}
\emph{This chapter is based on}
\paperC
The quantum protocol discussed in the previous chapter implements secure bit commitment in the $\beta$-split model (one of the minimal splits, in which, as discussed in Section~\ref{sec:commitment-schemes}, no classical protocol can give security). This demonstrates that quantum protocols are strictly more powerful and in this particular case the extra power results from the no-cloning theorem. However, a weak point of that scheme is the length of the commitment, which is limited by the spatial separation of the two opening sites. In particular, for a protocol taking place on the Earth, the commitment time is limited to 21 ms. While this might be sufficient for some purposes, extending the commitment time would be highly desirable. It is clear that if the spatial arrangement is fixed, the only manner to extend the commitment time is to introduce additional rounds of communication. In fact, this idea was proposed by Kent quite early on \cite{kent99, kent05}, where instead of \emph{opening} the commitment Alice \emph{commits} to the information she \emph{would have used} in the unveiling. The original way of ``chaining'' commitments has the drawback that the communication required grows exponentially with the number of rounds \cite{kent99}. This was later rectified by adding a compression scheme on top of the protocol \cite{kent05}. Security of such a chained scheme would follow directly if the individual commitments were composably secure\footnote{In fact, this is exactly the idea of composable security: the protocol is indistinguishable from the ideal primitive in \emph{all possible scenarios}.}. However, we do not know how to prove composable security of relativistic commitment schemes and there is evidence that this might indeed be impossible (e.g.~in Appendix~\ref{app:classical-certification} we show that the classical relativistic \texttt{sBGKW} scheme is not secure according to the usual composable definition). Therefore, the security proof must explicitly consider all the intermediate commitments. Security argument against classical adversaries presented by Kent \cite{kent05} is of asymptotic nature and, therefore, not sufficient for implementation purposes.

\noindent \textbf{Outline:} With the goal of finding a new classical multiround protocol in mind we first revisit the two-round protocol proposed in Ref.~\cite{simard07, crepeau11} and show that security for honest Bob relies on the difficulty of a certain non-local game, whose quantum value was recently investigated by Sikora, Chailloux and Kerenidis \cite{sikora14}. This completes the security analysis and shows that the two-round protocol is secure against quantum adversaries.\footnote{Security for honest Alice is obvious against classical Bob. Security against quantum Bob follows from a simple observation that if only one agent (\bob{1}) is involved in the commit phase of a classical protocol, no advantage can be gained by using quantum systems (cf.~footnote \ref{footnote} in Section \ref{sec:quantum-players}).} With the two-round scheme as our starting point we propose a new classical multiround protocol (with constant communication rate) and analyse its security against \emph{classical} adversaries.\footnote{For more than two rounds security analysis against quantum adversaries becomes rather involved as explained in Section~\ref{sec:quantum-players}.} Correctness can be verified by inspection, security for honest Alice is intuitively obvious (nevertheless formalising it requires some work) and security for honest Bob turns out to be more involved but we eventually derive an explicit and easily computable security bound. Unfortunately, the bound suffers from rather undesirable scaling in the number of rounds (which is proportional to the length of the commitment). While the commitment time is not subject to any fundamental limitations, an implementation using standard digital equipment only allows modest commitment times. In collaboration with the Geneva group we have implemented the protocol to achieve a secure commitment of 2 ms.\footnote{An attentive reader might be puzzled that the new multiround protocol yields commitment time which is significantly shorter than the one previously achieved by the quantum protocol. The solution of this conundrum is that while the previous experiment was performed between Geneva and Singapore, the new multiround one was (for practical reasons) executed between two locations within Switzerland. Performing the multiround protocol between Geneva and Singapore would give a commitment time of 156 ms (a tenfold improvement over the quantum protocol).}
\vspace{12pt}

\noindent \textbf{Note added:} After the completion of this work, two groups independently provided security proofs that give significantly improved security bounds \cite{fehr15a, chakroborty15}. These results imply that Protocol \hyperref[prot:multiround]{9} can be used to realise long-lasting commitments with very modest resources, hence, making it truly practical.
\section{Two-round protocol}
\label{sec:two-round-protocol}
Since the two-round protocol has already been discussed in Section \ref{sec:simple-relativistic-protocols} let us go directly to the security analysis. Recall that the protocol requires two sites labelled Location 1 and Location 2. To simplify notation in this chapter we assume that these locations are separated by $(1 + \epsilon)$ units of length for some $\epsilon > 0$. Therefore, one unit of time is \emph{not} sufficient to transmit information between them.
\sBGKWrel{t = 1}{= 1 + \epsilon}
\noindent (Note that the verification step could be performed slightly earlier using an agent positioned somewhere in between \bob{1} and \bob{2} but since $\varepsilon$ is chosen to be essentially $0$ this makes no difference.)

Correctness is straightforward to check and security for honest Alice comes from the fact that the message that \bob{1} receives in the commit phase is one-time-padded with a uniformly random string.

Security for honest Bob is quantified using Definition \ref{df:binding-two}. Since only one agent (\alice{2}) is involved in the open phase, the distinction between local and global command does not arise. As explained in Section \ref{sec:formal-definition} it is often helpful to explicitly state the cheating game that Alice attempts to win. In the commit phase \alice{1} receives $b \in \{0, 1\}^{n}$ (chosen uniformly at random) and in the open phase \alice{2} receives nothing from Bob but she is challenged to open $\cval \in \{0, 1\}$ chosen uniformly at random. In each phase she is required to output an $n$-bit string, which we denote by $x_{1}$ and $x_{2}$, respectively. Since there is no communication between \alice{1} and \alice{2} this is equivalent to a two-player game which (using the formalism presented in Section \ref{sec:multiplayer-games}) is specified by the uniform input distribution $p(b, \cval) = \frac{1}{2} \cdot \frac{1}{2^{n}}$ and the predicate function
\begin{equation*}
V( b, x_{1}, \cval, x_{2} ) =
\begin{cases}
1 &\nbox{if} x_{1} \oplus x_{2} = \cval \cdot b,\\
0 &\nbox{otherwise.}
\end{cases}
\end{equation*}
Since this game can be seen as a generalisation of the CHSH game (which corresponds to $n = 1$), let us call it $\chshn$ after Ref.~\cite{sikora14}. Calculating the classical value is straightforward but let us do it explicitly for completeness.

As explained in Section \ref{sec:multiplayer-games} we might without loss of generality assume that \alice{1} and \alice{2} employ deterministic strategies, which we denote by functions $f_{1}(b)$ and $f_{2}(\cval)$. Define $H_{\cval}$ as the event over $B$ (private randomness of Bob) that the opening of $\cval$ is accepted
\begin{equation}
\label{eq:two-round-event}
H_{\cval} \iff \cval \cdot B = f_{1}(B) \oplus f_{2}(\cval).
\end{equation}
Since both $H_{0}$ and $H_{1}$ are defined over $B$ it is meaningful to talk about $H_{0} \lor H_{1}$ and $H_{0} \wedge H_{1}$.\footnote{This is exactly where the argument breaks down when applied to the quantum world, in which $H_{0}$ and $H_{1}$ are not in general defined \emph{simultaneously}.} Note that $\Pr[H_{0}] + \Pr[H_{1}] = \Pr[H_{0} \lor H_{1}] + \Pr[H_{0} \wedge H_{1}] \leq 1 + \Pr[H_{0} \wedge H_{1}]$. The event $H_{0} \wedge H_{1}$ happens when condition \eqref{eq:two-round-event} is satisfied for both values of $d$. Define $K$ to be the event that the \texttt{XOR} of the two is satisfied
\begin{equation*}
K \iff B = f_{2}(0) \oplus f_{2}(1).
\end{equation*}
Since the left-hand side is a uniformly distributed random variable and the right-hand side is a constant $\Pr[K] = 2^{-n}$. Moreover, as $H_{0} \wedge H_{1} \implies K$ we have $\Pr[H_{0} \wedge H_{1}] \leq \Pr[K]$. Combining the two statements implies that for any strategy of classical Alice we have $\Pr[H_{0}] + \Pr[H_{1}] \leq 1 + 2^{-n}$ which leads to
\begin{equation*}
\omega( \chshn ) \leq \frac{1}{2} + \frac{1}{ 2^{n + 1} }.
\end{equation*}
It is easy to check that the trivial strategy of always outputting $x_{1} = x_{2} = 0^{n}$ saturates this bound. Intuitively this game should be difficult because unveiling $\cval = 0$ and $\cval = 1$ requires \alice{2} to know $x_{1}$ and $x_{1} \oplus b$, respectively. She cannot guess both of these too well since $b$ is chosen uniformly at random by \bob{1}.

Applying this reasoning to quantum adversaries is not so straightforward because \alice{2} might have two distinct measurements that reveal $x_{1}$ and $x_{1} \oplus b$, respectively, but they might be incompatible so the implications on her ability to guess $b$ are not so obvious. Fortunately, the following bound on the quantum value was recently proven \cite{sikora14}
\begin{equation*}
\omega^{*}( \chshn ) \leq \frac{1}{2} + \frac{1}{\sqrt{2^{n + 1}}}.
\end{equation*}
This is sufficient for our purposes as it implies that
\begin{equation*}
p_{0} + p_{1} \leq 1 + \sqrt{2} \cdot 2^{- n / 2}
\end{equation*}
for all strategies of dishonest quantum Alice. Therefore, the protocol is $\varepsilon$-binding with $\varepsilon = 2^{(1 - n) / 2}$ decaying exponentially in $n$ (but note that the decay rate is half of the decay rate against classical adversaries).

This is a prime example that analysing classical protocols against quantum adversaries is \emph{not always} a futile task and sometimes leads to interesting observations. An immediate question arises regarding super-quantum adversaries: what if \alice{1} and \alice{2} have access to stronger-than-quantum correlations? It is straightforward to see that the protocol is completely insecure against \alice{1} and \alice{2} who have access to no-signalling correlations \cite{simard07, crepeau11}. In fact, it was shown recently that in this particular model it is not possible to have a protocol secure against no-signalling adversaries \cite{fehr15}. This is a consequence of the fact that in this context the hiding property coincides with the definition of no-signalling and so if the protocol is hiding then there exists a perfect cheating strategy consistent with no-signalling. This is yet another example of how classical and quantum theories are qualitatively different from the no-signalling world.
\section{Multiround protocol}
To extend the commitment time we must introduce additional rounds of communication, which keep the commitment ``alive'' (the sustain phase).
If Alice and Bob require the commitment to be valid for $m$ units of time (for some $m \in \amsbb{N}$) they need to execute the following protocol of $m + 1$ rounds. We use $k$ as a label for the round under consideration and since the rounds alternate between the two locations we define
\begin{equation*}
l(k) =
\begin{cases}
1 \nbox{if} k \equiv 1 \pmod 2,\\
2 \nbox{if} k \equiv 0 \pmod 2.
\end{cases}
\end{equation*}
We use $a_{k}$ and $b_{k}$ to denote private strings (chosen uniformly at random) of Alice and Bob, respectively and $x_{k}$ and $y_{k}$ to denote messages announced by Alice and Bob, respectively, in the $k\th$ round of the protocol. All the $n$-bit strings are interpreted as elements of the finite field $\amsbb{F}_{2^{n}}$ and ``$*$'' denotes the finite field multiplication.
\phantomsection
\begin{prot}{9}{Multiround bit commitment}
\label{prot:multiround}
\begin{enumerate}
\item (commit, $k = 1$) At $t = 0$, \bob{1} sends $y_{1} = b_{1}$ to \alice{1}. \alice{1} returns $x_{1} = d \cdot y_{1} \oplus a_{1}$.
\item (sustain, $2 \leq k \leq m$) At $t = k - 1$, \bob{l(k)} sends $y_{k} = b_{k}$ to \alice{l(k)}. \alice{l(k)} returns $x_{k} = (y_{k} * a_{k - 1}) \oplus a_{k}$.
\item (open, $k = m + 1$) At $t = m$, \alice{l(m + 1)} sends $d$ and $x_{m + 1} = a_{m}$ to \bob{l(m + 1)}.
\item (verify) At $t = m + \epsilon$, \bob{l(m + 1)} receives $x_{m}$ and accepts the opening if
\begin{equation}
\label{eq:acceptance-condition}
\begin{aligned}
x_{m + 1} \; = \; x_{m} \; &\oplus \; b_{m} * x_{m - 1} \; \oplus \; b_{m} * b_{m - 1} * x_{m - 2} \; \oplus \; \ldots\\
\ldots \; &\oplus \; b_{m} * b_{m - 1} * \ldots * b_{2} * x_{1} \; \oplus \; d \cdot b_{m} * b_{m - 1} * \ldots * b_{1}.
\end{aligned}
\end{equation}
\end{enumerate}
\end{prot}
\noindent It is easy to see that the timing is chosen precisely to make every two consecutive rounds space-like separated. In the formalism presented in Section \ref{sec:explicit-analysis} there are $m + 1$ interactions and the communication graph is
\begin{equation}
\label{eq:communication-graph}
\begin{gathered}
G = ( [m + 1], E) \nbox{for}\\
E = \{ (j, k) \in [m + 1]^{2}: j + 2 \leq k \}.
\end{gathered}
\end{equation}
Correctness of the protocol is straightforward to check by substituting the honest responses of \alice{1} and \alice{2} into the acceptance condition \eqref{eq:acceptance-condition}.
\subsection{Security for honest Alice}
Security for honest Alice is a direct consequence of the fact that every message she announces is one-time-padded with a fresh secret $n$-bit string. Hence, we would intuitively expect the transcripts corresponding to $\cval = 0$ and $\cval = 1$ to be statistically indistinguishable. We prove this statement by considering an arbitrary adaptive attack (consistent with relativity) that classical \bob{1} and \bob{2} might implement. While we do not believe that \bob{1} and \bob{2} can gain anything by using quantum systems, we currently do not have a rigorous argument to justify this belief.

We start with a lemma which formalises the intuition that if we take an arbitrary random variable taking values in $\amsbb{F}_{q}$ and add it to a uniform and uncorrelated random variable (over $\amsbb{F}_{q}$) then there will be no correlations between the input and the output (or any function thereof). More specifically, in the following lemma $Y$ is a random variable from which the input is generated using function $g$, $X$ is the fresh (finite field) randomness and $h$ is a function allowing us to condition on a certain subset of values of $Y$.
\begin{lem}
\label{lem:random-variables}
Let $\cX = \amsbb{F}_{q}$ and $\cY, \cZ$ be arbitrary finite sets. Let $X$ and $Y$ be two random variables taking values in $\cX$ and $\cY$, respectively, such that $X$ is uniform and independent from $Y$
\begin{equation}
\label{eq:assumption}
\Pr[X = x, Y = y] = q^{-1} \cdot \Pr[Y = y],
\end{equation}
for all $x \in \cX$ and $y \in \cY$. Then for arbitrary functions $g : \cY \to \cX$, $h : \cY \to \cZ$ and arbitrary fixed $x \in \cX$, $z \in \cZ$ it holds that
\begin{equation*}
\Pr[X + g(Y) = x \, | \, h(Y) = z] = q^{-1}.
\end{equation*}
\end{lem}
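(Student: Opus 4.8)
The plan is to reduce the claim to the elementary fact that adding an independent uniform element of $\amsbb{F}_{q}$ to anything yields a uniform element, together with the observation that conditioning on an event depending only on $Y$ does not disturb the conditional law of $X$ given $Y$. First I would expand the conditional probability by definition,
\begin{equation*}
\Pr[X + g(Y) = x \mid h(Y) = z] = \frac{\Pr[X + g(Y) = x \wedge h(Y) = z]}{\Pr[h(Y) = z]},
\end{equation*}
which is well defined precisely when $\Pr[h(Y) = z] > 0$; the degenerate case $\Pr[h(Y) = z] = 0$ simply leaves the conditional probability undefined and is excluded.

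Next I would compute the numerator by summing over the fibre $h^{-1}(z) \subseteq \cY$. For each fixed $y$ the equation $X + g(y) = x$ is equivalent to $X = x - g(y)$, because the additive structure of $\amsbb{F}_{q}$ makes the shift $X \mapsto X + g(y)$ a bijection of $\cX$. Hence
\begin{equation*}
\Pr[X + g(Y) = x \wedge h(Y) = z] = \sum_{y \,:\, h(y) = z} \Pr[X = x - g(y), Y = y].
\end{equation*}
The key step is then to invoke the independence-and-uniformity hypothesis \eqref{eq:assumption}: for every $y$, and whatever value $x - g(y) \in \cX$ happens to be, we have $\Pr[X = x - g(y), Y = y] = q^{-1} \Pr[Y = y]$. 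Substituting and factoring out $q^{-1}$ collapses the sum,
\begin{equation*}
\Pr[X + g(Y) = x \wedge h(Y) = z] = q^{-1} \sum_{y \,:\, h(y) = z} \Pr[Y = y] = q^{-1} \Pr[h(Y) = z],
\end{equation*}
and dividing by $\Pr[h(Y) = z]$ delivers the value $q^{-1}$.

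The argument is genuinely short, so no step constitutes a real obstacle; the only points that demand care are bookkeeping ones. I would emphasise that $g$ and $h$ may be arbitrary precisely because \eqref{eq:assumption} holds for \emph{every} argument of $X$ at once, which is exactly what lets the summand factor uniformly regardless of how $x - g(y)$ varies with $y$. I would also note that the shift-invariance uses only the additive group structure of $\amsbb{F}_{q}$, not its full field structure, and that it is the independence hypothesis which guarantees that restricting to the fibre of $h$ does not reweight the conditional distribution of $X$ — so the one-time-pad uniformity survives the conditioning intact.
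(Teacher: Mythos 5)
Your proposal is correct and follows essentially the same route as the paper's proof: sum the joint probability $\Pr[X + g(Y) = x \wedge h(Y) = z]$ over the fibre $h^{-1}(z)$, apply the uniformity-and-independence hypothesis termwise to factor out $q^{-1}$, and divide by $\Pr[h(Y) = z]$. Your added remarks on well-definedness of the conditioning and on only needing the additive group structure are fine but do not change the argument.
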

\begin{proof}
Note that
\begin{align*}
&\Pr[X + g(Y) = x, h(Y) = z] = \sum_{y \in \cY} \Pr[X = x - g(y), h(y) = z , Y = y]\\
&= \sum_{y \in \cY \atop h(y) = z} \Pr[X = x - g(y), Y = y] = \sum_{y \in \cY \atop h(y) = z} q^{-1} \cdot \Pr[Y = y] = q^{-1} \cdot \Pr[h(Y) = z],
\end{align*}
where the second last equality follows from applying the assumption~\eqref{eq:assumption} to every term of the sum.
\end{proof}
\noindent Lemma \ref{lem:random-variables} allows us to prove security for honest Alice.
\begin{prop}
If Alice is honest then the protocol is hiding.
\end{prop}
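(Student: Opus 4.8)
The plan is to prove the stronger statement that, for \emph{any} adaptive classical strategy of dishonest Bob, the joint distribution of all messages Alice sends before the open phase, namely $(x_1, x_2, \ldots, x_m)$, is uniform over $(\amsbb{F}_{2^n})^m$ and in particular independent of the committed bit $\cval$. This at once yields hiding in the sense of Definition~\ref{df:hiding-two}: if Bob's entire view at the opening point is distributed identically for $\cval = 0$ and $\cval = 1$, then so are the marginals $\sigma_{B_1}$ and $\sigma_{B_2}$. Note this is consistent with binding for honest Bob, since the combined transcript at the opening point does \emph{not} fix $\cval$ --- the last pad $a_m$, which determines $\cval$ through the acceptance condition \eqref{eq:acceptance-condition}, is only revealed in the open phase.

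First I would fix an arbitrary deterministic strategy of Bob. By the usual convexity argument a randomised strategy is a convex combination of deterministic ones, and any shared randomness between the two agents can be absorbed into the choice of strategy, so it suffices to treat the deterministic case. Such a strategy specifies each challenge as a function of the messages Bob has already received, $b_k = b_k(x_1, \ldots, x_{k-1})$; granting Bob's two agents full (rather than relativistically delayed) access to all earlier messages only strengthens the claim, so there is no loss in allowing this.

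Next I would compute the joint distribution by the chain rule, $\Pr[x_1 = \xi_1, \ldots, x_m = \xi_m] = \prod_{k=1}^{m} \Pr[x_k = \xi_k \mid x_{<k} = \xi_{<k}]$, and show each factor equals $q^{-1}$ with $q = 2^{n}$. The base case is immediate, as $x_1 = \cval \cdot b_1 \oplus a_1$ with $a_1$ uniform and independent, so $x_1$ is uniform. For $k \geq 2$ the key is to invoke Lemma~\ref{lem:random-variables} with $X = a_k$ (the fresh field randomness), with $Y = (a_1, \ldots, a_{k-1}, \cval, R)$ the tuple of all earlier Alice-randomness together with the fixed bit and Bob's randomness $R$, with $g(Y) = b_k(x_{<k}) * a_{k-1}$, and with $h(Y) = (x_1, \ldots, x_{k-1})$. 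Since addition in $\amsbb{F}_{2^n}$ is $\oplus$, we have $x_k = a_k \oplus (b_k * a_{k-1}) = X + g(Y)$, so the lemma gives $\Pr[x_k = \xi_k \mid x_{<k} = \xi_{<k}] = q^{-1}$ independently of $\cval$; multiplying the factors produces the uniform joint distribution.

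The main point to get right --- and the only genuine subtlety --- is the independence hypothesis of Lemma~\ref{lem:random-variables}, namely that $a_k$ is truly independent of $Y$. This holds because $a_k$ is drawn freshly in round $k$ and none of $x_1, \ldots, x_{k-1}$ (nor $b_1, \ldots, b_k$) depends on it, since $b_k$ is a function of $x_{<k}$, which involves only $a_1, \ldots, a_{k-1}$. Care must be taken to observe that although $a_{k-1}$ appears both in $x_{k-1}$ and, through $g$, in $x_k$, conditioning on $x_{<k}$ does \emph{not} pin down $a_{k-1}$; the lemma is tailored precisely to this situation, absorbing $a_{k-1}$ into $Y$ and the arbitrary function $g$ while conditioning on $h(Y) = x_{<k}$. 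Once this independence is established the induction closes and the proposition follows.
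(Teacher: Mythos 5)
Your proposal is correct and follows essentially the same route as the paper's own proof: reduce to deterministic Bob, show the transcript $(x_1,\ldots,x_m)$ is uniform on $(\amsbb{F}_{2^n})^m$ by a chain-rule/induction argument, and establish each conditional factor equals $2^{-n}$ by applying Lemma~\ref{lem:random-variables} with the fresh pad $a_k$ as $X$, the earlier randomness (and $\cval$) as $Y$, $g(Y) = b_k * a_{k-1}$, and $h(Y)$ the earlier transcript. Your relaxation letting Bob's challenge depend on \emph{all} prior messages (rather than only those in its past light cone, as the paper writes) only strengthens the statement and changes nothing in the argument.
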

\begin{proof}
We want to show that the transcripts for $\cval = 0$ and $\cval = 1$ at the opening point (i.e.~after the last sustain round) are indistinguishable, i.e.
\begin{equation*}
\Pr[X_{1} = x_{1}, X_{2} = x_{2}, \ldots, X_{m} = x_{m} | \cval = 0] = \Pr[X_{1} = x_{1}, X_{2} = x_{2}, \ldots, X_{m} = x_{m} | \cval = 1]
\end{equation*}
for all $x_{1}, x_{2}, \ldots, x_{m}$. In fact, we show a stronger statement, namely
\begin{equation}
\label{eq:transcript-distribution}
\Pr[X_{1} = x_{1}, X_{2} = x_{2}, \ldots, X_{t} = x_{t} | d = c] = 2^{-n t},
\end{equation}
for all $t \in [m]$ and both values of $c \in \{0, 1\}$.

Honest Alice follows the protocol, which means that $\{A_{k}\}_{k = 1}^{m}$ are drawn independently, uniformly at random from $\{0, 1\}^{n}$ and so Alice's message in the $k\th$ round (represented as a random variable) equals
\begin{equation}
\label{eq:yk}
X_{k} =
\begin{cases}
d \cdot Y_{1} \oplus A_{1} &\nbox{for} k = 1,\\
(Y_{k} * A_{k - 1}) \oplus A_{k} &\nbox{for} 2 \leq k \leq m.
\end{cases}
\end{equation}
\bob{1} and \bob{2}, on the other hand, are only limited by the causal constraints, which means that the message in the $k\th$ round might depend on some pre-shared randomness denoted by $R_{B}$ and all the responses of \alice{1} and \alice{2} which belong to the past of the $k\th$ round. Therefore, without loss of generality the message in the $k\th$ round is
\begin{equation}
\label{eq:xk}
Y_{k} = f_{k}(R_{B}, X_{1}, X_{2}, \ldots X_{k - 2})
\end{equation}
for some arbitrary function $f_{k}$ (we include all randomness used by Bob in $R_{B}$ so $f_{k}$ is deterministic).

In this scenario the full transcript is a deterministic function of Alice's commitment $d$, her private randomness $\{A_{k}\}_{k = 1}^{m}$ and Bob's pre-shared randomness $R_{B}$. For every string announced by Alice and Bob we can explicitly find the subset of random variables it may depend on as listed in the table below
\begin{center}
\setlength{\tabcolsep}{0.6cm}
\begin{tabular}{c c}
message & random variables it might depend on\\
\hline
$Y_{1}$ & $R_{B}$\\
$Y_{2}$ & $R_{B}$\\
$Y_{3}$ & $d, R_{B}, A_{1}$\\
\vdots & \vdots\\
$Y_{k}$ & $d, R_{B}, A_{1}, A_{2}, \ldots, A_{k - 2}$\\
\vdots & \vdots\\
$Y_{m}$ & $d, R_{B}, A_{1}, A_{2}, \ldots, A_{m - 2}$\\
$X_{1}$ & $d, R_{B}, A_{1}$\\
$X_{2}$ & $d, R_{B}, A_{1}, A_{2}$\\
$X_{3}$ & $d, R_{B}, A_{1}, A_{2}, A_{3}$\\
\vdots & \vdots\\
$X_{k}$ & $d, R_{B}, A_{1}, A_{2}, \ldots, A_{k}$\\
\vdots & \vdots\\
$X_{m}$ & $d, R_{B}, A_{1}, A_{2}, \ldots, A_{m}$\\
\end{tabular}
\end{center}
First, we verify that condition \eqref{eq:transcript-distribution} holds for $t = 1$
\begin{equation*}
\Pr[X_{1} = x_{1} | d = c] = \Pr[ b \cdot Y_{1} \oplus A_{1} = x_{1}] = \Pr[ b \cdot f_{1}(R_{B}) \oplus A_{1} = x_{1}] = 2^{-n},
\end{equation*}
where the first two equalities follow from Eqs.~\eqref{eq:yk} and \eqref{eq:xk}, respectively. The last equality is a direct consequence of Lemma~\ref{lem:random-variables} (in a simplified form: no conditioning) applied to $X = A_{1}$, $Y = (c, R_{B})$, $g(Y) = c \cdot f_{1}(R_{B})$. Now, suppose that Eq.~\eqref{eq:transcript-distribution} holds for $t = k$. Then
\begin{align*}
\Pr&[X_{1} = x_{1}, \ldots, X_{k + 1} = x_{k + 1} | d = c]\\
&= \Pr[X_{k + 1} = x_{k + 1} | d = c, X_{1} = x_{1}, \ldots, X_{k} = x_{k}] \cdot \Pr[X_{1} = x_{1}, \ldots, X_{k} = x_{k} | d = c]\\
&= \Pr[(Y_{k + 1} * A_{k}) \oplus A_{k + 1} = x_{k + 1} | d = b, X_{1} = x_{1}, \ldots, X_{k} = x_{k}] \cdot 2^{-n k}\\
&= 2^{-n} \cdot 2^{-n k} = 2^{- (n + 1) k},
\end{align*}
where the second last inequality follows from applying Lemma~\ref{lem:random-variables} to
\begin{align}
X &= A_{k + 1},\nonumber\\
Y &= (c, R_{B}, A_{1}, \ldots, A_{k}),\nonumber\\
\label{eq:gy}
g(Y) &= Y_{k + 1} * A_{k},\\
h(Y) &= (X_{1}, X_{2}, \ldots, X_{k})\nonumber.
\end{align}
Note that it is not immediately obvious and the reader should verify (using the table presented above) that the quantities on the right-hand side of Eq.~\eqref{eq:gy} are functions of $Y$ alone, and therefore satisfy the assumptions of the lemma. This shows that Eq.~\eqref{eq:transcript-distribution} holds for $t = k + 1$ and so by induction it must hold for all $t \in [m]$. This shows that even at the opening point the transcript contains no information about Alice's commitment, which implies that the protocol is hiding.
\end{proof}
\subsection{Security for honest Bob}
Security for honest Bob is where the framework developed in Section \ref{sec:explicit-analysis} comes in useful. We immediately identify the case of honest Bob as a game of $m + 1$ players $\cP_{1}, \ldots, \cP_{m + 1}$ whose communication is restricted by $G$ defined in Eq.~\eqref{eq:communication-graph}. Player $\cP_{k}$ (for $1 \leq k \leq m$) receives a uniformly random $n$-bit string represented by the random variable $B_{k}$. Moreover, all the players except for $\cP_{1}$ receive $d$ (the value they are challenged to unveil) chosen uniformly at random. It is clear that $d$ must not be available to $\cP_{1}$ (this would correspond to an honest commitment) but it must be available to all the other players. This is important as it fixes the commitment point to occur immediately after $t = 0$, i.e.~allows us to claim that Alice becomes committed immediately after the first round.

Having explicitly determined the inputs received by each player and the communication constraints we apply Observation \ref{obs:games-equivalence} to turn this scenario into a non-communicating game. It is easy to verify that the output of $\cP_{1}$ denoted by $X_{1}$ can be written as
\begin{equation*}
X_{1} = f_{1}(R_{A}, B_{1}),
\end{equation*}
where $R_{A}$ corresponds to any randomness shared by the players. For $\cP_{k}$ for $2 \leq k \leq m$ we have
\begin{equation*}
X_{k} = f_{k}(R_{A}, B_{1}, B_{2}, \ldots, B_{k - 2}, B_{k}, d).
\end{equation*}
Finally, for $\cP_{m + 1}$ we have
\begin{equation*}
X_{m + 1} = f_{m + 1}(R_{A}, B_{1}, B_{2}, \ldots, B_{m - 1}, d).
\end{equation*}
This allows us to prove security for honest Bob.
\begin{prop}
If Bob is honest then the protocol is $\varepsilon$-binding for $\varepsilon = \omega_{m}$ defined in Eq.~\eqref{eq:omega-definition}.
\end{prop}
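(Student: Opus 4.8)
The plan is to mirror the two-round analysis of Section~\ref{sec:two-round-protocol}, now leveraging the reduction to a non-communicating game. First I would invoke Observation~\ref{obs:games-equivalence} together with the standard convexity argument for classical players: since $p_{0} + p_{1}$ is linear in the distribution over deterministic strategies, it suffices to fix one value of the shared randomness $R_{A}$, after which every $X_{k}^{(\cval)}$ is a deterministic function of Bob's strings $B_{1}, \ldots, B_{m}$ with the dependencies already tabulated above the statement. The one fact to emphasise is that $X_{1}$ does \emph{not} depend on $\cval$, whereas the later rounds may. For $\cval \in \{0, 1\}$ let $H_{\cval}$ be the event (over $B_{1}, \ldots, B_{m}$) that Bob accepts the opening of $\cval$, i.e.~that the acceptance condition~\eqref{eq:acceptance-condition} holds. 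Because $H_{0}$ and $H_{1}$ live on the same probability space, I can write $p_{0} + p_{1} = \Pr[H_{0}] + \Pr[H_{1}] = \Pr[H_{0} \lor H_{1}] + \Pr[H_{0} \wedge H_{1}] \leq 1 + \Pr[H_{0} \wedge H_{1}]$, exactly as before. Everything then reduces to bounding $\Pr[H_{0} \wedge H_{1}]$ by $\omega_{m}$.

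The key step is to extract a single finite-field identity from the simultaneous occurrence of $H_{0}$ and $H_{1}$. Writing $D_{k} := X_{k}^{(0)} \oplus X_{k}^{(1)}$ and subtracting the two instances of~\eqref{eq:acceptance-condition} in $\amsbb{F}_{2^{n}}$ (where subtraction is $\oplus$), two cancellations occur: the term $\cval \cdot \prod_{i} B_{i}$ survives only on the $\cval = 1$ side, leaving a single product $\prod_{i = 1}^{m} B_{i}$, and the $j = 1$ contribution vanishes because $X_{1}$ is independent of $\cval$, so $D_{1} = 0$. The outcome is the event
\begin{equation*}
K \iff \prod_{i = 1}^{m} B_{i} = D_{m + 1} \oplus \bigoplus_{j = 2}^{m} \Big( \prod_{i = j + 1}^{m} B_{i} \Big) * D_{j},
\end{equation*}
and since $H_{0} \wedge H_{1} \implies K$, we have $\Pr[H_{0} \wedge H_{1}] \leq \Pr[K]$.

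It then remains to recognise $K$ as the winning condition of a strategy for the game $\omega_{m}$ of Section~\ref{sec:game-definition}, identifying its $m$ uniform inputs with $B_{1}, \ldots, B_{m}$. Writing the right-hand side as $D_{m + 1} \oplus \bigoplus_{j = 2}^{m} T_{j}$ with $T_{j} := \big( \prod_{i = j + 1}^{m} B_{i} \big) * D_{j}$, I would set $Y_{m} := D_{m + 1}$ and $Y_{j - 1} := T_{j}$ for $2 \leq j \leq m$ (empty products being the multiplicative identity), so that $K$ reads precisely $\prod_{i} B_{i} = \bigoplus_{k = 1}^{m} Y_{k}$. The crux of the argument --- and the step I expect to require the most care --- is to verify that this is a legal ``Number on the Forehead'' strategy, i.e.~that each $Y_{k}$ is a function of $B_{[m] \setminus \{k\}}$ only. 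This is exactly where the communication graph~\eqref{eq:communication-graph} pays off: the space-like separation of consecutive rounds forces $D_{j}$ to depend on $B_{1}, \ldots, B_{j - 2}$ and $B_{j}$ but \emph{not} on $B_{j - 1}$, while the prefactor $\prod_{i = j + 1}^{m} B_{i}$ supplies exactly the ``future'' variables $B_{j + 1}, \ldots, B_{m}$; together they omit $B_{j - 1}$ and nothing else, and $Y_{m} = D_{m + 1}$ omits $B_{m}$. One checks the boundary cases ($Y_{m - 1} = D_{m}$ missing $B_{m - 1}$, and $Y_{m}$ missing $B_{m}$) directly. Since $Y_{1}, \ldots, Y_{m}$ then constitute one admissible deterministic strategy for $\omega_{m}$, we obtain $\Pr[K] \leq \omega_{m}$, and hence $p_{0} + p_{1} \leq 1 + \omega_{m}$, which is the claimed $\varepsilon$-binding property with $\varepsilon = \omega_{m}$.
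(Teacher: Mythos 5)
Your proposal is correct and follows essentially the same route as the paper's proof: passing to deterministic strategies, bounding $p_{0} + p_{1} \leq 1 + \Pr[H_{0} \wedge H_{1}]$, XOR-ing the two acceptance conditions (with the $X_{1}$ term cancelling because $\cP_{1}$ never learns $\cval$), and relaxing the resulting event $K$ to a ``Number on the Forehead'' strategy so that $\Pr[K] \leq \omega_{m}$. Your explicit verification that each relabelled output $Y_{k}$ omits exactly $B_{k}$ is the step the paper leaves implicit, but it is the same argument.
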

\begin{proof}
The argument is essentially identical to the one presented in Section \ref{sec:two-round-protocol} and we use the same definitions for the events $H_{0}, H_{1}$ and $K$. Since $K$ is defined as the \texttt{XOR} of Eq.~\eqref{eq:acceptance-condition} for $d = 0$ and $d = 1$ we have
\begin{align*}
K \iff B_{1} * B_{2} * \ldots * B_{m} &= g_{m + 1}(R_{A}, B_{1}, B_{2}, \ldots, B_{m - 1}) \oplus g_{m}(R_{A}, B_{1}, B_{2}, \ldots, B_{m - 2}, B_{m})\\
&\bigoplus_{k = 2}^{m - 1} B_{m} * B_{m - 1} * \ldots * B_{k + 1} * g_{k}(R_{A}, B_{1}, B_{2}, \ldots, B_{k - 2}, B_{k}),
\end{align*}
where
\begin{align*}
g_{k}(R_{A}, B_{1}, B_{2}, \ldots, B_{k - 2}, B_{k}) &= f_{k}(R_{A}, B_{1}, B_{2}, \ldots, B_{k - 2}, B_{k}, d = 0)\\
&\oplus f_{k}(R_{A}, B_{1}, B_{2}, \ldots, B_{k - 2}, B_{k}, d = 1).
\end{align*}
To bound $\Pr[K]$ note that the right-hand side contains exactly $m$ terms, but each of them depends on $(m - 1)$ $B$'s; none of the terms depends on \emph{all $B$'s simultaneously}. The terms corresponding to $2 \leq k \leq m - 1$ have some internal structure (e.g.~the dependence on $B_{m}$ is \emph{not} arbitrary) but we can relax the problem to the case where the $k\th$ term is an arbitrary function of all the $B$'s except for $B_{k}$ denoted by $h_{k}$. The winning condition for the relaxed game is
\begin{equation*}
B_{1} * B_{2} * \ldots * B_{m} = \bigoplus_{k = 1}^{m} h_{k}( B_{ [m] \setminus \{k\} } ).
\end{equation*}
In Section~\ref{sec:game-definition} we define the optimal winning probability for this game to be $\omega_{m}$, which concludes the proof since
\begin{equation*}
p_{0} + p_{1} \leq 1 + \Pr[K] \leq 1 + \omega_{m}.\qedhere
\end{equation*}
\end{proof}
The recursive argument presented in Section~\ref{sec:recursive-upper-bound} allows us to obtain explicit upper bounds on $\omega_{m}$. In particular, we have $\omega_{m} \leq c_{m}$, where
\begin{equation}
\label{eq:bound-multi}
c_{m} =
\begin{cases}
2^{-n} &\nbox{for} m = 1,\\
\frac{1}{2^{n + 1}} + \sqrt{c_{m - 1}} &\nbox{for} m \geq 2.
\end{cases}
\end{equation}
For large $n$, to a good approximation we have $c_{m} \approx 2^{-n / 2^{m}}$. The decay is exponential in $n$ but since the decay rate strongly depends on $m$, security deteriorates rapidly as we increase the number of rounds. The tightness of these bounds is an interesting open problem and is briefly discussed in Appendix B.5 of Ref.~\cite{lunghi15}. No explicit cheating strategy is known, whose winning probability would approach our security bounds.
\section{Experimental implementation}
Both the two-round and the multiround protocols have been implemented between University of Geneva and University of Berne. The straight-line distance between these two locations is $s = 131$ km, which corresponds to $\Delta t = 437 \mu$s. Each classical agent consists of a standalone computer equipped with a FPGA and the agents are connected by an optical link (see Fig.~\ref{fig:experiment2}). Synchronisation to universal time is achieved via a GPS clock. While the task of exchanging classical information is on its own quite straightforward, the challenge in our case is to take maximal advantage of the relativistic constraints. To maximise the commitment time, it is crucial to ensure that the devices are synchronised up to high accuracy and that classical data manipulation (communication with an external memory to load and store data, data exchange between the two agents and local computation) are optimised to produce the highest feasible rate.
\begin{figure}
\centering
\includegraphics[scale=0.4]{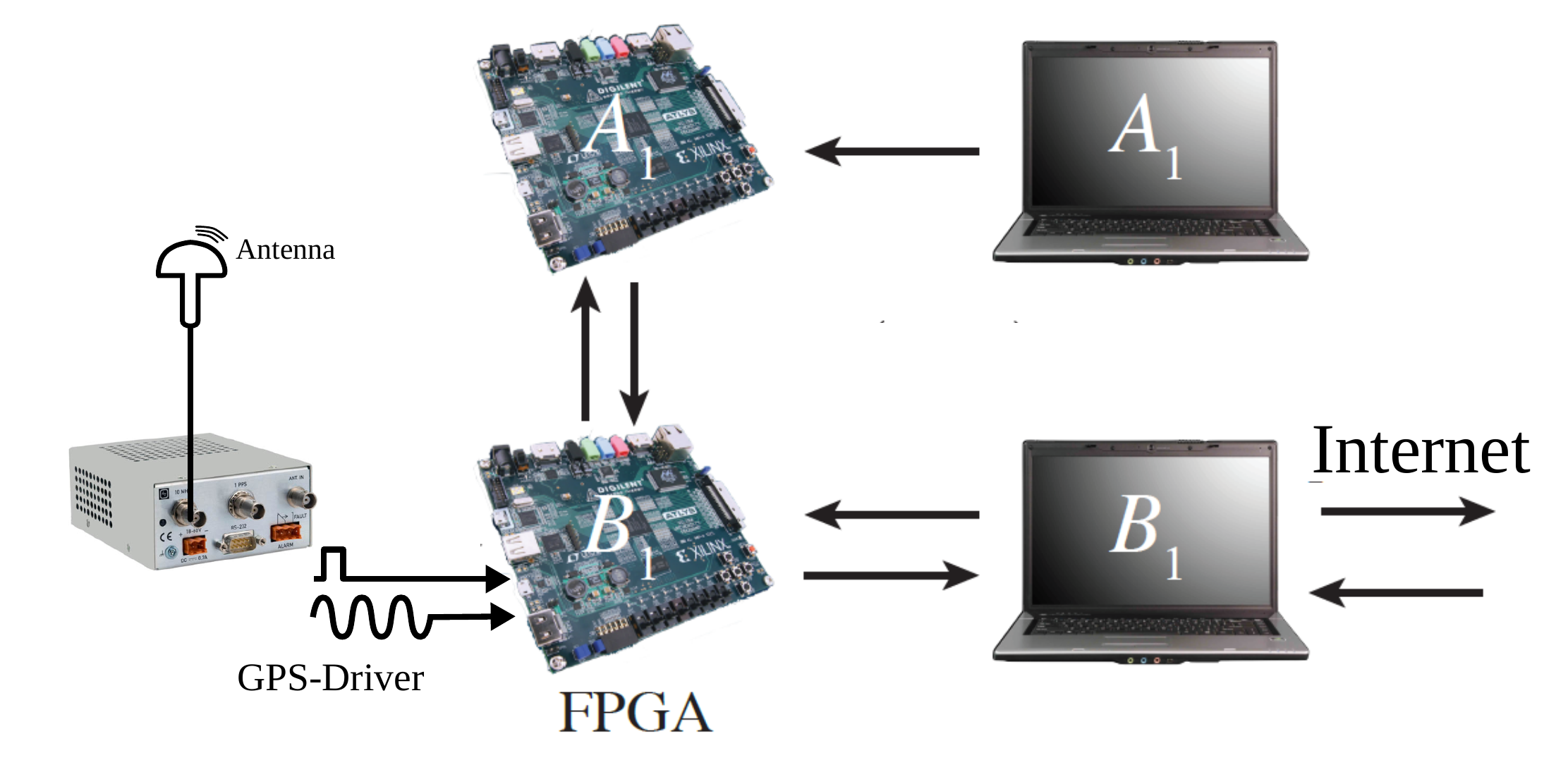}
\caption{Experimental setup at Location 1 where $A_{1}$ and $B_{1}$ represent \alice{1} and \bob{1	}, respectively. Figure reproduced with permission from Tommaso Lunghi.}
\label{fig:experiment2}
\end{figure}

The protocol was implemented with $n = 512$ bits, which for the two-round protocol gives the security parameter of $\varepsilon \approx 10^{-77}$ (against quantum adversaries). The multiround protocol was implemented with $m + 1 = 6$ rounds which gives the security parameter of $\varepsilon \approx 2.3 \times 10^{-10}$ (against classical adversaries). The total commitment time was 2 ms but placing exactly the same setup at the antipodes of the Earth would allow for commitment time of 212 ms. Of course, the commitment time could be made longer by employing more sophisticated hardware, which allows us to exchange more data within the relativistic constraints, but since our main goal was to demonstrate the feasibility of implementing multiround schemes we decided not to do it.
\chapter{Conclusions}
\label{chap:conclusions}
The central theme of this thesis is the study of how communication constraints can be used in classical and quantum cryptography, with a particular focus on commitment schemes. Communication constraints resulting from fundamental physical principles (like the fact that the speed of light is finite) are of particular interest. While relativity does not permit to implement the ideal commitment functionality, some weaker variants are possible and understanding similarities and differences between these schemes lies at the heart of this thesis.

It seems fair to claim that we now have a good understanding of relativistic commitment schemes, in particular the simplest class, in which no messages are exchanged when the commitment is valid. The main drawback of such schemes is the limitation on the commitment time, which is proportional to the distance between the agents. In order to increase the length of the commitment (without moving the agents further apart) one needs to resort to multiround schemes and we have presented a particular classical protocol along with a security proof against classical adversaries. This shows that in the classical world one can achieve arbitrary long commitments even if the agents are forced to occupy a finite region of space.

A natural follow-up question is to ask whether this statement remains true in the quantum world. Security analysis of the aforementioned multiround protocol against quantum adversaries is currently out of reach but we conjecture that the protocol remains secure against quantum adversaries (although with weaker security guarantees). Moreover, security analysis of relativistic protocols against quantum adversaries leads to a new, interesting class of problems: multiplayer games with communication constraints, which have not been studied before (except for a few special cases) and might be of independent interest.
Note that multiround commitment schemes against no-signalling adversaries have recently been shown to be impossible \cite{fehr15}.

Having understood the features and limitations of relativistic commitment schemes, the next step would be to look at more powerful primitives and oblivious transfer would be a natural choice. While we know that perfect oblivious transfer is not possible, one might think of weaker variants which have not been ruled out. We do not see a straightforward way of translating distributed oblivious transfer protocols into the relativistic setting (see Section~\ref{sec:simple-relativistic-protocols} for details) and, to the best of our knowledge, no explicitly relativistic schemes have been proposed. Investigating the possibility of relativistic oblivious transfer would constitute an important step towards characterising the exact power of quantum relativistic cryptography.
\bibliographystyle{alphaarxiv}
\bibliography{/home/jedrek/Projekty/tex/librarysan}

\newcommand{\etalchar}[1]{$^{#1}$}
\begin{thebibliography}{LWW{\etalchar{+}}10}

\bibitem[Aar05]{aaronson05}
S.~Aaronson.
\newblock {NP-complete Problems and Physical Reality}.
\newblock {\em ACM SIGACT News}, 36, 2005.
\newblock \\
  \texttt{DOI:\,\href{http://dx.doi.org/10.1145/1052796.1052804}{10.1145/1052796.1052804}}.

\bibitem[AK15a]{adlam15a}
E.~Adlam and A.~Kent.
\newblock {Deterministic Relativistic Quantum Bit Commitment}.
\newblock 2015.
\newblock \\
  \texttt{arXiv:\,\href{http://arxiv.org/abs/1504.00943}{1504.00943}}.

\bibitem[AK15b]{adlam15b}
E.~Adlam and A.~Kent.
\newblock {Device-Independent Relativistic Quantum Bit Commitment}.
\newblock 2015.
\newblock \\
  \texttt{arXiv:\,\href{http://arxiv.org/abs/1504.00944}{1504.00944}}.

\bibitem[ATVY00]{aharonov00}
D.~Aharonov, A.~Ta{-Shma}, U.~Vazirani, and A.~C.-C. Yao.
\newblock {Quantum Bit Escrow}.
\newblock {\em Proc. 32nd ACM STOC}, 2000.
\newblock \\
  \texttt{DOI:\,\href{http://dx.doi.org/10.1145/335305.335404}{10.1145/335305.335404}}.

\bibitem[Bab85]{babai85}
L.~Babai.
\newblock {Trading group theory for randomness}.
\newblock {\em Proc. 17th ACM STOC}, 1985.
\newblock \\
  \texttt{DOI:\,\href{http://dx.doi.org/10.1145/22145.22192}{10.1145/22145.22192}}.

\bibitem[Bab90]{babai90}
L.~Babai.
\newblock {E-mail and the unexpected power of interaction}.
\newblock {\em Proc. 5th Annual Structure in Complexity Theory Conference},
  1990.
\newblock \\
  \texttt{DOI:\,\href{http://dx.doi.org/10.1109/SCT.1990.113952}{10.1109/SCT.1990.113952}}.

\bibitem[BB84]{bennett84}
C.~H. Bennett and G.~Brassard.
\newblock {Quantum cryptography: Public key distribution and coin tossing}.
\newblock {\em Proc. IEEE Conference on Computers, Systems and Signal
  Processing}, 1984.
\newblock \\ Online: \url{http://www.cs.ucsb.edu/~chong/290N-W06/BB84.pdf}.

\bibitem[BBB{\etalchar{+}}92]{bennett92b}
C.~H. Bennett, F.~Bessette, G.~Brassard, L.~Salvail, and J.~Smolin.
\newblock {Experimental quantum cryptography}.
\newblock {\em J. Cryptology}, 5(1), 1992.
\newblock \\
  \texttt{DOI:\,\href{http://dx.doi.org/10.1007/BF00191318}{10.1007/BF00191318}}.

\bibitem[BBBW83]{bennett83}
C.~H. Bennett, G.~Brassard, S.~Breidbart, and S.~Wiesner.
\newblock {Quantum cryptography, or unforgeable subway tokens}.
\newblock {\em Advances in Cryptology: Proc. CRYPTO '82}, 1983.
\newblock \\
  \texttt{DOI:\,\href{http://dx.doi.org/10.1007/978-1-4757-0602-4\_26}{10.1007/978-1-4757-0602-4\_26}}.

\bibitem[BBC{\etalchar{+}}93]{bennett93}
C.~H. Bennett, G.~Brassard, C.~Cr{\'e}peau, R.~Jozsa, A.~Peres, and W.~K.
  Wootters.
\newblock {Teleporting an Unknown Quantum State via Dual Classical and
  Einstein-Podolsky-Rosen Channels}.
\newblock {\em Phys. Rev. Lett.}, 70(13), 1993.
\newblock \\
  \texttt{DOI:\,\href{http://dx.doi.org/10.1103/PhysRevLett.70.1895}{10.1103/PhysRevLett.70.1895}}.

\bibitem[BBCS92]{bennett92}
C.~H. Bennett, G.~Brassard, C.~Cr{\'e}peau, and M.~H. Skubiszewska.
\newblock {Practical Quantum Oblivious Transfer}.
\newblock {\em Advances in Cryptology: Proc. CRYPTO '91, LNCS}, 576, 1992.
\newblock \\
  \texttt{DOI:\,\href{http://dx.doi.org/10.1007/3-540-46766-1\_29}{10.1007/3-540-46766-1\_29}}.

\bibitem[BBM92]{bennett92a}
C.~H. Bennett, G.~Brassard, and N.~D. Mermin.
\newblock {Quantum Cryptography without Bell's Theorem}.
\newblock {\em Phys. Rev. Lett.}, 68(5), 1992.
\newblock \\
  \texttt{DOI:\,\href{http://dx.doi.org/10.1103/PhysRevLett.68.557}{10.1103/PhysRevLett.68.557}}.

\bibitem[BC91]{brassard91}
G.~Brassard and C.~Cr{\'e}peau.
\newblock {Quantum Bit Commitment and Coin Tossing Protocols}.
\newblock {\em Advances in Cryptology: Proc. CRYPTO '90, LNCS}, 537, 1991.
\newblock \\
  \texttt{DOI:\,\href{http://dx.doi.org/10.1007/3-540-38424-3\_4}{10.1007/3-540-38424-3\_4}}.

\bibitem[BC96]{brassard96}
G.~Brassard and C.~Cr{\'e}peau.
\newblock {Cryptology column -- 25 years of quantum cryptography}.
\newblock {\em ACM SIGACT News}, 27(3), 1996.
\newblock \\
  \texttt{DOI:\,\href{http://dx.doi.org/10.1145/235666.235669}{10.1145/235666.235669}}.

\bibitem[BCC88]{brassard88}
G.~Brassard, D.~Chaum, and C.~Cr{\'e}peau.
\newblock {Minimum Disclosure Proofs of Knowledge}.
\newblock {\em J. Comput. System Sci.}, 37(2), 1988.
\newblock \\
  \texttt{DOI:\,\href{http://dx.doi.org/10.1016/0022-0000(88)90005-0}{10.1016/0022-0000(88)90005-0}}.

\bibitem[BCC{\etalchar{+}}09]{berta09}
M.~Berta, M.~Christandl, R.~Colbeck, J.~M. Renes, and R.~Renner.
\newblock {The uncertainty principle in the presence of quantum memory}.
\newblock {\em Nat. Phys.}, 6, 2009.
\newblock \\
  \texttt{DOI:\,\href{http://dx.doi.org/10.1038/nphys1734}{10.1038/nphys1734}}.

\bibitem[BCF{\etalchar{+}}11]{buhrman14}
H.~Buhrman, N.~Chandran, S.~Fehr, R.~Gelles, V.~Goyal, R.~Ostrovsky, and
  C.~Schaffner.
\newblock {Position-based quantum cryptography: impossibility and
  constructions}.
\newblock {\em Advances in Cryptology: Proc. CRYPTO '11, LNCS}, 6841, 2011.
\newblock \\
  \texttt{DOI:\,\href{http://dx.doi.org/10.1007/978-3-642-22792-9\_24}{10.1007/978-3-642-22792-9\_24}}.

\bibitem[BCH{\etalchar{+}}06]{buhrman06a}
H.~Buhrman, M.~Christandl, P.~Hayden, H.-K. Lo, and S.~Wehner.
\newblock {Security of Quantum Bit String Commitment Depends on the Information
  Measure}.
\newblock {\em Phys. Rev. Lett.}, 97(25), 2006.
\newblock \\
  \texttt{DOI:\,\href{http://dx.doi.org/10.1103/PhysRevLett.97.250501}{10.1103/PhysRevLett.97.250501}}.

\bibitem[BCH{\etalchar{+}}08]{buhrman08}
H.~Buhrman, M.~Christandl, P.~Hayden, H.-K. Lo, and S.~Wehner.
\newblock {Possibility, impossibility, and cheat sensitivity of quantum-bit
  string commitment}.
\newblock {\em Phys. Rev. A}, 78(2), 2008.
\newblock \\
  \texttt{DOI:\,\href{http://dx.doi.org/10.1103/PhysRevA.78.022316}{10.1103/PhysRevA.78.022316}}.

\bibitem[BCJL93]{brassard93}
G.~Brassard, C.~Cr{\'e}peau, R.~Jozsa, and D.~Langlois.
\newblock {A quantum bit commitment scheme provably unbreakable by both
  parties}.
\newblock {\em Proc. 34th IEEE FOCS}, 1993.
\newblock \\
  \texttt{DOI:\,\href{http://dx.doi.org/10.1109/SFCS.1993.366851}{10.1109/SFCS.1993.366851}}.

\bibitem[BCMS97]{brassard97}
G.~Brassard, C.~Cr{\'e}peau, D.~Mayers, and L.~Salvail.
\newblock {A brief review on the impossibility of quantum bit commitment}.
\newblock 1997.
\newblock \\
  \texttt{arXiv:\,\href{http://arxiv.org/abs/quant-ph/9712023}{quant-ph/9712023}}.

\bibitem[BCP{\etalchar{+}}14]{brunner14}
N.~Brunner, D.~Cavalcanti, S.~Pironio, V.~Scarani, and S.~Wehner.
\newblock {Bell nonlocality}.
\newblock {\em Rev. Mod. Phys.}, 86(2), 2014.
\newblock \\
  \texttt{DOI:\,\href{http://dx.doi.org/10.1103/RevModPhys.86.419}{10.1103/RevModPhys.86.419}}.

\bibitem[BCS12]{buhrman12}
H.~Buhrman, M.~Christandl, and C.~Schaffner.
\newblock {Complete insecurity of quantum protocols for classical two-party
  computation}.
\newblock {\em Phys. Rev. Lett.}, 109(16), 2012.
\newblock \\
  \texttt{DOI:\,\href{http://dx.doi.org/10.1103/PhysRevLett.109.160501}{10.1103/PhysRevLett.109.160501}}.

\bibitem[BCU{\etalchar{+}}06]{buhrman06}
H.~Buhrman, M.~Christandl, F.~Unger, S.~Wehner, and A.~Winter.
\newblock {Implications of superstrong non-locality for cryptography}.
\newblock {\em Proc. R. Soc. A}, 462, 2006.
\newblock \\
  \texttt{DOI:\,\href{http://dx.doi.org/10.1098/rspa.2006.1663}{10.1098/rspa.2006.1663}}.

\bibitem[Bel64]{bell64}
J.~S. Bell.
\newblock {On the Einstein-Podolsky-Rosen paradox}.
\newblock {\em Physics}, 1, 1964.

\bibitem[Bel11]{bellovin11}
S.~M. Bellovin.
\newblock {Frank Miller: Inventor of the One-Time Pad}.
\newblock {\em Cryptologia}, 35(3), 2011.
\newblock \\
  \texttt{DOI:\,\href{http://dx.doi.org/10.1080/01611194.2011.583711}{10.1080/01611194.2011.583711}}.

\bibitem[BFL91]{babai91}
L.~Babai, L.~Fortnow, and C.~Lund.
\newblock {Non-deterministic exponential time has two-prover interactive
  protocols}.
\newblock {\em Comput. Complex.}, 1(1), 1991.
\newblock \\
  \texttt{DOI:\,\href{http://dx.doi.org/10.1007/BF01200056}{10.1007/BF01200056}}.

\bibitem[BFW14]{berta14}
M.~Berta, O.~Fawzi, and S.~Wehner.
\newblock {Quantum to Classical Randomness Extractors}.
\newblock {\em IEEE Trans. Inf. Theory}, 60(2), 2014.
\newblock \\
  \texttt{DOI:\,\href{http://dx.doi.org/10.1109/TIT.2013.2291780}{10.1109/TIT.2013.2291780}}.

\bibitem[BGKW88]{benor88}
M.~Ben{-Or}, S.~Goldwasser, J.~Kilian, and A.~Wigderson.
\newblock {Multi-Prover Interactive Proofs: How to Remove Intractability
  Assumptions}.
\newblock {\em Proc. 20th ACM STOC}, 1988.
\newblock \\
  \texttt{DOI:\,\href{http://dx.doi.org/10.1145/62212.62223}{10.1145/62212.62223}}.

\bibitem[Bha97]{bhatia97}
R.~Bhatia.
\newblock {\em {Matrix Analysis}}.
\newblock Springer New York, 1997.

\bibitem[Bha09]{bhatia09}
R.~Bhatia.
\newblock {\em {Positive Definite Matrices}}.
\newblock Princeton University Press, 2009.

\bibitem[Blu81]{blum81}
M.~Blum.
\newblock {Coin Flipping by Telephone}.
\newblock {\em Advances in Cryptology: A Report on CRYPTO '81}, 1981.
\newblock \\
  \texttt{DOI:\,\href{http://dx.doi.org/10.1145/1008908.1008911}{10.1145/1008908.1008911}}.

\bibitem[BM05]{buhrman05}
H.~Buhrman and S.~Massar.
\newblock {Causality and Tsirelson's bounds}.
\newblock {\em Phys. Rev. A}, 72(5), 2005.
\newblock \\
  \texttt{DOI:\,\href{http://dx.doi.org/10.1103/PhysRevA.72.052103}{10.1103/PhysRevA.72.052103}}.

\bibitem[BN00]{boneh00}
D.~Boneh and M.~Naor.
\newblock {Timed Commitments}.
\newblock {\em Advances in Cryptology: Proc. CRYPTO '00, LNCS}, 1880(3), 2000.
\newblock \\
  \texttt{DOI:\,\href{http://dx.doi.org/10.1007/3-540-44598-6\_15}{10.1007/3-540-44598-6\_15}}.

\bibitem[Bor26]{born26}
M.~Born.
\newblock {On The Quantum Mechanics Of Collisions}.
\newblock {\em Zeitschrift f{\"{u}}r Physik}, 37(12), 1926.

\bibitem[BS98]{bennett98}
C.~H. Bennett and P.~W. Shor.
\newblock {Quantum Information Theory}.
\newblock {\em IEEE Trans. Inf. Theory}, 44(6), 1998.
\newblock \\
  \texttt{DOI:\,\href{http://dx.doi.org/10.1109/18.720553}{10.1109/18.720553}}.

\bibitem[BS15]{bavarian15}
M.~Bavarian and P.~W. Shor.
\newblock {Information Causality, Szemer{\'e}di-Trotter and Algebraic Variants
  of CHSH}.
\newblock {\em Proc. Conference on Innovations in Theoretical Computer
  Science}, 2015.
\newblock \\
  \texttt{DOI:\,\href{http://dx.doi.org/10.1145/2688073.2688112}{10.1145/2688073.2688112}}.

\bibitem[BT08]{broadbent08}
A.~Broadbent and A.~Tapp.
\newblock {Information-Theoretically Secure Voting Without an Honest Majority}.
\newblock {\em Proc. IAVoSS Workshop on Trustworthy Elections}, 2008.
\newblock \\ \texttt{arXiv:\,\href{http://arxiv.org/abs/0806.1931}{0806.1931}}.

\bibitem[Can01]{canetti01}
R.~Canetti.
\newblock {Universally Composable Security: A New Paradigm for Cryptographic
  Protocols}.
\newblock {\em Proc. 42nd IEEE FOCS}, 2001.
\newblock \\
  \texttt{DOI:\,\href{http://dx.doi.org/10.1109/SFCS.2001.959888}{10.1109/SFCS.2001.959888}}.

\bibitem[CCL15]{chakroborty15}
K.~Chakraborty, A.~Chailloux, and A.~Leverrier.
\newblock {Arbitrarily long relativistic bit commitment}.
\newblock 2015.
\newblock \\
  \texttt{arXiv:\,\href{http://arxiv.org/abs/arXiv:1507.00239}{arXiv:1507.00239}}.

\bibitem[CDP{\etalchar{+}}13]{chiribella13}
G.~Chiribella, G.~M. D'Ariano, P.~Perinotti, D.~Schlingemann, and R.~F. Werner.
\newblock {A short impossibility proof of quantum bit commitment}.
\newblock {\em Phys. Lett. A}, 377(15), 2013.
\newblock \\
  \texttt{DOI:\,\href{http://dx.doi.org/10.1016/j.physleta.2013.02.045}{10.1016/j.physleta.2013.02.045}}.

\bibitem[CFL83]{chandra83}
A.~K. Chandra, M.~L. Furst, and R.~J. Lipton.
\newblock {Multi-party protocols}.
\newblock {\em Proc. 15th ACM STOC}, 1983.
\newblock \\
  \texttt{DOI:\,\href{http://dx.doi.org/10.1145/800061.808737}{10.1145/800061.808737}}.

\bibitem[Che52]{chernoff52}
H.~Chernoff.
\newblock {A measure of asymptotic efficiency for tests of a hypothesis based
  on the sum of observations}.
\newblock {\em Ann. Math. Statist.}, 23, 1952.

\bibitem[CHSH69]{clauser69}
J.~F. Clauser, M.~A. Horne, A.~Shimony, and R.~A. Holt.
\newblock {Proposed experiment to test local hidden-variable theories}.
\newblock {\em Phys. Rev. Lett.}, 23(15), 1969.
\newblock \\
  \texttt{DOI:\,\href{http://dx.doi.org/10.1103/PhysRevLett.23.880}{10.1103/PhysRevLett.23.880}}.

\bibitem[CHTW04]{cleve04}
R.~Cleve, P.~H{\o}yer, B.~Toner, and J.~Watrous.
\newblock {Consequences and Limits of Nonlocal Strategies}.
\newblock {\em Proc. IEEE Comput. Comp. '04}, 2004.
\newblock \\
  \texttt{DOI:\,\href{http://dx.doi.org/10.1109/CCC.2004.1313847}{10.1109/CCC.2004.1313847}}.

\bibitem[CJPPG15]{cooney15}
T.~Cooney, M.~Junge, C.~Palazuelos, and D.~P{\'e}rez-Garc{\'{\i}}a.
\newblock {Rank-one quantum games}.
\newblock {\em Comput. Complex.}, 24, 2015.
\newblock \\
  \texttt{DOI:\,\href{http://dx.doi.org/10.1007/s00037-014-0096-x}{10.1007/s00037-014-0096-x}}.

\bibitem[CK06]{colbeck06a}
R.~Colbeck and A.~Kent.
\newblock {Variable-bias coin tossing}.
\newblock {\em Phys. Rev. A}, 73(3), 2006.
\newblock \\
  \texttt{DOI:\,\href{http://dx.doi.org/10.1103/PhysRevA.73.032320}{10.1103/PhysRevA.73.032320}}.

\bibitem[CK11]{chailloux11}
A.~Chailloux and I.~Kerenidis.
\newblock {Optimal bounds for quantum bit commitment}.
\newblock {\em Proc. 52nd IEEE FOCS}, 2011.
\newblock \\
  \texttt{DOI:\,\href{http://dx.doi.org/10.1109/FOCS.2011.42}{10.1109/FOCS.2011.42}}.

\bibitem[CK12]{croke12}
S.~Croke and A.~Kent.
\newblock {Security details for bit commitment by transmitting measurement
  outcomes}.
\newblock {\em Phys. Rev. A}, 86(5), 2012.
\newblock \\
  \texttt{DOI:\,\href{http://dx.doi.org/10.1103/PhysRevA.86.052309}{10.1103/PhysRevA.86.052309}}.

\bibitem[Col06]{colbeck06}
R.~Colbeck.
\newblock {\em {Quantum And Relativistic Protocols For Secure Multi-Party
  Computation}}.
\newblock PhD thesis, University of Cambridge, 2006.
\newblock \\ \texttt{arXiv:\,\href{http://arxiv.org/abs/0911.3814}{0911.3814}}.

\bibitem[Col07]{colbeck07a}
R.~Colbeck.
\newblock {Impossibility of secure two-party classical computation}.
\newblock {\em Phys. Rev. A}, 76(6), 2007.
\newblock \\
  \texttt{DOI:\,\href{http://dx.doi.org/10.1103/PhysRevA.76.062308}{10.1103/PhysRevA.76.062308}}.

\bibitem[Cr{\'e}88]{crepeau88}
C.~Cr{\'e}peau.
\newblock {Equivalence Between Two Flavours of Oblivious Transfers}.
\newblock {\em Advances in Cryptology: Proc. CRYPTO '87, LNCS}, 293, 1988.
\newblock \\
  \texttt{DOI:\,\href{http://dx.doi.org/10.1007/3-540-48184-2\_30}{10.1007/3-540-48184-2\_30}}.

\bibitem[Cr{\'e}96]{crepeau96}
C.~Cr{\'e}peau.
\newblock {What is going on with Quantum Bit Commitment?}
\newblock {\em Proc. Pragocrypt '96: 1st International Conference on the Theory
  and Applications of Cryptology}, 1996.
\newblock \\ Online: \url{http://www.cs.mcgill.ca/~crepeau/PS/Cre96a.ps}.

\bibitem[Cr{\'e}97]{crepeau97}
C.~Cr{\'e}peau.
\newblock {Efficient Cryptographic Protocols Based on Noisy Channels}.
\newblock {\em Advances in Cryptology: Proc. EUROCRYPT '97, LNCS}, 1233, 1997.
\newblock \\
  \texttt{DOI:\,\href{http://dx.doi.org/10.1007/3-540-69053-0\_21}{10.1007/3-540-69053-0\_21}}.

\bibitem[Cr{\'e}11]{crepeau11a}
C.~Cr{\'e}peau.
\newblock {Commitment}.
\newblock {\em Encyclopedia of Security and Cryptography}, 2011.
\newblock \\
  \texttt{DOI:\,\href{http://dx.doi.org/10.1007/978-1-4419-5906-5\_239}{10.1007/978-1-4419-5906-5\_239}}.

\bibitem[CSST11]{crepeau11}
C.~Cr{\'e}peau, L.~Salvail, J.-R. Simard, and A.~Tapp.
\newblock {Two provers in isolation}.
\newblock {\em Advances in Cryptology: Proc. ASIACRYPT '11, LNCS}, 7073, 2011.
\newblock \\
  \texttt{DOI:\,\href{http://dx.doi.org/10.1007/978-3-642-25385-0\_22}{10.1007/978-3-642-25385-0\_22}}.

\bibitem[Deu83]{deutsch83}
D.~Deutsch.
\newblock {Uncertainty in Quantum Measurements}.
\newblock {\em Phys. Rev. Lett.}, 50(9), 1983.
\newblock \\
  \texttt{DOI:\,\href{http://dx.doi.org/10.1103/PhysRevLett.50.631}{10.1103/PhysRevLett.50.631}}.

\bibitem[Deu85]{deutsch85}
D.~Deutsch.
\newblock {Quantum theory, the Church-Turing principle and the universal
  quantum computer}.
\newblock {\em Proc. R. Soc. Lond. A}, 400, 1985.
\newblock \\
  \texttt{DOI:\,\href{http://dx.doi.org/10.1098/rspa.1985.0070}{10.1098/rspa.1985.0070}}.

\bibitem[DFR{\etalchar{+}}07]{damgard07}
I.~B. Damg{\aa}rd, S.~Fehr, R.~Renner, L.~Salvail, and C.~Schaffner.
\newblock {A Tight High-Order Entropic Quantum Uncertainty Relation with
  Applications}.
\newblock {\em Advances in Cryptology: Proc. CRYPTO '07, LNCS}, 2007.
\newblock \\
  \texttt{DOI:\,\href{http://dx.doi.org/10.1007/978-3-540-74143-5\_20}{10.1007/978-3-540-74143-5\_20}}.

\bibitem[DFSS05]{damgard05}
I.~B. Damg{\aa}rd, S.~Fehr, L.~Salvail, and C.~Schaffner.
\newblock {Cryptography In the Bounded Quantum-Storage Model}.
\newblock {\em Proc. 46th IEEE FOCS}, 2005.
\newblock \\
  \texttt{DOI:\,\href{http://dx.doi.org/10.1109/SFCS.2005.30}{10.1109/SFCS.2005.30}}.

\bibitem[Dir39]{dirac39}
P.~A. Dirac.
\newblock {A new notation for quantum mechanics}.
\newblock {\em Mathematical Proceedings of the Cambridge Philosophical
  Society}, 35(03), 1939.
\newblock \\
  \texttt{DOI:\,\href{http://dx.doi.org/10.1017/S0305004100021162}{10.1017/S0305004100021162}}.

\bibitem[DKSW07]{dariano07}
G.~M. D'Ariano, D.~Kretschmann, D.~Schlingemann, and R.~F. Werner.
\newblock {Reexamination of quantum bit commitment: The possible and the
  impossible}.
\newblock {\em Phys. Rev. A}, 76(3), 2007.
\newblock \\
  \texttt{DOI:\,\href{http://dx.doi.org/10.1103/PhysRevA.76.032328}{10.1103/PhysRevA.76.032328}}.

\bibitem[DLTW08]{doherty08}
A.~C. Doherty, Y.-C. Liang, B.~Toner, and S.~Wehner.
\newblock {The Quantum Moment Problem and Bounds on Entangled Multi-prover
  Games}.
\newblock {\em Proc. IEEE Comput. Comp. '08}, 2008.
\newblock \\
  \texttt{DOI:\,\href{http://dx.doi.org/10.1109/CCC.2008.26}{10.1109/CCC.2008.26}}.

\bibitem[EBH09]{eaton09}
J.~W. Eaton, D.~Bateman, and S.~Hauberg.
\newblock {\em {GNU Octave version 3.0.1 manual: a high-level interactive
  language for numerical computations}}.
\newblock CreateSpace Independent Publishing Platform, 2009.
\newblock \\ Online: \url{http://www.gnu.org/software/octave/}.

\bibitem[Eke91]{ekert91}
A.~K. Ekert.
\newblock {Quantum Cryptography Based on Bell's Theorem}.
\newblock {\em Phys. Rev. Lett.}, 67(6), 1991.
\newblock \\
  \texttt{DOI:\,\href{http://dx.doi.org/10.1103/PhysRevLett.67.661}{10.1103/PhysRevLett.67.661}}.

\bibitem[EPR35]{einstein35}
A.~Einstein, B.~Podolsky, and N.~Rosen.
\newblock {Can Quantum-Mechanical Description of Physical Reality Be Considered
  Complete?}
\newblock {\em Phys. Rev.}, 47, 1935.

\bibitem[ER14]{ekert14}
A.~K. Ekert and R.~Renner.
\newblock {The ultimate physical limits of privacy}.
\newblock {\em Nature}, 507(7493), 2014.
\newblock \\
  \texttt{DOI:\,\href{http://dx.doi.org/10.1038/nature13132}{10.1038/nature13132}}.

\bibitem[Fey82]{feynman82}
R.~P. Feynman.
\newblock {Simulating Physics with computers}.
\newblock {\em International Journal of Theoretical Physics}, 21(6-7), 1982.
\newblock \\
  \texttt{DOI:\,\href{http://dx.doi.org/10.1007/BF02650179}{10.1007/BF02650179}}.

\bibitem[FF15a]{fehr15}
S.~Fehr and M.~Fillinger.
\newblock {Multi-Prover Commitments Against Non-Signaling Attacks}.
\newblock {\em Advances in Cryptology: Proc. CRYPTO '15, LNCS}, 2015.
\newblock \\
  \texttt{DOI:\,\href{http://dx.doi.org/10.1007/978-3-662-48000-7\_20}{10.1007/978-3-662-48000-7\_20}}.

\bibitem[FF15b]{fehr15a}
S.~Fehr and M.~Fillinger.
\newblock {On the Composition of Two-Prover Commitments, and Applications to
  Multi-Round Relativistic Commitments}.
\newblock 2015.
\newblock \\
  \texttt{arXiv:\,\href{http://arxiv.org/abs/1507.00240}{1507.00240}}.

\bibitem[Fri12]{fritz12}
T.~Fritz.
\newblock {Beyond Bell's theorem: correlation scenarios}.
\newblock {\em New J. Phys.}, 14, 2012.
\newblock \\
  \texttt{DOI:\,\href{http://dx.doi.org/10.1088/1367-2630/14/10/103001}{10.1088/1367-2630/14/10/103001}}.

\bibitem[Fri14]{fritz14}
T.~Fritz.
\newblock {Beyond Bell's Theorem II: Scenarios with arbitrary causal
  structure}.
\newblock 2014.
\newblock \\ \texttt{arXiv:\,\href{http://arxiv.org/abs/1404.4812}{1404.4812}}.

\bibitem[FRS94]{fortnow94}
L.~Fortnow, J.~Rompel, and M.~Sipser.
\newblock {On the power of multi-prover interactive protocols}.
\newblock {\em Theoretical Computer Science}, 134(2), 1994.
\newblock \\
  \texttt{DOI:\,\href{http://dx.doi.org/10.1016/0304-3975(94)90251-8}{10.1016/0304-3975(94)90251-8}}.

\bibitem[Gas04]{gasarch04}
W.~Gasarch.
\newblock {A survey on private information retrieval}.
\newblock {\em Bull. Eur. Assoc. Theor. Comput. Sci.}, 82, 2004.

\bibitem[GIKM00]{gertner00}
Y.~Gertner, Y.~Ishai, E.~Kushilevitz, and T.~Malkin.
\newblock {Protecting Data Privacy in Private Information Retrieval Schemes}.
\newblock {\em J. Comput. System Sci.}, 60(3), 2000.
\newblock \\
  \texttt{DOI:\,\href{http://dx.doi.org/10.1006/jcss.1999.1689}{10.1006/jcss.1999.1689}}.

\bibitem[GKR08]{goldwasser08}
S.~Goldwasser, Y.~T. Kalai, and G.~N. Rothblum.
\newblock {One-time Programs}.
\newblock {\em Advances in Cryptology: Proc. CRYPTO '08, LNCS}, 5157, 2008.
\newblock \\
  \texttt{DOI:\,\href{http://dx.doi.org/10.1007/978-3-540-85174-5\_3}{10.1007/978-3-540-85174-5\_3}}.

\bibitem[GMR85]{goldwasser85}
S.~Goldwasser, S.~Micali, and C.~Rackoff.
\newblock {The Knowledge Complexity of Interactive Proof Systems}.
\newblock {\em Proc. 17th ACM STOC}, 18(1), 1985.
\newblock \\
  \texttt{DOI:\,\href{http://dx.doi.org/10.1137/0218012}{10.1137/0218012}}.

\bibitem[GMW86]{goldreich86}
O.~Goldreich, S.~Micali, and A.~Wigderson.
\newblock {Proofs that Yield Nothing But their Validity and a Methodology of
  Cryptographic Protocol Design}.
\newblock {\em Proc. 27th ACM STOC}, 1986.
\newblock \\
  \texttt{DOI:\,\href{http://dx.doi.org/10.1109/SFCS.1986.47}{10.1109/SFCS.1986.47}}.

\bibitem[Gol08]{goldreich08}
O.~Goldreich.
\newblock {Probabilistic Proof Systems: A Primer}.
\newblock 2008.
\newblock \\ Online: \url{http://www.wisdom.weizmann.ac.il/~oded/PS/pps5.pdf}.

\bibitem[Gro96]{grover96}
L.~K. Grover.
\newblock {A fast quantum mechanical algorithm for database search}.
\newblock {\em Proc. 28th ACM STOC}, 1996.
\newblock \\
  \texttt{DOI:\,\href{http://dx.doi.org/10.1145/237814.237866}{10.1145/237814.237866}}.

\bibitem[GWC{\etalchar{+}}14]{gallego14}
R.~Gallego, L.~E. W{\"{u}}rflinger, R.~Chaves, A.~Ac{\'{\i}}n, and
  M.~Navascu{\'e}s.
\newblock {Nonlocality in sequential correlation scenarios}.
\newblock {\em New J. Phys.}, 16(3), 2014.
\newblock \\
  \texttt{DOI:\,\href{http://dx.doi.org/10.1088/1367-2630/16/3/033037}{10.1088/1367-2630/16/3/033037}}.

\bibitem[Hel76]{helstrom76}
C.~W. Helstrom.
\newblock {\em {Quantum detection and estimation theory}}.
\newblock Academic Press, New York, USA, 1976.

\bibitem[HK04]{kent04}
L.~Hardy and A.~Kent.
\newblock {Cheat Sensitive Quantum Bit Commitment}.
\newblock {\em Phys. Rev. Lett.}, 92(15), 2004.
\newblock \\
  \texttt{DOI:\,\href{http://dx.doi.org/10.1103/PhysRevLett.92.157901}{10.1103/PhysRevLett.92.157901}}.

\bibitem[HM12]{hayden12}
P.~Hayden and A.~May.
\newblock {Summoning Information in Spacetime, or Where and When Can a Qubit
  Be?}
\newblock 2012.
\newblock \\ \texttt{arXiv:\,\href{http://arxiv.org/abs/1210.0913}{1210.0913}}.

\bibitem[HM15]{heinosaari15}
T.~Heinosaari and T.~Miyadera.
\newblock {Universality of Sequential Quantum Measurements}.
\newblock {\em Phys. Rev. A}, 022110, 2015.
\newblock \\
  \texttt{DOI:\,\href{http://dx.doi.org/10.1103/PhysRevA.91.022110}{10.1103/PhysRevA.91.022110}}.

\bibitem[Hol73]{holevo73}
A.~S. Holevo.
\newblock {Statistical problems in quantum physics}.
\newblock {\em Proc. 2nd Japan-USSR Symposium on Probability Theory, LNCS},
  330, 1973.
\newblock \\
  \texttt{DOI:\,\href{http://dx.doi.org/10.1007/BFb0061483}{10.1007/BFb0061483}}.

\bibitem[KdW04]{kerenidis04}
I.~Kerenidis and R.~de~Wolf.
\newblock {Quantum symmetrically-private information retrieval}.
\newblock {\em Inform. Process. Lett.}, 90(3), 2004.
\newblock \\
  \texttt{DOI:\,\href{http://dx.doi.org/10.1016/j.ipl.2004.02.003}{10.1016/j.ipl.2004.02.003}}.

\bibitem[Ken99]{kent99}
A.~Kent.
\newblock {Unconditionally Secure Bit Commitment}.
\newblock {\em Phys. Rev. Lett.}, 83(7), 1999.
\newblock \\
  \texttt{DOI:\,\href{http://dx.doi.org/10.1103/PhysRevLett.83.1447}{10.1103/PhysRevLett.83.1447}}.

\bibitem[Ken05]{kent05}
A.~Kent.
\newblock {Secure Classical Bit Commitment Using Fixed Capacity Communication
  Channels}.
\newblock {\em J. Cryptology}, 18(4), 2005.
\newblock \\
  \texttt{DOI:\,\href{http://dx.doi.org/10.1007/s00145-005-0905-8}{10.1007/s00145-005-0905-8}}.

\bibitem[Ken11]{kent11}
A.~Kent.
\newblock {Unconditionally secure bit commitment with flying qudits}.
\newblock {\em New J. Phys.}, 13, 2011.
\newblock \\
  \texttt{DOI:\,\href{http://dx.doi.org/10.1088/1367-2630/13/11/113015}{10.1088/1367-2630/13/11/113015}}.

\bibitem[Ken12a]{kent12a}
A.~Kent.
\newblock {Quantum tasks in Minkowski space}.
\newblock {\em Class. Quantum Grav.}, 29(22), 2012.
\newblock \\
  \texttt{DOI:\,\href{http://dx.doi.org/10.1088/0264-9381/29/22/224013}{10.1088/0264-9381/29/22/224013}}.

\bibitem[Ken12b]{kent12}
A.~Kent.
\newblock {Unconditionally Secure Bit Commitment by Transmitting Measurement
  Outcomes}.
\newblock {\em Phys. Rev. Lett.}, 109(13), 2012.
\newblock \\
  \texttt{DOI:\,\href{http://dx.doi.org/10.1103/PhysRevLett.109.130501}{10.1103/PhysRevLett.109.130501}}.

\bibitem[Ken12c]{kent12b}
A.~Kent.
\newblock {Why classical certification is impossible in a quantum world}.
\newblock {\em Quant. Inf. Proc.}, 11(2), 2012.
\newblock \\
  \texttt{DOI:\,\href{http://dx.doi.org/10.1007/s11128-011-0262-x}{10.1007/s11128-011-0262-x}}.

\bibitem[Ken13]{kent13}
A.~Kent.
\newblock {A no-summoning theorem in relativistic quantum theory}.
\newblock {\em Quant. Inf. Proc.}, 12(2), 2013.
\newblock \\
  \texttt{DOI:\,\href{http://dx.doi.org/10.1007/s11128-012-0431-6}{10.1007/s11128-012-0431-6}}.

\bibitem[Kil88]{kilian88}
J.~Kilian.
\newblock {Founding Cryptography on Oblivious Transfer}.
\newblock {\em Proc. 20th ACM STOC}, 1988.
\newblock \\
  \texttt{DOI:\,\href{http://dx.doi.org/10.1145/62212.62215}{10.1145/62212.62215}}.

\bibitem[KM02]{kobayashi02}
H.~Kobayashi and K.~Matsumoto.
\newblock {Quantum Multi-prover Interactive Proof Systems with Limited Prior
  Entanglement}.
\newblock {\em Proc. ISAAC '02, LNCS}, 2518, 2002.
\newblock \\
  \texttt{DOI:\,\href{http://dx.doi.org/10.1007/3-540-36136-7\_11}{10.1007/3-540-36136-7\_11}}.

\bibitem[KMS11]{kent11a}
A.~Kent, W.~J. Munro, and T.~P. Spiller.
\newblock {Quantum tagging: Authenticating location via quantum information and
  relativistic signaling constraints}.
\newblock {\em Phys. Rev. A}, 84(1), 2011.
\newblock \\
  \texttt{DOI:\,\href{http://dx.doi.org/10.1103/PhysRevA.84.012326}{10.1103/PhysRevA.84.012326}}.

\bibitem[KTHW13]{kaniewski13}
J.~Kaniewski, M.~Tomamichel, E.~H{\"{a}}nggi, and S.~Wehner.
\newblock {Secure Bit Commitment From Relativistic Constraints}.
\newblock {\em IEEE Trans. Inf. Theory}, 59(7): 4687--4699, 2013.
\newblock \\
  \texttt{DOI:\,\href{http://dx.doi.org/10.1109/TIT.2013.2247463}{10.1109/TIT.2013.2247463}}.

\bibitem[KWW12]{konig12}
R.~K{\"{o}}nig, S.~Wehner, and J.~Wullschleger.
\newblock {Unconditional Security From Noisy Quantum Storage}.
\newblock {\em IEEE Trans. Inf. Theory}, 58(3), 2012.
\newblock \\
  \texttt{DOI:\,\href{http://dx.doi.org/10.1109/TIT.2011.2177772}{10.1109/TIT.2011.2177772}}.

\bibitem[LC97]{lo97}
H.-K. Lo and H.~Chau.
\newblock {Is Quantum Bit Commitment Really Impossible?}
\newblock {\em Phys. Rev. Lett.}, 78(17), 1997.
\newblock \\
  \texttt{DOI:\,\href{http://dx.doi.org/10.1103/PhysRevLett.78.3410}{10.1103/PhysRevLett.78.3410}}.

\bibitem[LCC{\etalchar{+}}14]{liu14}
Y.~Liu, Y.~Cao, M.~Curty, S.~K. Liao, J.~Wang, K.~Cui, Y.~H. Li, Z.~H. Lin,
  Q.~C. Sun, D.~D. Li, H.~F. Zhang, Y.~Zhao, T.~Y. Chen, C.~Z. Peng, Q.~Zhang,
  A.~Cabello, and J.~W. Pan.
\newblock {Experimental unconditionally secure bit commitment}.
\newblock {\em Phys. Rev. Lett.}, 112(1), 2014.
\newblock \\
  \texttt{DOI:\,\href{http://dx.doi.org/10.1103/PhysRevLett.112.010504}{10.1103/PhysRevLett.112.010504}}.

\bibitem[LKB{\etalchar{+}}13]{lunghi13}
T.~Lunghi, J.~Kaniewski, F.~Bussi{\`{e}}res, R.~Houlmann, M.~Tomamichel,
  A.~Kent, N.~Gisin, S.~Wehner, and H.~Zbinden.
\newblock {Experimental Bit Commitment Based on Quantum Communication and
  Special Relativity}.
\newblock {\em Phys. Rev. Lett.}, 111(18), 2013.
\newblock \\
  \texttt{DOI:\,\href{http://dx.doi.org/10.1103/PhysRevLett.111.180504}{10.1103/PhysRevLett.111.180504}}.

\bibitem[LKB{\etalchar{+}}15]{lunghi15}
T.~Lunghi, J.~Kaniewski, F.~Bussi{\`{e}}res, R.~Houlmann, M.~Tomamichel,
  S.~Wehner, and H.~Zbinden.
\newblock {Practical Relativistic Bit Commitment}.
\newblock {\em Phys. Rev. Lett.}, 115(3), 2015.
\newblock \\
  \texttt{DOI:\,\href{http://dx.doi.org/10.1103/PhysRevLett.115.030502}{10.1103/PhysRevLett.115.030502}}.

\bibitem[Lo97]{lo97a}
H.-K. Lo.
\newblock {Insecurity of Quantum Secure Computations}.
\newblock {\em Phys. Rev. A}, 56(2), 1997.
\newblock \\
  \texttt{DOI:\,\href{http://dx.doi.org/10.1103/PhysRevA.56.1154}{10.1103/PhysRevA.56.1154}}.

\bibitem[LWW{\etalchar{+}}10]{lydersen10}
L.~Lydersen, C.~Wiechers, C.~Wittmann, D.~Elser, J.~Skaar, and V.~Makarov.
\newblock {Hacking commercial quantum cryptography systems by tailored bright
  illumination}.
\newblock {\em Nat. Phot.}, 4(686), 2010.
\newblock \\
  \texttt{DOI:\,\href{http://dx.doi.org/10.1038/NPHOTON.2010.214}{10.1038/NPHOTON.2010.214}}.

\bibitem[Mal00]{malkin00}
T.~Malkin.
\newblock {\em {A Study of Secure Database Access and General Two-Party
  Computation}}.
\newblock PhD thesis, Massachusetts Institute of Technology, 2000.

\bibitem[Mau91]{maurer91}
U.~M. Maurer.
\newblock {A Provably-Secure Strongly-Randomized Cipher}.
\newblock {\em Advances in Cryptology: Proc. EUROCRYPT '90, LNCS}, 473, 1991.
\newblock \\
  \texttt{DOI:\,\href{http://dx.doi.org/10.1007/3-540-46877-3\_33}{10.1007/3-540-46877-3\_33}}.

\bibitem[May97]{mayers97}
D.~Mayers.
\newblock {Unconditionally secure quantum bit commitment is impossible}.
\newblock {\em Phys. Rev. Lett.}, 78(17), 1997.
\newblock \\
  \texttt{DOI:\,\href{http://dx.doi.org/10.1103/PhysRevLett.78.3414}{10.1103/PhysRevLett.78.3414}}.

\bibitem[MHH{\etalchar{+}}97]{muller97}
A.~Muller, T.~Herzog, B.~Huttner, W.~Tittel, H.~Zbinden, and N.~Gisin.
\newblock {``Plug and play'' systems for quantum cryptography}.
\newblock {\em App. Phys. Lett.}, 70(793), 1997.
\newblock \\
  \texttt{DOI:\,\href{http://dx.doi.org/10.1063/1.118224}{10.1063/1.118224}}.

\bibitem[MM07]{mullen07}
G.~L. Mullen and C.~Mummert.
\newblock {\em {Finite Fields and Applications}}.
\newblock Amer. Math. Soc., 2007.

\bibitem[Moc07]{mochon07}
C.~Mochon.
\newblock {Quantum weak coin flipping with arbitrarily small bias}.
\newblock 2007.
\newblock \\ \texttt{arXiv:\,\href{http://arxiv.org/abs/0711.4114}{0711.4114}}.

\bibitem[MU88]{maassen88}
H.~Maassen and J.~B.~M. Uffink.
\newblock {Generalized Entropic Uncertainty Relations}.
\newblock {\em Phys. Rev. Lett.}, 60(12), 1988.
\newblock \\
  \texttt{DOI:\,\href{http://dx.doi.org/10.1103/PhysRevLett.60.1103}{10.1103/PhysRevLett.60.1103}}.

\bibitem[NC00]{nielsen00}
M.~A. Nielsen and I.~L. Chuang.
\newblock {\em {Quantum Computation and Quantum Information}}.
\newblock Cambridge University Press, 2000.

\bibitem[NP00]{naor00}
M.~Naor and B.~Pinkas.
\newblock {Distributed Oblivious Transfer}.
\newblock {\em Advances in Cryptology: Proc. ASIACRYPT '00, LNCS}, 2000.
\newblock \\
  \texttt{DOI:\,\href{http://dx.doi.org/10.1007/3-540-44448-3\_16}{10.1007/3-540-44448-3\_16}}.

\bibitem[NPA07]{navascues07}
M.~Navascu{\'e}s, S.~Pironio, and A.~Ac{\'{\i}}n.
\newblock {Bounding the set of quantum correlations}.
\newblock {\em Phys. Rev. Lett.}, 98(1), 2007.
\newblock \\
  \texttt{DOI:\,\href{http://dx.doi.org/10.1103/PhysRevLett.98.010401}{10.1103/PhysRevLett.98.010401}}.

\bibitem[PR94]{popescu94}
S.~Popescu and D.~Rohrlich.
\newblock {Quantum Nonlocality as an Axiom}.
\newblock {\em Foundations of Physics}, 24(3), 1994.
\newblock \\
  \texttt{DOI:\,\href{http://dx.doi.org/10.1007/BF02058098}{10.1007/BF02058098}}.

\bibitem[RG15]{ribeiro15}
J.~Ribeiro and F.~Grosshans.
\newblock {A Tight Lower Bound for the BB84-states
  Quantum-Position-Verification Protocol}.
\newblock 2015.
\newblock \\
  \texttt{arXiv:\,\href{http://arxiv.org/abs/1504.07171}{1504.07171}}.

\bibitem[Riv99]{rivest99}
R.~L. Rivest.
\newblock {Unconditionally Secure Commitment and Oblivious Transfer Schemes
  Using Private Channels and a Trusted Initializer}.
\newblock 1999.
\newblock \\ Online: \url{http://people.csail.mit.edu/rivest/pubs/Riv99d.pdf}.

\bibitem[RKKM14]{radchenko14}
I.~Radchenko, K.~Kravtsov, S.~Kulik, and S.~N. Molotkov.
\newblock {Relativistic quantum cryptography}.
\newblock {\em Laser Phys. Lett.}, 11, 2014.
\newblock \\
  \texttt{DOI:\,\href{http://dx.doi.org/10.1088/1612-2011/11/6/065203}{10.1088/1612-2011/11/6/065203}}.

\bibitem[RUV13]{reichardt13}
B.~W. Reichardt, F.~Unger, and U.~Vazirani.
\newblock {Classical command of quantum systems}.
\newblock {\em Nature}, 496(7446), 2013.
\newblock \\
  \texttt{DOI:\,\href{http://dx.doi.org/10.1038/nature12035}{10.1038/nature12035}}.

\bibitem[RV13]{regev13}
O.~Regev and T.~Vidick.
\newblock {Quantum XOR games}.
\newblock {\em Proc. IEEE Comput. Comp. '13}, 2013.
\newblock \\
  \texttt{DOI:\,\href{http://dx.doi.org/10.1109/CCC.2013.23}{10.1109/CCC.2013.23}}.

\bibitem[Sal98]{salvail98}
L.~Salvail.
\newblock {Quantum Bit Commitment From a Physical Assumption}.
\newblock {\em Advances in Cryptology: Proc. CRYPTO '98, LNCS}, 1462, 1998.
\newblock \\
  \texttt{DOI:\,\href{http://dx.doi.org/10.1007/BFb0055740}{10.1007/BFb0055740}}.

\bibitem[Sca12]{scarani12}
V.~Scarani.
\newblock {The device-independent outlook on quantum physics}.
\newblock {\em Acta Phys. Slov.}, 62(4), 2012.
\newblock \\ \texttt{arXiv:\,\href{http://arxiv.org/abs/1303.3081}{1303.3081}}.

\bibitem[Sch95]{schumacher95}
B.~Schumacher.
\newblock {Quantum coding}.
\newblock {\em Phys. Rev. A}, 51(4), 1995.
\newblock \\
  \texttt{DOI:\,\href{http://dx.doi.org/10.1103/PhysRevA.51.2738}{10.1103/PhysRevA.51.2738}}.

\bibitem[Sch10]{schaffner10}
C.~Schaffner.
\newblock {Simple protocols for oblivious transfer and secure identification in
  the noisy-quantum-storage model}.
\newblock {\em Phys. Rev. A}, 82(3), 2010.
\newblock \\
  \texttt{DOI:\,\href{http://dx.doi.org/10.1103/PhysRevA.82.032308}{10.1103/PhysRevA.82.032308}}.

\bibitem[SCK14]{sikora14}
J.~Sikora, A.~Chailloux, and I.~Kerenidis.
\newblock {Strong connections between quantum encodings, nonlocality, and
  quantum cryptography}.
\newblock {\em Phys. Rev. A}, 89(2), 2014.
\newblock \\
  \texttt{DOI:\,\href{http://dx.doi.org/10.1103/PhysRevA.89.022334}{10.1103/PhysRevA.89.022334}}.

\bibitem[Sha48]{shannon48}
C.~E. Shannon.
\newblock {A Mathematical Theory of Communication}.
\newblock {\em Bell System Technical Journal}, 27(3-4), 1948.
\newblock \\
  \texttt{DOI:\,\href{http://dx.doi.org/10.1002/j.1538-7305.1948.tb01338.x}{10.1002/j.1538-7305.1948.tb01338.x}}.

\bibitem[Sha49]{shannon49}
C.~E. Shannon.
\newblock {Communication Theory of Secrecy Systems}.
\newblock {\em Bell System Technical Journal}, 28(4), 1949.
\newblock \\
  \texttt{DOI:\,\href{http://dx.doi.org/10.1002/j.1538-7305.1949.tb00928.x}{10.1002/j.1538-7305.1949.tb00928.x}}.

\bibitem[Sho94]{shor94}
P.~W. Shor.
\newblock {Algorithms for Quantum Computation: Discrete Logarithms and
  Factoring}.
\newblock {\em Proc. 35th ACM STOC}, 1994.
\newblock \\
  \texttt{DOI:\,\href{http://dx.doi.org/10.1109/SFCS.1994.365700}{10.1109/SFCS.1994.365700}}.

\bibitem[Sim07]{simard07}
J.-R. Simard.
\newblock {\em {Classical and Quantum Strategies for Bit Commitment Schemes in
  the Two-Prover Model}}.
\newblock Master's thesis, McGill University, 2007.
\newblock \\ Online:
  \url{http://crypto.cs.mcgill.ca/~crepeau/PDF/memoire-JR.pdf}.

\bibitem[SR01]{spekkens01}
R.~W. Spekkens and T.~Rudolph.
\newblock {Degrees of concealment and bindingness in quantum bit commitment
  protocols}.
\newblock {\em Phys. Rev. A}, 65, 2001.
\newblock \\
  \texttt{DOI:\,\href{http://dx.doi.org/10.1103/PhysRevA.65.012310}{10.1103/PhysRevA.65.012310}}.

\bibitem[SR02]{spekkens02}
R.~W. Spekkens and T.~Rudolph.
\newblock {Quantum Protocol for Cheat-Sensitive Weak Coin Flipping.}
\newblock {\em Phys. Rev. Lett.}, 89(22), 2002.
\newblock \\
  \texttt{DOI:\,\href{http://dx.doi.org/10.1103/PhysRevLett.89.227901}{10.1103/PhysRevLett.89.227901}}.

\bibitem[TFKW13]{tomamichel13}
M.~Tomamichel, S.~Fehr, J.~Kaniewski, and S.~Wehner.
\newblock {A monogamy-of-entanglement game with applications to
  device-independent quantum cryptography}.
\newblock {\em New J. Phys.}, 15(10), 2013.
\newblock \\
  \texttt{DOI:\,\href{http://dx.doi.org/10.1088/1367-2630/15/10/103002}{10.1088/1367-2630/15/10/103002}}.

\bibitem[Tom12]{tomamichel12}
M.~Tomamichel.
\newblock {\em {A Framework for Non-Asymptotic Quantum Information Theory}}.
\newblock PhD thesis, ETH Zurich, 2012.
\newblock \\ \texttt{arXiv:\,\href{http://arxiv.org/abs/1203.2142}{1203.2142}}.

\bibitem[TR11]{tomamichel11}
M.~Tomamichel and R.~Renner.
\newblock {Uncertainty Relation for Smooth Entropies}.
\newblock {\em Phys. Rev. Lett.}, 106(11), 2011.
\newblock \\
  \texttt{DOI:\,\href{http://dx.doi.org/10.1103/PhysRevLett.106.110506}{10.1103/PhysRevLett.106.110506}}.

\bibitem[Tsi80]{tsirelson80}
B.~S. Tsirelson.
\newblock {Quantum generalizations of Bell's inequality}.
\newblock {\em Lett. Math. Phys.}, 4(2), 1980.
\newblock \\
  \texttt{DOI:\,\href{http://dx.doi.org/10.1007/BF00417500}{10.1007/BF00417500}}.

\bibitem[Uhl76]{uhlmann76}
A.~Uhlmann.
\newblock {The ``transition probability'' in the state space of a *-algebra}.
\newblock {\em Rep. Math. Phys.}, 9(2), 1976.
\newblock \\
  \texttt{DOI:\,\href{http://dx.doi.org/10.1016/0034-4877(76)90060-4}{10.1016/0034-4877(76)90060-4}}.

\bibitem[Unr14]{unruh14}
D.~Unruh.
\newblock {Quantum Position Verification in the Random Oracle Model}.
\newblock {\em Advances in Cryptology: Proc. CRYPTO '14, LNCS}, 8617, 2014.
\newblock \\
  \texttt{DOI:\,\href{http://dx.doi.org/10.1007/978-3-662-44381-1\_1}{10.1007/978-3-662-44381-1\_1}}.

\bibitem[Vid13]{vidick13}
T.~Vidick.
\newblock {Three-player entangled XOR games are NP-hard to approximate}.
\newblock {\em Proc. 54th IEEE FOCS}, 2013.
\newblock \\
  \texttt{DOI:\,\href{http://dx.doi.org/10.1109/FOCS.2013.87}{10.1109/FOCS.2013.87}}.

\bibitem[Weh06]{wehner06}
S.~Wehner.
\newblock {Tsirelson bounds for generalized Clauser-Horne-Shimony-Holt
  inequalities}.
\newblock {\em Phys. Rev. A}, 73, 2006.
\newblock \\
  \texttt{DOI:\,\href{http://dx.doi.org/10.1103/PhysRevA.73.022110}{10.1103/PhysRevA.73.022110}}.

\bibitem[Wer89]{werner89}
R.~F. Werner.
\newblock {Quantum states with Einstein-Podolsky-Rosen correlations admitting a
  hidden-variable model}.
\newblock {\em Phys. Rev. A}, 40(8), 1989.
\newblock \\
  \texttt{DOI:\,\href{http://dx.doi.org/10.1103/PhysRevA.40.4277}{10.1103/PhysRevA.40.4277}}.

\bibitem[Wie83]{wiesner83}
S.~Wiesner.
\newblock {Conjugate coding}.
\newblock {\em ACM SIGACT News}, 15(1), 1983.
\newblock \\
  \texttt{DOI:\,\href{http://dx.doi.org/10.1145/1008908.1008920}{10.1145/1008908.1008920}}.

\bibitem[Wil13]{wilde13}
M.~M. Wilde.
\newblock {\em {Quantum Information Theory}}.
\newblock Cambridge University Press, 2013.

\bibitem[WNI03]{winter03}
A.~Winter, A.~C.~A. Nascimento, and H.~Imai.
\newblock {Commitment Capacity of Discrete Memoryless Channels}.
\newblock {\em Cryptography and Coding, LNCS}, 2898, 2003.
\newblock \\
  \texttt{DOI:\,\href{http://dx.doi.org/10.1007/978-3-540-40974-8\_4}{10.1007/978-3-540-40974-8\_4}}.

\bibitem[WST08]{wehner08a}
S.~Wehner, C.~Schaffner, and B.~Terhal.
\newblock {Cryptography from Noisy Storage}.
\newblock {\em Phys. Rev. Lett.}, 100(22), 2008.
\newblock \\
  \texttt{DOI:\,\href{http://dx.doi.org/10.1103/PhysRevLett.100.220502}{10.1103/PhysRevLett.100.220502}}.

\bibitem[WTHR11]{winkler11}
S.~Winkler, M.~Tomamichel, S.~Hengl, and R.~Renner.
\newblock {Impossibility of growing quantum bit commitments}.
\newblock {\em Phys. Rev. Lett.}, 107(9), 2011.
\newblock \\
  \texttt{DOI:\,\href{http://dx.doi.org/10.1103/PhysRevLett.107.090502}{10.1103/PhysRevLett.107.090502}}.

\bibitem[WWW11]{winkler11a}
S.~Winkler, J.~Wullschleger, and S.~Wolf.
\newblock {Bit Commitment From Nonsignaling Correlations}.
\newblock {\em IEEE Trans. Inf. Theory}, 57(3), 2011.
\newblock \\
  \texttt{DOI:\,\href{http://dx.doi.org/10.1109/TIT.2011.2104471}{10.1109/TIT.2011.2104471}}.

\bibitem[W{\.Z}82]{wootters82}
W.~K. Wootters and W.~H. {\.Z}urek.
\newblock {A single quantum cannot be cloned}.
\newblock {\em Nature}, 299(5886), 1982.
\newblock \\
  \texttt{DOI:\,\href{http://dx.doi.org/10.1038/299802a0}{10.1038/299802a0}}.

\bibitem[Yao82]{yao82}
A.~C.-C. Yao.
\newblock {Protocols for secure computations}.
\newblock {\em Proc. 23rd IEEE FOCS}, 1982.
\newblock \\
  \texttt{DOI:\,\href{http://dx.doi.org/10.1109/SFCS.1982.38}{10.1109/SFCS.1982.38}}.

\bibitem[Yao95]{yao95}
A.~C.-C. Yao.
\newblock {Security of Quantum Protocols Against Coherent Measurements}.
\newblock {\em Proc. 27th ACM STOC}, 1995.
\newblock \\
  \texttt{DOI:\,\href{http://dx.doi.org/10.1145/225058.225085}{10.1145/225058.225085}}.

\end{thebibliography}
\appendix
\chapter{Classical certification of relativistic bit commitment}
\label{app:classical-certification}
In 1992 Bennett et al.~proposed how to construct oblivious transfer by combining bit commitment with quantum communication \cite{bennett92}. This was later formalised and proven secure by Yao \cite{yao95}, who refers to it as ``the canonical construction''. At that point the quantum bit commitment protocol by Brassard et al.~\cite{brassard93} was considered secure so the canonical construction uses it as a black box. While quantum bit commitment was later proven impossible, the canonical construction remains interesting because classically we do not know whether bit commitment can be used to implement oblivious transfer.

Attempts to use relativistic commitment schemes in the canonical construction led to some interesting insight. It turns out that the construction implicitly assumes a certain property of the commitment scheme known as \emph{classical certification}, which in the quantum world should not be taken for granted. In fact, Kent showed that classical certification is generally impossible in case of quantum protocols \cite{kent12b}.

In this appendix we show that classical certification is not determined solely by the protocol but depends also on the exact power of the adversary. More specifically, we show that the (classical relativistic) \texttt{sBGKW} scheme is classically-certifiable against classical adversaries but not against quantum adversaries. We discuss why lack of classical certification completely breaks the canonical construction and present an explicit cheating strategy.
\section{Classical certification of the \texttt{sBGKW} scheme}
Classical certification should be thought of as a stronger variant of the binding property and to make this connection clear we call it the \emph{strongly-binding property}. Since the goal here is to flesh out the difference between the two notions and we already have a particular example in mind, we use definitions tailored to that concrete scenario. For completeness, let us restate the protocol first.
\sBGKWnc
\noindent Note that no communication is required between the commitment point and the opening point (so the two are operationally equivalent) and that only \alice{2} is involved in the open phase.
\begin{df}
Let $\sigma_{ A_{1} A_{2} B }$ be a state that \alice{1} and \alice{2} can enforce at the commitment point and let $(\Phi^{0}_{A_{2} \to P}, \Phi^{1}_{A_{2} \to P})$ be opening maps performed by \alice{2}.

A multiagent bit commitment protocol is \textbf{$\varepsilon$-binding} if for all states and all maps we have
\begin{equation*}
p_{0} + p_{1} \leq 1 + \varepsilon,
\end{equation*}
where
\begin{equation*}
p_{\cval} = \tr \big( M_{\textnormal{accept}} \big[ \Phi^{\cval}_{A_{2} \to P} (\sigma_{ A_{2} B }) \big] \big).
\end{equation*}

A multiagent bit commitment protocol is \textbf{$\varepsilon$-strongly-binding} if every state $\sigma_{ A_{1} A_{2} B }$ can be supplemented by a binary random variable $D$
\begin{equation*}
\sigma_{ D A_{1} A_{2} B } = \ketbraq{0}_{D} \otimes \sigma_{ A_{1} A_{2} B }^{0} + \ketbraq{1}_{D} \otimes \sigma_{ A_{1} A_{2} B }^{1}
\end{equation*}
such that the subnormalised states $\sigma_{ A_{1} A_{2} B }^{d}$ satisfy
\begin{equation}
\label{eq:unveil-epsilon}
\tr \big( M_{\textnormal{accept}} \big[ \Phi^{\cval}_{A_{2} \to P} ( \sigma_{ A_{2} B }^{\cval} ) \big] \big) \leq \varepsilon
\end{equation}
for $d \in \{0, 1\}$.
\end{df}
\noindent Intuitively, the random variable $D$ tells us which value \alice{2} cannot unveil. More precisely, inequality~\eqref{eq:unveil-epsilon} states that the probability of $D = d$ \emph{and} Bob accepting the unveiling of $d$ is at most $\varepsilon$. As the random variable $D$ could be given to an external observer, it captures the notion that the commitment has an objective value, which is beyond Alice's influence. It is a simple exercise to show that every protocol which is $\varepsilon$-strongly-binding is also $\varepsilon$-binding.

In Chapter \ref{chap:multiround} we showed that Protocol \hyperref[prot:sBGKW-nc]{4} is $\varepsilon$-binding with
\begin{center}
\begin{tabular}{l c}
$\varepsilon = 2^{ -n }$ & (against classical adversaries),\\
$\varepsilon = \sqrt{2} \cdot 2^{ - n/2 }$ & (against quantum adversaries).
\end{tabular}
\end{center}
While there is clearly a quantitative difference, qualitatively the situation is the same: in both cases the protocol is secure and the security guarantee decays exponentially in $n$.

The situation turns out to be quite different when we consider the stronger definition. We start by showing that Protocol \hyperref[prot:sBGKW-nc]{4} is $\varepsilon$-strongly-binding with $\varepsilon = 2^{-n/2}$ against classical adversaries. However, we then prove that against quantum adversaries, the protocol does not satisfy the strongly-binding definition for any $\varepsilon < \frac{1}{4}$ regardless of how large $n$ is.

To show that the protocol is strongly-binding in the classical case we explicitly construct the random variable $D$. It suffices to provide a construction for deterministic strategies of Alice (any non-deterministic strategy can be written as a convex combination of deterministic strategies). The deterministic strategy of \alice{1} is a function $f : \{0, 1\}^{n} \to \bs{n}$, while for \alice{2} we have $g : \{0, 1\} \to \bs{n}$, where the argument of $g$ is the value $\cval$ that she is trying to unveil. Then, the condition for successfully unveiling $\cval$ becomes
\begin{equation*}
f(b) \oplus g(\cval) = \cval \cdot b.
\end{equation*}
\begin{prop}
Protocol \hyperref[prot:sBGKW-nc]{4} is $\varepsilon$-strongly-binding against classical adversaries with $\varepsilon = 2^{-n/2}$.
\end{prop}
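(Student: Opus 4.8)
The plan is to reduce to deterministic strategies of Alice (any mixed strategy is a convex combination of deterministic ones, and both the random variable $D$ and the bound can be assembled from the deterministic components by averaging) and then solve a purely combinatorial colouring problem. With $f:\bs{n}\to\bs{n}$ the function used by \alice{1} and $g:\{0,1\}\to\bs{n}$ the opening used by \alice{2}, the condition $f(b)\oplus g(\cval)=\cval\cdot b$ for successfully unveiling $\cval$ splits as $f(b)=g(0)$ when $\cval=0$ and $f(b)\oplus b=g(1)$ when $\cval=1$. I would therefore introduce the two effective responses
\begin{equation*}
\phi_{0}(b) := f(b), \qquad \phi_{1}(b) := f(b)\oplus b,
\end{equation*}
so that, with Bob's challenge $b$ drawn uniformly from $\bs{n}$, unveiling $\cval$ with a fixed answer $v$ succeeds exactly on the fibre $\phi_{\cval}^{-1}(v)$. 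Crucially, the supplementary variable $D$ must be built from the commit-phase data $(f,b)$ alone, so the resulting bound will automatically hold for every opening map.

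The key observation is that $\phi_{0}(b)\oplus\phi_{1}(b)=b$, so the pair $(\phi_{0}(b),\phi_{1}(b))$ determines $b$ and the map $b\mapsto(\phi_{0}(b),\phi_{1}(b))$ is \emph{injective}. I would picture the $2^{n}$ challenges as points on a $2^{n}\times 2^{n}$ grid whose rows are indexed by $\phi_{0}$-values and columns by $\phi_{1}$-values; injectivity means each cell contains at most one challenge. Constructing $D$ then amounts to two-colouring these points — colour $b$ \emph{red} when $D(b)=0$ (\alice{2} is declared unable to unveil $0$) and \emph{blue} when $D(b)=1$ — in such a way that every row carries few red points and every column few blue points, since the probability $\Pr[D=\cval \wedge \textnormal{unveiling }\cval\textnormal{ accepted}]$, maximised over the opening answer $g(\cval)$, equals the largest monochromatic count in a single row (for $\cval=0$) or column (for $\cval=1$) divided by $2^{n}$.

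I would then fix the threshold $\tau=2^{n/2}$ and colour $b$ blue precisely when it lies in a heavy row, i.e.\ $\abs{\phi_{0}^{-1}(\phi_{0}(b))}>\tau$, and red otherwise. Every light row contributes at most $\tau$ red points while heavy rows contribute none, so each row has at most $\tau$ red points. For the columns, the number $H$ of heavy rows satisfies $H(\tau+1)\leq 2^{n}=\tau^{2}$, hence $H<\tau$; and since a column meets each row in at most one occupied cell (injectivity), any column contains at most $H<\tau$ blue points. Both monochromatic counts are therefore at most $\tau$, which yields
\begin{equation*}
\Pr[D=\cval \wedge \textnormal{unveiling }\cval\textnormal{ accepted}] \leq \frac{\tau}{2^{n}} = 2^{-n/2}
\end{equation*}
for $\cval\in\{0,1\}$, establishing the strongly-binding bound $\varepsilon=2^{-n/2}$.

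The only real obstacle is controlling both unveiling probabilities \emph{simultaneously}: steering $D$ to forbid unveiling $0$ is trivial in isolation, but the same $D$ must also forbid unveiling $1$. The injectivity of $(\phi_{0},\phi_{1})$ is exactly what reconciles the two requirements, as it caps the overlap of any large $\phi_{0}$-fibre with a single $\phi_{1}$-fibre, and the choice $\tau=2^{n/2}$ is what balances the row bound against the column bound to produce the symmetric constant. I would close by remarking that this injectivity argument is inherently classical — for quantum \alice{2} the two fibres are replaced by possibly non-commuting measurements on her register, so the pair of effective responses can no longer be defined simultaneously — which foreshadows why the strongly-binding property fails against quantum adversaries.
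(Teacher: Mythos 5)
Your proposal is correct and is essentially the paper's own argument in different clothing: your blue/red colouring by heavy versus light $\phi_0$-fibres with threshold $\tau = 2^{n/2}$ is exactly the paper's definition of $D$ via the sets $\cT_0, \cT_1$, and your injectivity-based column bound is the paper's observation that for each heavy value $c$ at most one $b$ satisfies both $f(b)=c$ and $b = c \oplus g(1)$. The grid picture is a nice way to see why the two counting bounds balance, but the underlying proof is the same.
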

\begin{proof}
Under the assumption that \alice{1} and \alice{2} behave deterministically the state of the protocol at the commitment point is completely described by two variables: $b$ and $f(b)$, which can be used to define the random variable $D$. For $c \in \bs{n}$ let
\begin{equation*}
\cS(c) := \{ b \in \bs{n} : f(b) = c \}
\end{equation*}
and
\begin{gather*}
\cT_{0} := \{ c \in \bs{n} : \abs{ \cS(c) } \leq 2^{n/2} \},\\
\cT_{1} := \{ c \in \bs{n} : \abs{ \cS(c) } > 2^{n/2} \}.
\end{gather*}
Note that $\abs{ \cT_{1} } < 2^{n/2}$ since
\begin{equation*}
2^{n} = \sum_{c} \abs{ \cS(c) } \geq \sum_{c \in \cT_{1}} \abs{ \cS(c) } > 2^{n/2} \sum_{c \in \cT_{1}} 1 = 2^{n/2} \abs{ \cT_{1} }.
\end{equation*}
We define the random variable $D$ as a deterministic function of $b$
\begin{equation*}
D :=
\begin{cases}
0 &\nbox{if} f(b) \in \cT_{0},\\
1 &\nbox{if} f(b) \in \cT_{1}.
\end{cases}
\end{equation*}
Now we check that this definition satisfies the conditions. For $D = 0$ we have
\setlength{\jot}{8pt}
\begin{align*}
\Pr[ D = 0 \wedge \textnormal{unveil 0} ] &= \Pr[ D = 0 \wedge f(b) \oplus g(0) = 0^{n} ]\\
&= 2^{-n} \abs{ \{ b \in \bs{n} : f(b) \in \cT_{0} \wedge f(b) = g(0) \} }\\
&= 2^{-n} \abs{ \{ b \in \bs{n} : g(0) \in \cT_{0} \wedge f(b) = g(0) \} }\\
&= 2^{-n} \abs[\big]{ \cS \big( g(0) \big) } \cdot I[ g(0) \in \cT_{0} ] \leq 2^{-n} 2^{n/2} = 2^{-n/2},
\end{align*}
where $I[\cdot]$ denotes the indicator function\footnote{The indicator function is defined to satisfy $I[\textnormal{true statement}] = 1$ and $I[\textnormal{false statement}] = 0$.}. For $D = 1$ we have
\begin{align*}
\Pr[ D = 1 \wedge \textnormal{unveil 1} ] &= \Pr[ D = 1 \wedge f(b) \oplus g(1) = b ]\\
&= 2^{-n} \abs{ \{ b \in \bs{n} : f(b) \in \cT_{1} \wedge f(b) \oplus g(1) = b \} }\\
&= 2^{-n} \sum_{c \in \cT_{1}} \abs{ \{ b \in \bs{n} : f(b) = c \wedge f(b) \oplus g(1) = b \} }\\
&\leq 2^{-n} \sum_{c \in \cT_{1}} \abs{ \{ b \in \bs{n} : c \oplus g(1) = b \} }\\
&\leq 2^{-n} \sum_{c \in \cT_{1}} 1 = 2^{-n} \abs{ \cT_{1} } < 2^{-n} \cdot 2^{n/2} = 2^{-n/2}. \qedhere
\end{align*}
\end{proof}
To prove that this stronger notion of security is not possible in the quantum case, we propose an explicit attack and show that the resulting state cannot be supplemented by an additional random variable satisfying the criteria.
\begin{prop}
Protocol \hyperref[prot:sBGKW-nc]{4} does not satisfy the $\varepsilon$-strongly-binding property against quantum adversaries for any $\varepsilon < \frac{1}{4}$.
\end{prop}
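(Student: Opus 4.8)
The plan is to exhibit an explicit entanglement-based attack and then rule out \emph{every} admissible supplementary variable $D$, showing that the strongly-binding conditions force $\varepsilon \ge \tfrac14$. Throughout I treat the two opening maps as fixed measurements and package their success probabilities into operators $E_{0}, E_{1}$ acting on \alice{2}'s register together with Bob's register, so that $E_{\cval} = (\Phi^{\cval}_{A_{2}\to P})^{\dagger}(M_{\textnormal{accept}})$, $0 \le E_{\cval} \le \mathbb{1}$, and $p_{\cval} = \tr(E_{\cval}\sigma)$. A supplementary variable $D$ is nothing but a decomposition $\sigma = \sigma^{0} + \sigma^{1}$ into positive semidefinite pieces, and the strongly-binding requirement becomes: find such a decomposition with $\tr(E_{0}\sigma^{0}) \le \varepsilon$ and $\tr(E_{1}\sigma^{1}) \le \varepsilon$. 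Hence it suffices to show that for my attack $\tr(E_{0}\sigma^{0}) + \tr(E_{1}\sigma^{1}) \ge \tfrac12$ for every decomposition, since then the larger of the two terms is at least $\tfrac14$.

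The attack is the following. By remote state preparation (the equivalence of Section~\ref{sec:equivalence}), \alice{1} and \alice{2} pre-share a maximally entangled state; on receiving $b$, \alice{1} measures her half in the BB84 bases dictated by $b$ and sends the outcome $x_{1}$ to Bob, which collapses \alice{2}'s register to the BB84 string state $\ket{x_{1}^{b}}$. To unveil $\cval = 0$ she measures so as to reproduce $x_{1}$, and to unveil $\cval = 1$ she measures in the complementary bases so as to reproduce $x_{1}\oplus b$. The essential design goal is \emph{balance}: I choose the two measurements so that, conditioned on each classical record $(b, x_{1})$ held by Bob, each of the two openings succeeds from the prepared pure state with probability at least $\tfrac12$ — this is precisely the BB84 feature that the outcome in the matching basis is certain while in the complementary basis it is a fair coin.

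The reduction then proceeds as follows. Since $E_{0}, E_{1}$ are block-diagonal in the classical record $(b, x_{1})$, we may take $\sigma^{0}, \sigma^{1}$ block-diagonal too; and because \alice{2}'s conditional state is \emph{pure} in each block, any positive operator below it is a scalar multiple of it, so within a block the decomposition is just a scalar splitting. Therefore $D$ collapses to a classical assignment $t_{b,x_{1}} \in [0,1]$, and writing $c^{\cval}_{b,x_{1}}$ for the conditional opening-$\cval$ success probabilities one obtains
\begin{equation*}
\tr(E_{0}\sigma^{0}) + \tr(E_{1}\sigma^{1}) = \amsbb{E}_{b, x_{1}}\big[t_{b,x_{1}}\, c^{0}_{b,x_{1}} + (1 - t_{b,x_{1}})\, c^{1}_{b,x_{1}}\big] \ge \amsbb{E}_{b, x_{1}} \min\big(c^{0}_{b,x_{1}}, c^{1}_{b,x_{1}}\big).
\end{equation*}
By the balance property each summand is at least $\tfrac12$, hence the whole expression is at least $\tfrac12$, giving $\max\big(\tr(E_{0}\sigma^{0}), \tr(E_{1}\sigma^{1})\big) \ge \tfrac14$ and therefore $\varepsilon \ge \tfrac14$, independently of $n$.

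The main obstacle is the balance step, i.e.~securing $\min(c^{0}_{b,x_{1}}, c^{1}_{b,x_{1}}) \ge \tfrac12$ uniformly in $n$; this is exactly where quantum complementarity is indispensable and where the situation departs from the classical case. Classically the record $(b, x_{1})$ already fixes which value is openable — this is what the explicit construction of $D$ in the previous proposition exploits — so one of the two conditional successes is essentially $0$ and the $\min$ collapses, recovering a vanishing bound. Quantumly the incompatibility of the two opening measurements on a single pure state prevents any record from committing \alice{2} to a definite value, keeping both conditional successes bounded below. I would therefore spend most of the effort pinning down the attack so that this per-record lower bound holds, and verifying that the block-diagonal/purity structure genuinely restricts $D$ to classical assignments; once these two points are in place the remaining inequalities are immediate.
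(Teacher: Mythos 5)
There is a genuine gap, and it sits exactly where you anticipated: the ``balance'' step fails for the attack you propose. In your attack \alice{2} is left holding the product of BB84 states $\ket{x_{1}^{b}}$, and Bob's verification demands that she reproduce the \emph{entire} $n$-bit string $x_{1}$ (for $\cval=0$) or $x_{1}\oplus b$ (for $\cval=1$) exactly. The single-qubit intuition ``certain in the matching basis, a fair coin in the complementary one'' does not survive this conjunction: since \alice{2} knows neither $b$ nor $x_{1}$, any measurement she performs succeeds at each position $k$ with $b_{k}$ in the ``wrong'' basis only with probability $\tfrac12$ (at best $\tfrac12(1+\tfrac{1}{\sqrt{2}})$ by optimal per-qubit discrimination), and these factors multiply. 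For a typical $b$ of Hamming weight $\approx n/2$ this makes $c^{0}_{b,x_{1}}$ (and likewise $c^{1}$) exponentially small, so $\amsbb{E}_{b,x_{1}}\min(c^{0},c^{1})\to 0$ rather than staying above $\tfrac12$, and your final inequality gives nothing. Indeed, your own preceding proposition shows the scheme \emph{is} strongly binding against any attack in which \alice{1} fully measures and the record becomes classical, which is essentially what your attack does.

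The missing idea is to keep the choice between the two complete valid answers \emph{coherent as whole strings} rather than qubit-by-qubit. The paper's attack adjoins a single control qubit $C$ in the state $\ket{+}$ and, upon receiving $b$, applies the controlled shift $\ket{x}_{A_{1}}\ket{c}_{C}\mapsto\ket{x\oplus c\cdot b}_{A_{1}}\ket{c}_{C}$ to \alice{1}'s half of the maximally entangled state before she measures $A_{1}$ computationally. Conditioned on Bob's record $(x,b)$, \alice{2} and $C$ are then left in the pure state $\tfrac{1}{\sqrt2}\big(\ket{x}_{A_{2}}\ket{0}_{C}+\ket{x\oplus b}_{A_{2}}\ket{1}_{C}\big)$, so a single computational-basis measurement of $A_{2}$ yields one of the two full valid strings, each with probability $\tfrac12$ (probability $1$ when $b=0^{n}$) --- this is what secures $\min(c^{0},c^{1})\ge\tfrac12$ uniformly in $n$. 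Your surrounding reduction (purity of the conditional blocks forcing $D$ to be a scalar splitting $t_{b,x_{1}}$ per block, and the averaging argument $\amsbb{E}[t\,c^{0}+(1-t)\,c^{1}]\ge\amsbb{E}\min(c^{0},c^{1})$ yielding $\max\ge\tfrac14$) is sound and is essentially the paper's; only the state preparation needs to be replaced.
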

\begin{proof}
Suppose that at the beginning of the protocol \alice{1} and \alice{2} share the maximally entangled state of $2n$ qubits
\begin{equation*}
\ket{\Psi_{2^{n}}}_{A_{1} A_{2}} = 2^{-n/2} \sum_{x} \ket{x}_{A_{1}} \ket{x}_{A_{2}}.
\end{equation*}
Let $C$ be an auxiliary control register held by \alice{1}, initially prepared in the state $\ket{+}$
. When \alice{1} receives $b$, she applies the following unitary $U^{b}_{A_{1} C}$
\begin{equation}
\label{eq:unitary-ub}
\begin{aligned}
U^{b}_{A_{1} C} \ket{x}_{A_{1}} \ket{0}_{C} &= \ket{x}_{A_{1}} \ket{0}_{C},\\
U^{b}_{A_{1} C} \ket{x}_{A_{1}} \ket{1}_{C} &= \ket{x \oplus b}_{A_{1}} \ket{1}_{C}.
\end{aligned}
\end{equation}
Then, the tripartite state $\ket{\psi}_{ A_{1} A_{2} C }$ becomes
\begin{align*}
\ket{\psi}_{ A_{1} A_{2} C } &= ( U^{b}_{A_{1} C} \otimes \mathbb{1}_{A_{2}} ) \ket{\Psi_{2^{n}}}_{ A_{1} A_{2} } \ket{+}_{C}\\
&= 2^{-(n + 1)/2} \Big[ \sum_{x} \ket{x}_{A_{1}} \ket{x}_{A_{2}} \ket{0}_{C} + \sum_{x} \ket{x \oplus b}_{A_{1}} \ket{x}_{A_{2}} \ket{1}_{C} \Big]\\
&= 2^{-(n + 1)/2} \Big[ \sum_{x} \ket{x}_{A_{1}} \ket{x}_{A_{2}} \ket{0}_{C} + \sum_{x} \ket{x}_{A_{1}} \ket{x \oplus b}_{A_{2}} \ket{1}_{C} \Big]\\
&= 2^{-(n + 1)/2} \sum_{x} \ket{x}_{A_{1}} \Big[ \ket{x}_{A_{2}} \ket{0}_{C} + \ket{x \oplus b}_{A_{2}} \ket{1}_{C} \Big].
\end{align*}
Now, \alice{1} measures $A_{1}$ in the computational basis to obtain a classical random variable $X_{1}$ and it is easy to verify that the state $\sigma_{X_{1} A_{2} C}$ is
\begin{equation*}
\sigma_{X_{1} A_{2} C} = 2^{-n} \sum_{x} \ketbraq{x}_{X_{1}} \otimes \ketbraq{\alpha_{x, b}}_{A_{2} C},
\end{equation*}
where
\begin{equation*}
\ket{\alpha_{x, b}}_{A_{2} C} = \frac{1}{\sqrt{2}} \big( \ket{x}_{A_{2}} \ket{0}_{C} + \ket{x \oplus b}_{A_{2}} \ket{1}_{C} \big).
\end{equation*}
Recall that $b$ is drawn uniformly at random by Bob and we should explicitly include it in the state. The state $\sigma_{X_{1} B A_{2} C}$ represents a complete description of the state of the protocol at the commitment point
\begin{equation*}
\sigma_{X_{1} B A_{2} C} = 2^{-2n} \sum_{x, b} \ketbraq{x}_{X_{1}} \otimes \ketbraq{b}_{B} \otimes \ketbraq{\alpha_{x, b}}_{A_{2} C}.
\end{equation*}
Our goal now is to show that regardless of how we define the auxiliary random variable $D$, it will not meet the desired criteria. The most general form of the state with the additional random variable is
\begin{equation*}
\sigma_{D X_{1} B A_{2} C} = \sum_{d, x, b} \ketbraq{d}_{D} \otimes \ketbraq{x}_{X_{1}} \otimes \ketbraq{b}_{B} \otimes \sigma^{dxb}_{A_{2} C},
\end{equation*}
where $\sigma^{dxb}_{A_{2} C}$ are subnormalised quantum states. However, since tracing out $D$ must give us back $\sigma_{X_{1} B A_{2} C}$ and the states on $A_{2}C$ (conditional on particular values of $X_{1}$ and $B$) are pure, we conclude that for all $d, x, b$
\begin{equation*}
\sigma^{dxb}_{A_{2} C} \propto \ketbraq{\alpha_{x, b}}_{A_{2} C}.
\end{equation*}
Therefore, without loss of generality we can write
\begin{equation*}
\sigma_{D X_{1} B A_{2} C} = \sum_{d, x, b}  p_{dxb} \ketbraq{d}_{D} \otimes \ketbraq{x}_{X_{1}} \otimes \ketbraq{b}_{B} \otimes \ketbraq{\alpha_{x, b}}_{A_{2} C},
\end{equation*}
where $p_{dxb} = \Pr[ D = d, X_{1} = x, B = b ]$ is a probability distribution over $D, X_{1}$ and $B$. Now, we need to evaluate the probability of \alice{2} unveiling the commitment successfully. Since \alice{1} does not play a role in the open phase, we trace out subsystem $C$. The unveiling strategy of \alice{2} is simply to measure her subsystem in the computational basis (regardless of the value she is trying to unveil). If we represent her measurement outcome by $X_{2}$ we obtain the following (fully classical) state
\begin{equation*}
\sigma_{D X_{1} B X_{2}} = \sum_{d, x, b}  p_{dxb} \ketbraq{d}_{D} \otimes \ketbraq{x}_{X_{1}} \otimes \ketbraq{b}_{B} \otimes \frac{1}{2} \big( \ketbraq{x}_{X_{2}} + \ketbraq{x \oplus b}_{X_{2}} \big).
\end{equation*}
This allows us to evaluate the
\begin{align*}
\Pr[ D = d \wedge \textnormal{unveil d} ] &= \Pr[ D = d \wedge X_{1} \oplus X_{2} = d \cdot B ]\\
&= \sum_{xb} \frac{p_{dxb}}{2} \big( 1 + I[b = 0] \big)\\
&\geq \frac{1}{2} \sum_{xb} p_{dxb} = \frac{1}{2} \Pr[D = d].
\end{align*}
Since $\Pr[D = 0] + \Pr[D = 1] = 1$, we must have $\Pr[ D = d \wedge \textnormal{unveil d} ] \geq \frac{1}{4}$ for at least one value of $d$. Hence, the security requirement cannot hold for any $\varepsilon < \frac{1}{4}$.
\end{proof}
It is clear that register $C$ determines the commitment value and if $C$ was a classical register the protocol would satisfy the strongly-binding definition. However, since $C$ is a quantum register kept in a coherent superposition, it does not have a well-defined value. We cannot think of the value of the commitment as a classical random variable and no meaningful definition of $D$ is possible. In the next section we show that this feature makes the scheme unsuitable for the canonical construction.
\section{Consequences for the canonical construction}
The canonical construction requires Bob to generate random BB84 states and send them to Alice, who measures every incoming state in a random basis (computational or Hadamard). After all the states have been measured Bob announces the basis he used for every state. If Alice has followed the protocol she has learnt (on average) half of the (logical) bits. Note that Bob does not know which bits she has learnt.

An obvious problem with this construction comes from the fact that Alice can store the quantum states and only measure them once the basis information is available. In this way she will learn the entire string, which renders the scheme completely insecure.

To defeat this cheating strategy Alice is required to prove that she has really measured all the systems by making a certain commitment. More specifically, for every received BB84 state she is required to commit two bits: the basis used and the outcome observed.

Later Bob asks Alice to open commitments corresponding to a random subset of the rounds. He expects that whenever she claims to have measured in the correct basis, she should have obtained the correct outcome. Classical intuition tells us that if Alice succeeds on a randomly chosen subset, then the other commitments (with high probability) must also correspond to honest measurements. In that case we conclude that she has followed the protocol, the BB84 states have been measured so the attack described above is no longer a threat.

The classical intuition implicitly assumes that the commitments contain specific values, which are well-defined regardless of whether the commitment is ultimately opened or not. This is exactly the notion of classical certification, which is generally not satisfied in the quantum setting. Here, we show that if the $\texttt{sBGKW}$ scheme is used in the construction, there exists a quantum cheating strategy for Alice with the following two properties.
\begin{itemize}
\item If Alice is challenged to open the commitment, she can produce statistics indistinguishable from the honest execution of the protocol.
\item If the commitment is not opened and \alice{1} and \alice{2} are allowed to recombine their systems, they can recover the original BB84 states.
\end{itemize}
Let $\ket{\phi} = \alpha_{0} \ket{0} + \alpha_{1} \ket{1}$ be the state that \alice{1} has received from Bob and suppose she stores it in register $C$. \alice{1} picks the measurement basis $\theta \in \{0, 1\}$ uniformly at random and makes an honest commitment to it. If $\theta = 0$ she leaves $\ket{\phi}$ unchanged, if $\theta = 1$ she applies a Hadamard transform to it. For simplicity in the following argument we assume $\theta = 0$ (the case of $\theta = 1$ is analogous). For the second commitment (to the outcome of the measurement) \alice{1} follows the dishonest procedure outlined in the previous section and it is easy to verify that at the commitment point the state is
\begin{equation*}
\sigma_{X_{1} A_{2} C} = 2^{-n} \sum_{x} \ketbraq{x}_{X_{1}} \otimes \ketbraq{\alpha_{x, b}'}_{A_{2} C},
\end{equation*}
where
\begin{equation}
\label{eq:state-alpha-prime}
\ket{\alpha_{x, b}'}_{A_{2} C} = \alpha_{0} \ket{x}_{A_{2}} \ket{0}_{C} + \alpha_{1} \ket{x \oplus b}_{A_{2}} \ket{1}_{C}.
\end{equation}
If \alice{2} is challenged to unveil this round, she honestly unveils the basis information ($\theta = 0$), while for the second commitment she simply measures her system in the computational basis and obtains the correct string to unveil $d$ with probability at least $\abs{\alpha_{d}}^{2}$. It is easy to see that identical statistics would be obtained if \alice{1} made a measurement in the computational basis at the beginning and honestly committed to the classical outcome. Picking $\theta \in \{0, 1\}$ uniformly at random leads to statistics which is indistinguishable from honestly measuring in a random basis.

On the other hand, if the commitment is not opened we can still recover the original state. Conditional on $X_{1} = x$ and $B = b$ the state on $A_{2} C$ is $\ket{\alpha_{x, b}'}_{A_{2} C}$ given by Eq.~\eqref{eq:state-alpha-prime}. Note that $C$ is with \alice{1} while $A_{2}$ is with \alice{2} but if we bring the systems together we can apply a unitary $U^{b}_{A_{2} C}$ (as in Eq.~\eqref{eq:unitary-ub} except that it acts on $A_{2}$ instead of $A_{1}$) to obtain
\begin{equation*}
U^{b}_{ A_{2} C } \ket{\alpha_{x, b}'}_{A_{2} C} = \alpha_{0} \ket{x}_{A_{2}} \ket{0}_{C} + \alpha_{1} \ket{x}_{A_{2}} \ket{1}_{C} = \ket{x}_{A_{2}} \otimes \ket{\phi}_{C}.
\end{equation*}
Therefore, we have recovered the original state.

It is interesting to note that in this procedure the state $\ket{\phi}$ received by \alice{1} becomes ``delocalised'' between \alice{1} and \alice{2} at the commitment point. In other words, \alice{1} cannot recover it by acting on her own subsystem alone. This is not surprising as \alice{1} through the procedure has in some sense allowed \alice{2} to remotely perform a measurement on $\ket{\phi}$ so in order to reconstruct it we must combine the two systems together.

Since in the relativistic setting all communication constraints are temporary we cannot prohibit \alice{1} and \alice{2} from recovering the original state at some later point and once the basis information is available, they will perform the right measurement to obtain the correct outcome. Therefore, no uncertainty can be guaranteed on the rounds that have not been opened, which renders the construction completely insecure.
\printindex
\end{document}